\def\llncs{0}
\def\fullpage{1}
\def\anonymous{0}
\def\notxfont{0}
\def\submission{0}
\def\cameraready{0}
\def\anonymous{1}
\def\llncs{1}
\def\submission{1}
\def\llncs{1}
\def\anonymous{0}
\def\llncs{0}
\definecolor{darkblue}{rgb}{0,0,0.6}
\definecolor{darkgreen}{rgb}{0,0.5,0}
\definecolor{maroon}{rgb}{0.5,0.1,0.1}
\definecolor{dpurple}{rgb}{0.2,0,0.65}
\newtheoremstyle{thicktheorem}%
{\topsep}
{\topsep}
{\itshape}{}%
{\bfseries}%
{.}
{ }%
{\thmname{#1}\thmnumber{ #2}%
		\thmnote{ (#3)}%
}
\newtheoremstyle{remark}
{\topsep}
{\topsep}
	{}
	{}
	{}
	{.}
	{ }
	{\textit{\thmname{#1}}\thmnumber{ #2}
			\thmnote{ (#3)}%
	}
	\theoremstyle{thicktheorem}
	\newtheorem{theorem}{Theorem}[section]
	\newtheorem{lemma}[theorem]{Lemma}
	\newtheorem{corollary}[theorem]{Corollary}
	\newtheorem{definition}[theorem]{Definition}
	\newtheorem{game}[theorem]{Game}
	\theoremstyle{remark}
	\crefname{theorem}{Theorem}{Theorems}
	\crefname{assumption}{Assumption}{Assumptions}
	\crefname{construction}{Construction}{Constructions}
	\crefname{corollary}{Corollary}{Corollaries}
	\crefname{conjecture}{Conjecture}{Conjectures}
	\crefname{definition}{Definition}{Definitions}
	\crefname{exmaple}{Example}{Examples}
	\crefname{experiment}{Experiment}{Experiments}
	\crefname{counterexample}{Counterexample}{Counterexamples}
	\crefname{lemma}{Lemma}{Lemmata}
	\crefname{observation}{Observation}{Observations}
	\crefname{proposition}{Proposition}{Propositions}
	\crefname{remark}{Remark}{Remarks}
	\crefname{claim}{Claim}{Claims}
	\crefname{fact}{Fact}{Facts}
	\crefname{note}{Note}{Notes}
 \crefname{appendix}{App.}{Appendices}
 \crefname{section}{Sec.}{Sections}
\renewcommand*{\backref}[1]{}
	\renewcommand*{\backref}[1]{(Cited on page~#1.)}
\newcommand{\mor}[1]{$\ll$\textsf{\color{blue} Tomoyuki: { #1}}$\gg$}
\newcommand{\minki}[1]{$\ll$\textsf{\color{red} Minki: { #1}}$\gg$}
\newcommand{\state}{\mathsf{state}}
\newcommand{\Samp}{\algo{Samp}}
\newcommand{\Ver}{\algo{Ver}}
\newcommand{\ans}{\mathsf{ans}}
\newcommand{\puzz}{\mathsf{puzz}}
\newcommand{\cA}{\mathcal{A}}
\newcommand{\cB}{\mathcal{B}}
\newcommand{\cC}{\mathcal{C}}
\newcommand{\cD}{\mathcal{D}}
\newcommand{\cE}{\mathcal{E}}
\newcommand{\cP}{\mathcal{P}}
\newcommand{\cV}{\mathcal{V}}
\def\makeuppercase#1{
\expandafter\newcommand\csname tl#1\endcsname{\widetilde{#1}}
}
\def\makelowercase#1{
\expandafter\newcommand\csname tl#1\endcsname{\widetilde{#1}}
}
\newcommand{\secp}{\lambda}
\newcommand*{\pk}{\keys{pk}}
\newcommand*{\keys}[1]{\mathsf{#1}}
\newcommand*{\algo}[1]{\ensuremath{\mathsf{#1}}}
\newenvironment{boxfig}[2]{\begin{figure}[#1]\fbox{\begin{minipage}{0.97\linewidth}
                        \vspace{0.2em}
                        \makebox[0.025\linewidth]{}
                        \begin{minipage}{0.95\linewidth}
            {{
                        #2 }}
                        \end{minipage}
                        \vspace{0.2em}
                        \end{minipage}}}{\end{figure}}
\newcommand{\bit}{\{0,1\}}
\newcommand{\negl}{{\mathsf{negl}}}
\newcommand{\poly}{{\mathrm{poly}}}
\newcommand{\TD}{\mathsf{TD}}
\let\oldvec\vec
\let\vec\oldvec
\renewcommand*\l@author[2]{}
\renewcommand*\l@title[2]{}
\theoremstyle{remark}
\title{
\textbf{
Proofs of Quantum Memory
} 
}
\begin{document}

\ifnum\anonymous=1
\author{\empty}
\else
%
%
\ifnum\llncs=1
\index{Yasuaki, Okinaka}
\author{
	Yasuaki Okinaka\inst{1} 
}
\institute{
	Yukawa Institute for Theoretical Physics, Kyoto University, Japan  \and NTT Corporation, Tokyo, Japan
}
\else
%
%

\author[1]{Minki Hhan}
\author[2]{Tomoyuki Morimae}
\author[2]{Yasuaki Okinaka}
\author[3,4,2]{Takashi Yamakawa}
\affil[1]{{\small 
Department of Electrical and Computer Engineering,
The University of Texas at Austin}\authorcr{\small minki.hhan@austin.utexas.edu}}
\affil[2]{{\small Yukawa Institute for Theoretical Physics, Kyoto University, Kyoto, Japan}\authorcr{\small \{tomoyuki.morimae,yasuaki.okinaka\}@yukawa.kyoto-u.ac.jp}}
\affil[3]{{\small NTT Social Informatics Laboratories, Tokyo, Japan}\authorcr{\small takashi.yamakawa@ntt.com}}
\affil[4]{{\small NTT Research Center for Theoretical Quantum Information, Atsugi, Japan}}

\renewcommand\Authands{, }
\fi 
\fi

\ifnum\llncs=1
\date{}
\else
\date{}
\fi

\maketitle


\pagenumbering{gobble} 

\begin{abstract}
With the rapid advances in quantum computer architectures and the emerging prospect of large-scale quantum memory, it is becoming essential to 
classically verify that remote devices genuinely allocate the promised quantum memory with specified number of qubits and coherence time. In this paper, we introduce a new concept, \emph{proofs of quantum memory (PoQM)}. 
A PoQM is an interactive protocol between a classical probabilistic polynomial-time (PPT) verifier and a quantum polynomial-time (QPT) prover over a classical channel
where the verifier can verify that the prover has possessed a quantum memory with a certain number of qubits during a specified period of time.
PoQM generalize the notion of
proofs of quantumness (PoQ) [Brakerski, Christiano, Mahadev, Vazirani, and Vidick, JACM 2021].
Our main contributions are a formal definition of PoQM and its constructions
based on hardness of LWE.
Specifically,
we give two constructions of PoQM. The first is of a four-round and has negligible soundness error under
subexponential-hardness of LWE. The second is of a polynomial-round and has inverse-polynomial soundness error under polynomial-hardness of LWE. 
As a lowerbound of PoQM, we also show that PoQM imply one-way puzzles.
Moreover, a certain restricted version of PoQM implies quantum computation classical communication (QCCC) key exchange.
\end{abstract}

\ifnum\cameraready=1
\else
\ifnum\llncs=1
\else
\newpage
  \setcounter{tocdepth}{2}      
  \setcounter{secnumdepth}{2}   
  \tableofcontents
  \pagenumbering{arabic}
  \setcounter{page}{0}          
  \thispagestyle{empty}
  \clearpage
\fi
\fi

\section{Introduction}
Imagine a quantum computing startup claiming that it has built a quantum processor equipped with a 100-million-qubit quantum memory with
a coherence time of 100 hours.
How could a classical investor verify such a claim?

Proofs of quantumness (PoQ)~\cite{JACM:BCMVV21} are insufficient for this purpose.
A proof of quantumness is an interactive protocol between a classical probabilistic polynomial-time (PPT) verifier and a quantum polynomial-time (QPT) prover over a classical channel.
Completeness is that if the prover behaves honestly, the verifier accepts with high probability, and soundness is that the verifier rejects with high probability if
the prover is PPT. 
Using PoQ, a classical investor could confirm that the quantum startup is doing something at least non-classical, 
but it cannot verify that the startup can manipulate a 100-million-qubit quantum memory.
Moreover, the classical investor cannot confirm that the quantum startup can keep the quantum coherence for 100 hours.

With the rapid advances in quantum computer architectures and the emerging prospect of large-scale quantum memory, it is becoming essential to classically
verify that remote devices genuinely allocate the promised quantum memory with a specified number of qubits and coherence time. 
This motivates the following questions.
\begin{enumerate}
\item {\bf Verification of the number of qubits}:
\emph{Can a classical verifier verify that a remote prover has possessed a quantum memory with a specified number of qubits?}    
\item {\bf Verification of the coherence time}:
\emph{Can a classical verifier verify that a remote prover has kept a quantum coherence for a specified period of time?}    
\end{enumerate}

\subsection{Our Results}
In this paper, we address both of these questions simultaneously by introducing a new concept, which we call \emph{proofs of quantum memory} (PoQM).
A PoQM is an interactive protocol between a PPT verifier and a QPT prover over a classical channel 
where the verifier can verify that the prover has kept a specified number of qubits during a specified time period.

Our contributions are summarized as follows:
\begin{enumerate}
    \item 
We give a formal definition of PoQM.
\item
We construct PoQM from the hardness of LWE.
\item
We show lowerbounds of PoQM: PoQM imply one-way puzzles (OWPuzzs),
which is a natural quantum analogue of one-way functions (OWFs)~\cite{STOC:KhuTom24}.
Moreover, a certain restricted version of PoQM implies quantum key-exchange over a classical channel.
\end{enumerate}

In the following, we provide more details.

\paragraph{Formal definition of PoQM.}
We formally define PoQM as follows.\footnote{Our definition is based on (classical) proofs of space~\cite{C:DFKP15} and 
quantum proofs of space~\cite{Maxfield}.}(See \cref{fig:PoQM}.) 
Let $\alpha,\beta:\mathbb{N}\to[0,1]$ be any functions. Let $m_1,m_2:\mathbb{N}\to\mathbb{N}$ be any (polynomially bounded\footnote{This means $m_1,m_2=O(\secp^c)$ for some constant $c>0$. This condition is occasionally omitted if it is clear from the context.}) functions.
An $(\alpha,\beta,m_1,m_2)$-PoQM $(\cV_1,\cP_1,\cV_2,\cP_2)$ is a set of interactive algorithms over a classical channel.
$\cV_1$ and $\cV_2$ are PPT, and $\cP_1$ and $\cP_2$ are QPT.
The interaction consists of two phases, the initialization phase and the execution phase. 
In the initialization phase, both $\cV_1$ and $\cP_1$ take the security parameter $1^\secp$ as input,
and interact over a classical channel. $\cV_1$ outputs a classical bit string $v$,
and $\cP_1$ outputs a classical bit string $\state$ and an $m_1$-qubit quantum state $\sigma_\state$.\footnote{During the operation of the initialization phase, the prover may need more than $m_1$ qubits.}
In the execution phase,
$\cV_2$ takes $v$ as input, and $\cP_2$ takes $(\state,\sigma_\state)$ as input.
They interact over a classical channel, and $\cV_2$ outputs $\top$ or $\bot$.
$\alpha$-completeness requires that $\cV_2$ outputs $\top$ with probability at least $\alpha$, 
that is, the honest prover with $m_1$-qubit memory is accepted with high probability.
On the other hand, $(\beta,m_2)$-soundness is defined as follows. 
Let $\cP_1^*$ be a QPT algorithm that interacts with $\cV_1$, and
outputs a classical bit string $s$ and an $m_2$-qubit state $\rho$. 
Let $\cP_2^*$ be a QPT algorithm that takes $(s,\rho)$ as input, and interacts with $\cV_2$. 
Then for any such $(\cP_1^*,\cP_2^*)$, $\cV_2$ outputs $\top$ with probability at most $\beta$.
This intuitively means that any malicious prover that can preserve at most $m_2$-qubit quantum memory
during the interval between the initialization phase and the execution phase
cannot be accepted by the verifier. In other words, if the verifier accepts,
the verifier can verify that the prover has possessed at least $(m_2+1)$-qubit quantum memory
during the interval between the initialization phase and the execution phase.
Note that we do not make any upperbound for the size of the classical bit string $s$. 


\usetikzlibrary{positioning} 
\usetikzlibrary{calc} 
\usetikzlibrary {quotes}
\usetikzlibrary{intersections,arrows.meta}
\tikzset{>=latex} 

\tikzstyle{mysmallarrow}=[->,black,line width=1.6]
\tikzstyle{mydbarrow}=[<->,black,line width=1.6]
\tikzstyle{newarrow}=[<->,red,line width=1.6]
\tikzstyle{newsinglearrow}=[->,red,line width=1.6]
\tikzstyle{carrow}=[<->,red,line width=1.6]
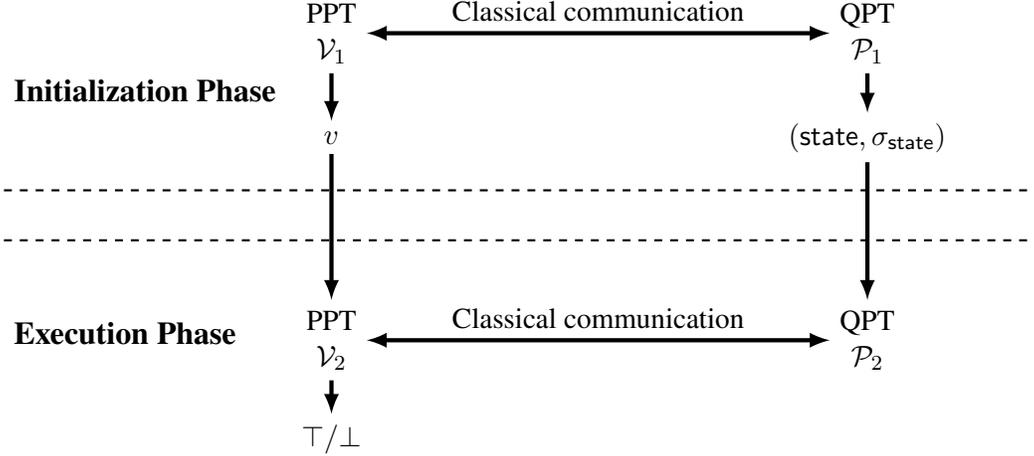
\begin{figure}
\begin{center}
\begin{tikzpicture}[scale=0.95, every edge quotes/.style = {font=\footnotesize, fill=white}]
  \def\W{7.5}     
  \def\Vgap{0.9}  
  \def\RowY{0.0}  
  \def\RowYB{-4.3}

  \node[font=\large\bfseries, anchor=south west] at (-\W/0.9, \RowYB + 3.2) {Initialization Phase};
\node[font=\large\bfseries, anchor=south west] at (-\W/0.9, \RowYB - 0.2) {Execution Phase};

  \draw[dashed,thick, name path=endini] (-\W/0.9, \RowYB+2.1) -- (\W/1.2, \RowYB+2.1);
  \draw[dashed,thick, name path=startexe] (-\W/0.9, \RowYB+1.4) -- (\W/1.2, \RowYB+1.4);

  \node[align=center] (V1) at (-\W/2,\RowY) {PPT \\ $\cV_1$};
  \node[align=center] (P1) at (\W/2,\RowY) {QPT \\ $\cP_1$};

  \draw[mydbarrow] (V1) -- node[above]{Classical communication} (P1);

  \node (v)   at ($(V1.south)+(0,-\Vgap)$) {$v$};
  \node (stq) at ($(P1.south)+(0,-\Vgap)$) {$(\state,\sigma_{\state})$};
  \draw[mysmallarrow] (V1.south) -- (v);      
  \draw[mysmallarrow] (P1.south) -- (stq);    

  \node[align=center] (V2) at (-\W/2,\RowYB) {PPT \\ $\cV_2$};
  \node[align=center] (P2) at (\W/2,\RowYB) {QPT \\ $\cP_2$};

  \draw[mydbarrow] (V2) -- node[above]{Classical communication} (P2);

  \draw[mysmallarrow] (v)   -- (V2.north);   
  \draw[mysmallarrow] (stq) -- (P2.north);   

  \node (decision) at ($(V2.south)+(0,-\Vgap/1.1)$) {$\top/\bot$};
  \draw[mysmallarrow] (V2.south) -- (decision);

\end{tikzpicture}
\end{center}
\caption{A PoQM consists of two phases: the initialization phase and the execution phase. 
At the end of the initialization phase, $\cV_1$ outputs a classical bit string $v$, 
and $\cP_1$ outputs a classical bit string $\state$ and an $m_1$-qubit quantum state $\sigma_\state$. 
At the beginning of the execution phase, $\cV_2$ takes $v$ as input, and 
$\cP_2$ takes $(\state,\sigma_\state)$ as input. 
At the end of the execution phase, $\cV_2$ outputs $\top$ or $\bot$.}
\label{fig:PoQM}
\end{figure}

\paragraph{Relation to PoQ.}
We observe that PoQM generalize the notion of PoQ:
\begin{theorem}
Let $\alpha,\beta:\mathbb{N}\to[0,1]$ be any functions. Let $m_1:\mathbb{N}\to\mathbb{N}$ be any function. 
    If $(\alpha,\beta,m_1,0)$-PoQM exist, then $(\alpha,\beta)$-PoQ exist.
\end{theorem}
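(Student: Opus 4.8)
The plan is to build a PoQ by sequentially composing the two phases of the PoQM into a single interactive protocol and collapsing $\cP_1,\cP_2$ into one prover. Concretely, given an $(\alpha,\beta,m_1,0)$-PoQM $(\cV_1,\cP_1,\cV_2,\cP_2)$, I would define a PoQ $(\cV,\cP)$ as follows: the verifier $\cV(1^\secp)$ runs $\cV_1(1^\secp)$ interacting with the prover, and when $\cV_1$ halts with output $v$ it internally invokes $\cV_2(v)$, continues the interaction, and outputs whatever bit $\cV_2$ outputs; the honest prover $\cP(1^\secp)$ runs $\cP_1(1^\secp)$, and when $\cP_1$ halts with output $(\state,\sigma_\state)$ it invokes $\cP_2(\state,\sigma_\state)$ and continues. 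Since $\cP_1$ and $\cP_2$ are QPT, $\cP$ is QPT, and since $\cV_1,\cV_2$ are PPT, $\cV$ is PPT, as required for PoQ.

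Completeness is then immediate: the joint execution of $(\cV,\cP)$ is message-for-message identical to the two-phase PoQM execution with the honest prover, so $\cV$ accepts with probability at least $\alpha$. (The honest $\cP$ may use more than $m_1$ qubits while running $\cP_1$ and keeps $m_1$ qubits across the now-internal phase boundary, but this is permitted for a QPT PoQ prover.) For soundness, I would take an arbitrary PPT PoQ prover $\cP^*$ accepted by $\cV$ with probability $p$, and split it at the phase boundary to obtain a PoQM cheating pair $(\cP_1^*,\cP_2^*)$: $\cP_1^*$ runs $\cP^*$ for exactly the rounds in which $\cV_1$ is active, then outputs $\cP^*$'s entire current internal configuration as the classical string $s$ together with the empty ($0$-qubit) quantum register; $\cP_2^*$ takes $(s,\text{empty})$, reconstitutes $\cP^*$ from $s$, and runs it through the remaining rounds. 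Because $\cP^*$ is PPT, its configuration at the boundary is a polynomial-length classical bit string, so $s$ is well-defined (and the PoQM definition imposes no bound on $|s|$), and both $\cP_1^*,\cP_2^*$ are PPT, hence QPT, with $\cP_1^*$ producing a $0$-qubit state as the $m_2=0$ soundness game demands. The interaction of $(\cP_1^*,\cP_2^*)$ with $(\cV_1,\cV_2)$ reproduces that of $\cP^*$ with $\cV$ exactly, so $\cV_2$ accepts with probability $p$; by $(\beta,0)$-soundness of the PoQM, $p\le\beta$, giving $\beta$-soundness of $(\cV,\cP)$.

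I do not expect a genuine technical obstacle here — the statement is essentially a definitional unwinding. The only points needing care are (i) lining up the single PoQ transcript with the two-phase PoQM transcript at the instant $\cV_1$ hands control to $\cV_2$, and (ii) observing that handing over a PPT prover's internal state across that boundary requires no quantum memory, so the reduction legitimately lands in the $m_2=0$ soundness experiment. If the ambient PoQ definition being referenced has any further formal requirements (e.g.\ on round complexity or on the precise form of $\cV$'s output), I would check that the sequential composition inherits them, but none of these should cause difficulty.
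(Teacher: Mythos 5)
Your proposal is correct and matches the paper's proof essentially verbatim: the same sequential composition of the two phases into a single PoQ, the same immediate completeness, and the same soundness reduction that splits a PPT cheating prover at the phase boundary and passes its (classical, polynomial-size) internal state as the string $s$ with an empty quantum register, landing in the $m_2=0$ soundness game. No gaps.
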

Here, an $(\alpha,\beta)$-PoQ is a PoQ with completeness $\alpha$ and soundness $\beta$.
Because an $(\alpha,\beta,m_1,m_2)$-PoQM is trivially an $(\alpha,\beta,m_1,m_2-1)$-PoQM for any $m_2\ge1$, we also obtain the following corollary.
\begin{corollary}
Let $\alpha,\beta:\mathbb{N}\to[0,1]$ be any functions. 
Let $m_1,m_2:\mathbb{N}\to\mathbb{N}$ be any functions.
    If $(\alpha,\beta,m_1,m_2)$-PoQM exist, then $(\alpha,\beta)$-PoQ exist.
\end{corollary}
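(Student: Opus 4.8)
The plan is to derive the corollary from the Theorem via a one-line monotonicity observation: allowing a malicious prover fewer qubits of carried-over memory only weakens it, so $(\beta,m_2)$-soundness implies $(\beta,m_2')$-soundness for every $m_2'\le m_2$, in particular $(\beta,0)$-soundness; and since completeness never refers to $m_2$, a given $(\alpha,\beta,m_1,m_2)$-PoQM is then literally an $(\alpha,\beta,m_1,0)$-PoQM, to which the Theorem applies verbatim.

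Concretely, I would take the protocol $(\cV_1,\cP_1,\cV_2,\cP_2)$ witnessing the $(\alpha,\beta,m_1,m_2)$-PoQM and argue it is an $(\alpha,\beta,m_1,0)$-PoQM. Completeness is immediate: the honest execution is unchanged and $\alpha$-completeness only constrains $\cV_2$ against the honest $(\cP_1,\cP_2)$, with no dependence on $m_2$. For soundness, I would fix any QPT pair $(\cP_1^*,\cP_2^*)$ in which $\cP_1^*$ hands $\cP_2^*$ a classical string $s$ together with a $0$-qubit (i.e., empty) quantum state, pad $\cP_1^*$ into $\widetilde{\cP_1^*}$ that additionally outputs $m_2$ fresh ancilla qubits in state $\ket{0}^{\otimes m_2}$, and pad $\cP_2^*$ into $\widetilde{\cP_2^*}$ that first discards those $m_2$ ancillas and then behaves exactly like $\cP_2^*$ on $(s,\cdot)$. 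Both padded algorithms are QPT (here using that $m_2$ is polynomially bounded), $\widetilde{\cP_1^*}$ outputs an $m_2$-qubit state, and the probability that $\cV_2$ accepts against $(\widetilde{\cP_1^*},\widetilde{\cP_2^*})$ equals the probability it accepts against $(\cP_1^*,\cP_2^*)$. By $(\beta,m_2)$-soundness the former is at most $\beta$, hence so is the latter, which is exactly $(\beta,0)$-soundness. (The single-step claim quoted in the text — that an $(\alpha,\beta,m_1,m_2)$-PoQM is trivially an $(\alpha,\beta,m_1,m_2-1)$-PoQM — is just the $m_2'=m_2-1$ instance of this padding, and iterating it down to $m_2'=0$ gives the same conclusion.)

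Applying the Theorem to the resulting $(\alpha,\beta,m_1,0)$-PoQM then yields an $(\alpha,\beta)$-PoQ. I do not expect any real obstacle: all the substance sits inside the Theorem, and the corollary is just the remark that a memory-$m_2'$ adversary is a special case of a memory-$m_2$ adversary for $m_2'\le m_2$ (pad with $\ket{0}$'s). The only points worth recording explicitly are that the padding acts only on the quantum register and leaves the classical hand-off string $s$ untouched — harmless since, as the definition notes, $s$ carries no size bound — and that the honest protocol, and hence $\alpha$-completeness, is unaffected.
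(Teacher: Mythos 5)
Your proposal is correct and matches the paper's own route exactly: the paper derives the corollary from the remark that an $(\alpha,\beta,m_1,m_2)$-PoQM is trivially an $(\alpha,\beta,m_1,m_2')$-PoQM for $m_2'\le m_2$ (which is precisely your $\ket{0}$-padding observation, left implicit there) and then invokes the theorem for the $m_2=0$ case. Your explicit padding of $(\cP_1^*,\cP_2^*)$ is a faithful filling-in of the word ``trivially.''
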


\paragraph{Constructions of PoQM.}
We give two constructions of PoQM based on the hardness of LWE.




The first construction is based on the subexponential hardness of LWE.
\begin{theorem}
Let $m_2:\mathbb{N}\to\mathbb{N}$ be any polynomially bounded function.
Assuming the subexponential hardness of LWE, four-round $(1-\negl,\negl,m_1,m_2)$-PoQM exist with some polynomial $m_1$.
\end{theorem}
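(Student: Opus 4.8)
The plan is to show that an appropriately parallelized version of the four-message proof of quantumness of Brakerski, Christiano, Mahadev, Vazirani, and Vidick~\cite{JACM:BCMVV21} is already a PoQM whose soundness memory bound can be made an arbitrary polynomial $m_2$ by tuning the number of parallel copies. I would fix an LWE-based noisy trapdoor claw-free function family $\NTCF$ (the two-message quantum-advantage primitive underlying~\cite{JACM:BCMVV21}), whose claw state $\ket{\psi}=\tfrac1{\sqrt2}(\ket{x_0}+\ket{x_1})$ lives on $w=w(\secp)=\poly(\secp)$ qubits, set $N:=m_2(\secp)+\secp$, and let $m_1(\secp):=N\cdot w(\secp)$. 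The protocol: in the initialization phase $\cV_1$ samples $N$ independent key--trapdoor pairs $(k_i,\td_i)$, sends $(k_1,\dots,k_N)$, and keeps $v:=(k_i,\td_i)_{i\in[N]}$; $\cP_1$ evaluates each $f_{k_i}$ in superposition, measures the image register to obtain $y_i$, sends $(y_1,\dots,y_N)$, and keeps $\state:=(y_i)_i$ together with the $m_1$-qubit state $\sigma_\state:=\bigotimes_i\ket{\psi_i}$. In the execution phase $\cV_2$ sends a uniform $\vec c\in\bit^N$; for each $i$, $\cP_2$ measures $\ket{\psi_i}$ in the computational basis if $c_i=0$ (returning a preimage) and in the Hadamard basis if $c_i=1$ (returning an equation together with the predicted hardcore bit), and $\cV_2$ uses $\td_i$ to run the BCMVV check for every copy, accepting iff all pass. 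This is four-round and $\cP_1$'s output state has $m_1$ qubits, with $m_1>m_2$.

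\textbf{Completeness and soundness.} Completeness is immediate: each copy passes with probability $1-\negl(\secp)$ by BCMVV completeness, so a union bound over the $N=\poly(\secp)$ copies gives $(1-\negl)$-completeness. For soundness, take a malicious $(\cP_1^*,\cP_2^*)$ in which $\cP_1^*$ passes a classical string $s$ and an $m_2$-qubit state $\rho$ to $\cP_2^*$. I would argue via two ingredients. (i) A \emph{quantum memory lemma}: since $\rho$ carries only $m_2$ qubits, there is a set $G\subseteq[N]$ with $|G|\ge N-m_2=\secp$ of copies on which $\cP_2^*$ is ``effectively classical'' -- its answers on the copies in $G$ are, up to $\negl$ error, determined by $s$ together with a classical measurement outcome of $\rho$ -- capturing that sustaining a genuine claw superposition costs one qubit per copy. (ii) A per-copy argument: for $i\in G$, an effectively-classical strategy can answer the challenge $c_i=0$ only if its classical advice determines a preimage of $y_i$, and $c_i=1$ only if it determines a valid equation for $y_i$; by the adaptive hardcore bit property it cannot determine both except with probability $\negl(\secp)$, so copy $i$ survives a uniform $c_i$ with probability at most $\tfrac12+\negl(\secp)$. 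Since acceptance requires every copy of $G$ to pass and the $c_i$ are independent, the acceptance probability is at most $(\tfrac12+\negl(\secp))^{|G|}+N\cdot\negl(\secp)=\negl(\secp)$.

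\textbf{Role of subexponential hardness.} Making ingredients (i) and (ii) into an actual reduction to LWE forces a loss of $2^{\poly(\secp)}$ -- for instance from exhaustively branching over the prover's per-copy commitment behavior, or over which copies form $G$, in order to turn a prover that passes into an adaptive-hardcore-bit / LWE breaker. I would absorb this by complexity leveraging: instantiate $\NTCF$ with a security parameter $\secp'=\poly(\secp)$ large enough that the assumed subexponential LWE distinguishing advantage $2^{-(\secp')^{c}}$, for some constant $c>0$, dominates the $2^{\poly(\secp)}$ loss. This is the only place where subexponential rather than polynomial hardness is needed, and it is also why a cleaner parallel-repetition analysis (which would avoid leveraging) is not available here.

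\textbf{Main obstacle.} The hard part is the quantum memory lemma: converting ``$\rho$ has at most $m_2$ qubits'' into ``$\cP_2^*$ is effectively classical on all but $m_2$ of the $N$ copies,'' in a form that composes with the adaptive-hardcore-bit step. The route I would pursue is to purify $\cP_1^*$, examine the reduced state of $\rho$ on the subsystem relevant to each copy, argue by a dimension/entropy counting argument that an $m_2$-qubit register cannot simultaneously retain the coherence needed by more than $m_2$ mutually independent copies, and complement this with a gentle-measurement argument showing that low coherence on copy $i$ forces $\cP_2^*$'s response on copy $i$ to be close to that of a classical strategy -- with the LWE assumption entering to rule out the prover's classical advice secretly encoding claw information it should not have.
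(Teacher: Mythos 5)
Your high-level architecture (a quantum-advantage protocol, plus an argument that $m_2$ qubits of residual memory cannot help too much) matches the paper's in spirit, but both of the steps you flag as the technical core are left unproven, and the forms in which you state them are the problematic ones. First, your ``quantum memory lemma'' --- that an $m_2$-qubit state forces the prover to be effectively classical on a fixed set $G$ of $N-m_2$ copies --- is exactly the kind of statement that does not follow from dimension counting: the $m_2$ qubits can be entangled across the relevant subsystems of all $N$ copies, and nothing localizes the ``quantumness'' to $m_2$ of them. The paper sidesteps this entirely with the Boneh--Zhandry measurement lemma (\cref{lemma:bz}): measuring the $m_2$-qubit register in the computational basis degrades any output probability by at most a factor $2^{m_2}$, so an $(\alpha,\beta,m_1,0)$-PoQM is automatically an $(\alpha,2^{m_2}\beta,m_1,m_2)$-PoQM (\cref{thm:0_to_m}). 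This converts the problem into proving soundness against provers with \emph{zero} quantum memory and soundness error below $2^{-m_2}\cdot\negl$, which is clean and provable. Second, your bound $(\tfrac12+\negl)^{|G|}$ is a direct-product/parallel-repetition claim for correlated prover strategies across copies; this does not follow from independence of the challenges $c_i$, and the paper explicitly notes that parallel repetition for this setting is open (which is precisely why it does not take your route).

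What the paper actually does for the base case is different and avoids repetition altogether: it instantiates the protocol with a $1$-of-$2^k$ puzzle \cite{ITCS:LLQ22} (a strengthening of the BCMVV template with a $k$-bit challenge), whose $c$-soundness is a \emph{two-solver} guarantee with error $2^{-k}+\negl(2^{\secp^c})$. A zero-quantum-memory cheating prover hands a purely classical string to the execution phase, so that string can be duplicated and given to two independent solvers; Jensen's inequality then shows the single-prover success probability $\epsilon$ satisfies $\epsilon^2\le 2^{-k}+\negl(2^{\secp^c})$. Choosing $k(\secp)=\omega(\secp^c)$ with $m_2(\secp)=O(\secp^c)$ makes $2^{m_2}\epsilon$ negligible after the Boneh--Zhandry step. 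Note also that subexponential LWE is used to make the additive term in the puzzle's soundness $\negl(2^{\secp^c})$ so it survives multiplication by $2^{m_2}$ --- not for the kind of $2^{\poly(\secp)}$ complexity leveraging over prover branches that you describe. As written, your proposal has genuine gaps at both of its load-bearing steps; to repair it you would essentially need to replace ingredient (i) with the $2^{m_2}$ measurement argument and ingredient (ii) with a single-shot soundness notion strong enough to absorb that $2^{m_2}$ factor, which is what the $1$-of-$2^k$ puzzle provides.
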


The second construction is based on the polynomial hardness of LWE.
\begin{theorem}
Let $p$ be any polynomial. Let $m_2:\mathbb{N}\to\mathbb{N}$ be any polynomially bounded function such that $m_2(\secp)=\omega(\log(\secp))$.
Assuming the polynomial hardness of LWE, $r$-round $(1-\negl,1/p,\lceil 9.1m_2\rceil,m_2)$-PoQM exist with a certain polynomial $r$.
\end{theorem}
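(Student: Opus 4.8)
The plan is to build the PoQM by combining LWE-based \emph{remote state preparation} (RSP) with a ``proof-of-space''-style spot-checking phase, and to drive the soundness error down to $1/p$ by sequential amplification over polynomially many rounds rather than by parallel repetition with complexity leveraging (the latter being what forces subexponential hardness in the four-round construction). Concretely, let $N \defeq \lceil 9.1 m_2\rceil$. In the initialization phase, $\cV_1$ and $\cP_1$ run, in parallel, $N$ independent copies of a composably-secure classical-client RSP protocol for BB84 states built from LWE trapdoor claw-free functions~\cite{JACM:BCMVV21}: at the end, the honest $\cP_1$ holds $N$ single-qubit states $q_1,\dots,q_N$ with $q_i = H^{\theta_i}\ket{s_i}$, and $\cV_1$ holds (and outputs as $v$) the classical descriptions $\{(\theta_i,s_i)\}_{i\in[N]}$, where each $(\theta_i,s_i)$ is computationally hidden from $\cP_1$. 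The honest stored state is $\sigma_\state = q_1\otimes\cdots\otimes q_N$, which uses exactly $m_1 = N = \lceil 9.1 m_2\rceil$ qubits (the footnoted allowance for extra qubits \emph{during} the initialization phase is used to evaluate the claw-free functions). In the execution phase, $\cV_2$ runs $r' = \poly(\secp)$ spot-check rounds: in round $j$ it samples a uniformly random $i_j\in[N]$, sends $(i_j,\theta_{i_j})$, receives a bit $a_j$ from $\cP_2$, and checks $a_j = s_{i_j}$; it outputs $\top$ iff all $r'$ checks pass. The total round count is $r = O(1) + r' = \poly(\secp)$.

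Completeness is immediate: the honest $\cP_2$ holds every $q_i$, so measuring $q_{i_j}$ in basis $\theta_{i_j}$ returns $s_{i_j}$ with probability $1-\negl(\secp)$ (the $\negl$ absorbing the RSP preparation error); a union bound over the $r' = \poly$ rounds gives $\alpha = 1-\negl(\secp)$.

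For soundness we must show that any malicious $(\cP_1^*,\cP_2^*)$ in which $\cP_1^*$ hands $\cP_2^*$ an $m_2$-qubit register $\rho$ (together with an arbitrary-length classical string $s$) makes $\cV_2$ accept with probability at most $1/p$. The argument has three parts. (i)~\emph{Idealization.} Using the polynomial hardness of LWE and the composable security of the RSP protocol, we replace the real initialization phase by the ideal one in which $\cP_1^*$ simply receives the $N$ genuine BB84 states $\bigotimes_i H^{\theta_i}\ket{s_i}$ (with $\{(\theta_i,s_i)\}$ uniform and information-theoretically hidden from it), incurring only a $\negl(\secp)$ change in $\cV_2$'s acceptance probability; since $N = \poly$, the hybrid over the $N$ copies stays polynomial-time. (ii)~\emph{Bounded-quantum-storage bound.} In the ideal world, $\cP_1^*$ is an arbitrary channel compressing these $N$ states (plus the transcript) into $m_2$ qubits, and $\cP_2^*$ is an arbitrary strategy for the spot-check. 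The core estimate is that, for a \emph{single} spot-check on a uniformly random $i\in[N]$ (with $\theta_i$ revealed), the probability that $\cP_2^*$ answers correctly is at most $1-\gamma$ for some absolute constant $\gamma>0$ \emph{precisely because} $N \ge 9.1\, m_2$: intuitively, up to constant factors each qubit of surviving memory can ``protect'' the answer to one instance, while an instance for which no coherent information survives can be answered correctly with probability at most a constant strictly below one (even using a Breidbart-type measurement together with a classical record), and a convexity/monotonicity argument bounds the average over $i$. The constant $9.1$ is the value at which the quantitative version of this bound yields an average acceptance probability provably bounded away from $1$ under a polynomial-time (rather than complexity-leveraged) reduction; the hypothesis $m_2 = \omega(\log\secp)$ is what makes the residual entropy terms in this bound negligible. (iii)~\emph{Amplification.} Because $\cP_2^*$'s ``answerable set'' has size at most a constant factor times $m_2 < N$ throughout the execution phase --- consuming a qubit to answer one round does not create coherent information about a fresh instance --- the conditional acceptance probability of each of the $r'$ rounds is at most $1-\gamma$ regardless of the history; hence $\Pr[\cV_2 = \top] \le (1-\gamma)^{r'} + \negl(\secp) \le 1/p$ for $r' = \Theta\!\big(\gamma^{-1}\log p\big)$.

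The main obstacle is part~(ii): making the bounded-quantum-storage argument rigorous against a fully general, possibly entangled $(\cP_1^*,\cP_2^*)$ and with the explicit constant. This needs an uncertainty-relation / min-entropy estimate for the BB84 ensemble in the presence of $m_2$ qubits of side information that is robust to (a)~the fact that the basis is only \emph{computationally} hidden in the real protocol (handled by part~(i), but one must check that the idealization commutes with the later measurements), (b)~arbitrary correlations that $\cP_1^*$ may introduce among the $N$ registers, and (c)~$\cP_2^*$'s adaptivity across the $r'$ rounds. Relative to this, part~(iii) is a routine martingale/hybrid argument and part~(i) is a standard invocation of composable RSP security; the reason the theorem yields only inverse-polynomial soundness (and polynomially many rounds) is exactly that we insist on a polynomial-time reduction to LWE in parts~(i)--(ii), forgoing the subexponential complexity leveraging used for the four-round, negligible-soundness construction.
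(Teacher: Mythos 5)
Your initialization phase matches the paper's: use an LWE-based RSP to place $N=\lceil 9.1 m_2\rceil$ BB84 qubits on the prover's side, with the verifier holding the classical descriptions. But the execution phase and the soundness analysis diverge from the paper's, and both of the steps you rely on contain genuine gaps. The first is step~(i): you assume the RSP idealization costs only $\negl(\secp)$. The RSP obtainable from \emph{polynomial} hardness of LWE (Zhang, ITCS 2025, as used in the paper) has only inverse-polynomial simulation soundness; replacing the real initialization by the simulator acting on genuine BB84 states changes the acceptance probability by up to $\frac{1}{2p}$, not $\negl$. This is not a technicality --- it is the entire source of the $1/p$ in the theorem: the paper's information-theoretic part of the bound is negligible and the final soundness is $\frac{1}{2p}+\negl<\frac{1}{p}$. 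Your accounting, which instead attributes the $1/\poly$ loss to the number of spot-check rounds, would (if a $\negl$-error composable classical-client RSP from polynomial LWE existed) give negligible soundness, which should have been a warning sign.

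The second and larger gap is steps~(ii)--(iii). The paper does not use a per-round constant-gap plus sequential-amplification argument at all: its execution phase is a \emph{single} two-message challenge (send all of $\theta$, demand all of $x$), and soundness follows from (a) the Boneh--Zhandry measurement lemma, which converts an adversary storing $m_2$ qubits into one storing only a classical string at a multiplicative cost of $2^{m_2}$ in success probability, and (b) the LOCC leakage property for BB84 states, which bounds the resulting classical adversary's success by $2^{-\frac{\xi}{2}N+2^{-N}}$ with $\xi=-\log(\frac12+\frac{1}{2\sqrt2})>0.22$. The constant $9.1$ is exactly what makes $\frac{\xi}{2}\cdot 9.1>1$, so that the $2^{m_2}$ loss is absorbed and the product is $2^{-\Omega(m_2)}=\negl$ (using $m_2=\omega(\log\secp)$); it has nothing to do with a ``one surviving qubit protects one instance'' counting bound, which is the justification you reverse-engineered. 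Moreover, your amplification claim that each round's conditional acceptance probability is at most $1-\gamma$ ``regardless of the history'' is false as stated (an index queried twice is answered correctly with probability $1$ the second time), and the adaptive multi-round uncertainty bound with $m_2$ qubits of side information that you would need in its place is precisely the part you leave open. So the proposal assembles the right ingredients for the initialization phase but is missing the two lemmas that actually carry the soundness proof, and it misidentifies the origin of both the constant $9.1$ and the $1/p$.
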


These two results are incomparable. 
The first construction is of four-round and with negligible soundness, while the assumption, subexponential hardness of LWE, is stronger.
On the other hand, the second construction is based on polynomial hardness of LWE, but it is of $\poly$-round and soundness is only $1/\poly$.\footnote{Parallel repetitions are non-trivial in PoQM, and we do not know how to do it.}




\paragraph{Lowerbounds of PoQM.}
We show that one-way puzzles (OWPuzzs) \cite{STOC:KhuTom24} are a lowerbound of PoQM:
\begin{theorem}
\label{thm:OWPuzz}
Let $\alpha,\beta:\mathbb{N}\to[0,1]$ be any functions such that $\alpha(\secp)-\beta(\secp)\ge1/\poly(\secp)$ for all sufficiently large $\secp\in\mathbb{N}$.
Let $m_1,m_2:\mathbb{N}\to\mathbb{N}$ be any functions.
If $(\alpha,\beta,m_1,m_2)$-PoQM exist, then OWPuzzs exist. 
\end{theorem}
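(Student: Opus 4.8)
The plan is to construct a \emph{distributional one-way puzzle} from the PoQM and then invoke the standard fact that this implies a one-way puzzle \cite{STOC:KhuTom24}. Fix an honest run of the protocol and let $b_1,\dots,b_t$ be the messages the honest prover sends during the \emph{execution} phase (so $t=\poly(\secp)$); for a transcript $\tau$ write $\tau_{<i}$ for the full classical transcript produced up to but not including $b_i$ (in particular $\tau_{<i}$ contains the entire initialization-phase transcript). Define $\mathsf{Samp}(1^\secp)$ to run one honest execution with $\cV_1,\cP_1,\cV_2,\cP_2$, sample $i\leftarrow[t]$, and output the puzzle $\puzz:=(i,\tau_{<i})$ and the answer $\ans:=b_i$; the (computationally unbounded) verification accepts $(\ans',\puzz)$ iff $\ans'$ lies in the support of the honest prover's $i$-th execution message conditioned on $\tau_{<i}$, so correctness is immediate. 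This sampler is QPT because the honest prover is QPT, and $\puzz,\ans$ are classical. (It is harmless, and mildly convenient, to first push all of $\cV_2$'s fresh randomness into its input $v$, e.g.\ by having $\cV_1$ append it to $v$, so that $\cV_2$'s messages are a deterministic function of $v$ and the transcript; this changes no PoQM parameter.)

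To prove distributional one-wayness, suppose a QPT $\mathcal{A}$ achieves $\Delta\bigl((\puzz,\ans),(\puzz,\mathcal{A}(\puzz))\bigr)<(\alpha-\beta)/t$, which by hypothesis is $\ge 1/\poly(\secp)$. I would turn $\mathcal{A}$ into a malicious prover $(\cP_1^*,\cP_2^*)$: $\cP_1^*$ runs the honest $\cP_1$ against $\cV_1$ (using whatever workspace it needs during initialization), obtains $(\state,\sigma_\state)$, then \emph{discards the quantum register $\sigma_\state$} and outputs the classical string $\tau_1$ together with the all-zero $m_2$-qubit register; $\cP_2^*(\tau_1)$ interacts with $\cV_2$ and, whenever it is its turn to send its $i$-th message, runs $\mathcal{A}$ on $(i,\text{the current transcript})$ and forwards the result. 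Both algorithms are QPT, and $\cP_1^*$ transfers no quantum information into the execution phase, so $(\beta,m_2)$-soundness bounds the acceptance probability of $(\cP_1^*,\cP_2^*)$ by $\beta$.

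The two are linked by a hybrid over the $t$ execution-phase prover messages. Let $H_j$ be the experiment that runs the honest $\cP_1$ in initialization and then, in the execution phase, uses the honest $\cP_2$ for its first $j$ messages and $\mathcal{A}$ (applied to the running transcript) for the remaining $t-j$. Then $H_t$ is the honest execution, accepted with probability $\ge\alpha$, and $H_0$ has the same acceptance probability as $(\cP_1^*,\cP_2^*)$, hence $\le\beta$. The runs $H_{j-1}$ and $H_j$ are literally the same process up to the moment the $j$-th prover message is due, and from then on both are driven by $\mathcal{A}$ from the transcript alone; since that message is classical and the verifier's subsequent behavior depends only on $v$ and the transcript, coupling the $j$-th prover message of the two runs already couples their accept/reject outcomes. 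Using the fact that the classical transcript of an interactive protocol is a Markov blanket between the two parties' states, the distribution of the honest prover's $j$-th message conditioned on the prefix $\tau_{<j}$ is well-defined, is independent of the verifier's state, and equals the conditional distribution of $\ans$ given $\puzz=(j,\tau_{<j})$ — and $\tau_{<j}$ is itself distributed exactly as $\mathsf{Samp}$ would produce it for index $j$. Hence $\bigl|\Pr[H_{j-1}\text{ accepts}]-\Pr[H_j\text{ accepts}]\bigr|$ is at most the expected statistical distance between $\mathcal{A}(j,\tau_{<j})$ and that conditional, and averaging over $j$ gives the distributional advantage of $\mathcal{A}$. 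Summing, $\alpha-\beta\le\sum_{j=1}^{t}\bigl|\Pr[H_{j-1}\text{ accepts}]-\Pr[H_j\text{ accepts}]\bigr|\le t\cdot\Delta\bigl((\puzz,\ans),(\puzz,\mathcal{A}(\puzz))\bigr)<\alpha-\beta$, a contradiction. So $(\mathsf{Samp},\mathsf{Ver})$ is a distributional one-way puzzle, and \cite{STOC:KhuTom24} then yields a one-way puzzle.

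The step I expect to be the main obstacle is this hybrid/coupling analysis for a genuinely interactive, possibly adaptive execution phase: one must verify that the transcript handed to $\mathcal{A}$ inside $H_{j-1}$ is distributed exactly like the puzzle $\mathcal{A}$ was meant to invert (so that $\mathcal{A}$'s guarantee applies), and that coupling only the $j$-th classical prover message — all that $\mathcal{A}$ observes or influences — suffices to couple the whole remainder of the interaction, which rests on the conditional independence of the verifier's and prover's states given the transcript. The only quantitative point to watch is that the gap degrades merely by the factor $t=\poly(\secp)$, so it stays inverse-polynomial, matching the hypothesis $\alpha-\beta\ge 1/\poly(\secp)$ exactly.
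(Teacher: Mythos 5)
Your proof is correct, but it takes a genuinely different route from the paper. The paper proves this theorem (via \cref{thm:PoQMtoStatepuz}) by building a \emph{state puzzle}: the puzzle is the classical output $\state$ of $\cP_1$ together with the initialization transcript, and the answer is (a purification of) the prover's intermediate \emph{quantum} state $\sigma_\state$; an adversary that reconstructs this state from the classical puzzle is converted in one shot into a $0$-qubit prover breaking $(\beta,0)$-soundness (via a trace-distance/Jensen bound), and then the equivalence of StatePuzzs and OWPuzzs from \cite{STOC:KhuTom25} is invoked. You instead keep everything classical: the puzzle is a random prefix of the execution-phase transcript, the answer is the honest prover's next message, and you argue distributional one-wayness by a round-by-round hybrid, finishing with the distributional-to-standard OWPuzz conversion of \cite{STOC:KhuTom24}. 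Both reductions hinge on the same core observation --- a prover that performs the honest initialization but carries only classical information into the execution phase is bounded by $\beta$ --- and both lose only polynomial factors in the gap $\alpha-\beta$. The paper's route avoids any hybrid over rounds but leans on the StatePuzz machinery; yours avoids quantum-state answers entirely at the cost of the $t$-step hybrid and the conditional-independence-given-transcript argument, which you correctly identify as the load-bearing step: for a classical-channel protocol the joint state, conditioned on the transcript prefix $\tau_{<j}$, factorizes into the verifier's private state and the prover's residual state, so the honest next-message distribution given $\tau_{<j}$ is well-defined, independent of $v$, and identical to the conditional $\Samp$ distribution; this is a true and standard fact, but it is the one claim you should prove rather than assert if you write this up. Two minor points: your padding of $\cP_1^*$'s output with $\ket{0}^{\otimes m_2}$ handles general $m_2$ directly (the paper instead reduces to $m_2=0$ by monotonicity), and your support-based $\Ver$ is superfluous for a distributional OWPuzz but harmless.
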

OWPuzzs are a natural quantum analogue of OWFs. A OWPuzz is a pair $(\Samp,\Ver)$ of algorithms.
$\Samp$ is a QPT algorithm that takes the security parameter $1^\secp$ as input and outputs classical bit strings $\puzz$ and $\ans$.
$\Ver$ is an unbounded algorithm that takes $(\puzz,\ans)$ as input and outputs $\top$ or $\bot$.
The correctness is $\Pr[\top\gets\Ver(\puzz,\ans):(\puzz,\ans)\gets\Samp(1^\secp)]\ge1-\negl(\secp)$
and the security is
$\Pr[\top\gets\Ver(\puzz,\ans'):(\puzz,\ans)\gets\Samp(1^\secp),\ans'\gets\cA(1^\secp,\puzz)]\le\negl(\secp)$
for any QPT adversary $\cA$.
OWPuzzs are implied by many quantum cryptographic primitives, and imply non-interactive bit commitment
and multiparty computations~\cite{STOC:KhuTom24,C:MorYam22,AC:Yan22,C:AnaQiaYue22}.

\if0
\minki{The above is the lower bound of PoQ, right? What about the below one? (I'll be surprised if there's a lower bound of PoQM stronger than PoQ, as it somehow use QM different from quantumness.)}
\mor{I do not understand this comment.}
\minki{What I'm saying is that all lower bounds are the lower bounds for $(\alpha,\beta,m_1,0)$-PoQM, i.e. $m_2=0$. This is very natural given the other (quantum) primitives usually do not care about quantum memory bound. My question is: Can we either obtain some interesting primitives from $(\alpha,\beta,m_1,m_2)$-PoQM for some $m_2>0$ (that is not obtained from $m_2=0$) or separate $(\alpha,\beta,m_1,m_2)$-PoQM and other interesting primitive?}
\fi

If we consider a restricted version of PoQM (which we call \emph{an extractable PoQM}), 
then we obtain 
a potentially stronger lowerbound:
\begin{theorem}
\label{thm:KE}
Let $m_1,m_2:\mathbb{N}\to\mathbb{N}$ be any functions. Let $\alpha:\mathbb{N}\to[0,1]$ be any function. Let $c_1$ and $c_2$ be any constants such that $c_1>c_2>0$. Let $p(\secp)\coloneqq\secp^{c_1}$ and $q(\secp)\coloneqq\secp^{c_2}$. If $(\alpha,\alpha-\frac{1}{q},m_1,m_2)$-extractable PoQM with extraction probability $1-\frac{1}{p}$ exist, then quantum computation classical communication key-exchange (QCCC KE) exist.   
\end{theorem}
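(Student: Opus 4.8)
The plan is to use the initialization phase of the extractable PoQM directly as a key-exchange subroutine: the verifier's output $v$ becomes the shared key, and the prover re-derives $v$ from its quantum memory using the extractor. Let $\cE$ be the QPT extractor witnessing extractability, so that in an honest run of the initialization phase $\Pr[\cE(\state,\sigma_\state)=v]\ge 1-1/p$. The candidate QCCC KE is: Alice runs $\cV_1(1^\secp)$ and Bob runs $\cP_1(1^\secp)$ over the classical channel; Alice outputs her $v$, and Bob outputs $v'\gets\cE(\state,\sigma_\state)$. All messages are classical and all local computation is QPT, so this is a valid QCCC protocol, and by extractability Alice and Bob agree with probability $\ge 1-1/p$.

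For secrecy I would reduce a key-recovering eavesdropper to a soundness adversary. Suppose a QPT eavesdropper $\cA$ recovers $v$ from the initialization transcript $\tau$ with probability $\gamma$. Build a malicious prover $(\cP_1^*,\cP_2^*)$: $\cP_1^*$ runs the honest $\cP_1$ against $\cV_1$ (so $\cV_1$'s view, hence the distribution of $v$, is exactly as in the real protocol), stores the transcript $\tau$ as its classical state $s$ --- the definition allows $s$ to be arbitrarily long --- and keeps no information in its quantum register, which is $\le m_2$ for every $m_2$; then $\cP_2^*$ runs $\cA(\tau)$ to get $v^*$ and plays the execution phase using the honest prover's execution-phase strategy \emph{as a function of its extracted value}, instantiated with $v^*$ in place of $\cE(\state,\sigma_\state)$. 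Whenever $v^*=v$, the messages $\cP_2^*$ sends are distributed as in an honest run conditioned on correct extraction, so $\cV_2$ accepts with essentially the honest conditional probability; a routine conditioning argument then gives that $(\cP_1^*,\cP_2^*)$ is accepted with probability at least $\gamma\cdot(\alpha-1/p)$. By $(\alpha-1/q,m_2)$-soundness this is at most $\alpha-1/q$, so $\gamma\le 1-\Omega(\secp^{-c_2})$ using $c_1>c_2$. Combined with correctness $\ge 1-\secp^{-c_1}$, we have obtained a \emph{weak} key-exchange: Alice and Bob agree on $v$ with an inverse-polynomial advantage over the best QPT eavesdropper's probability of guessing $v$.

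It remains to boost this into a full QCCC KE with negligible completeness and secrecy errors, and this must be done at the key-exchange layer, since soundness of a PoQM is not known to survive parallel repetition. Running $n=\poly(\secp)$ independent instances produces tuples $(v_1,\dots,v_n)$ and $(v'_1,\dots,v'_n)$; the ``guess $v_i$ from $\tau_i$'' tasks form a family of weakly verifiable puzzles (the reduction can itself sample the non-target instances together with their values $v_j$, hence check the eavesdropper's guesses on them), so a Canetti--Halevi--Steiner-style direct-product theorem amplifies the single-instance bound $\gamma\le\alpha-1/q$ to: no QPT eavesdropper recovers the whole tuple except with probability $2^{-\Omega(\secp)}$. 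A (quantum) Goldreich--Levin hardcore bit of the tuple --- or a randomness extractor applied to it, after standard information reconciliation by which Alice lets Bob repair the few coordinates where his extraction failed --- then yields a key that is pseudorandom given the transcripts and public randomness and is shared by Alice and Bob with overwhelming probability. This is a QCCC KE.

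The main obstacle I expect is this amplification step. Soundness of a PoQM only gives the very weak unpredictability bound $\gamma\le\alpha-1/q$, with $\alpha-1/q$ possibly close to $1$; pushing the eavesdropper's advantage down to negligible therefore requires a genuine hardness-amplification argument for the derived secret-prediction task, then a Goldreich--Levin extraction, and then a reconciliation step for correctness, all of which must be quantitatively balanced: the number $n$ of copies, the length of the reconciliation data, and the Goldreich--Levin loss must fit inside the polynomial gap guaranteed by $c_1>c_2$ between the extraction error $\secp^{-c_1}$ and the soundness gap $\secp^{-c_2}$. Getting exactly this budget to close is the crux, and it is presumably the reason the theorem is stated with $c_1>c_2$ rather than for arbitrary polynomials.
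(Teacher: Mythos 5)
There is a genuine gap, and it originates in a misreading of the extractability definition. In \cref{def:urPoQM}, the extractor $\mathsf{Ext}$ runs on the \emph{verifier's} view $(v,\tau,x)$ and outputs the \emph{prover's} execution-phase message $y$ with probability $\gamma$; it does not let the prover compute the verifier's output $v$ from $(\state,\sigma_\state)$. Your protocol makes $v$ the shared key and has Bob (the prover) recover it via ``$\cE(\state,\sigma_\state)$'', but no such algorithm is guaranteed to exist, and in the paper's instantiations it provably cannot: in the RSP-based construction $v$ contains the BB84 description $(x,\theta)$, which the prover cannot fully learn (this is exactly what the LOCC leakage lemma forbids), and in the $1$-of-$2^k$ construction $v$ contains the secret key $\sk$. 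So correctness of your base protocol fails outright. Your security reduction leans on the same inversion: a malicious prover who knows $v^*$ cannot run ``the honest prover's strategy as a function of its extracted value'', since $\cP_2$ is a function of $(\state,\sigma_\state)$, not of $v$; the way to convert knowledge of $v$ into an accepting response would be to run $\mathsf{Ext}(v^*,\tau,x)$ --- the extractor in its actual direction --- which you never invoke.

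The paper's construction is the mirror image of yours and avoids all of this. Alice plays $\cP_1,\cP_2$ and Bob plays $\cV_1,\cV_2$; they run both phases, but Alice \emph{withholds} the final execution-phase message $y$ and outputs it as her key, while Bob computes $y'\gets\mathsf{Ext}(v,\tau,x)$ as his key. Correctness with probability $1-\frac{1}{p}$ is then literally the extraction guarantee, and security is immediate from soundness: an eavesdropper $\cE$ who predicts $y$ from $(\tau,x)$ with probability exceeding $1-\frac{1}{q}$ yields a malicious prover with zero qubits of memory ($\cP_1^*$ outputs only $\tau$; $\cP_2^*$ outputs $\cE(\tau,x)$) that is accepted with probability at least $\alpha-\frac{1}{q}$ by a union bound, contradicting $(\alpha-\frac{1}{q},0)$-soundness. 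This gives a $(1-\frac{1}{p},1-\frac{1}{q})$-QCCC KE directly; the final amplification, which you treat as the main open obstacle and sketch via direct products and Goldreich--Levin, is handled in the paper by citing \cref{lem:KEamp} (Lemma 2.13 of the referenced work), which is exactly why the theorem is stated with $p=\secp^{c_1}$, $q=\secp^{c_2}$, $c_1>c_2$.
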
 
Here, an $(\alpha,\beta,m_1,m_2)$-extractable PoQM with extraction probability $\gamma$ 
is an $(\alpha,\beta,m_1,m_2)$-PoQM where the execution phase is of a single round (i.e., of two message), and 
$\cP_2$'s message in the execution
phase can be
computed in QPT by $\cV_2$ with probability at least $\gamma$.
Our construction of PoQM based on polynomial hardness of LWE satisfies this property.
A QCCC KE is a key exchange in the quantum computation and classical communication (QCCC) setting, i.e., 
Alice and Bob are QPT but all communications are classical.

These two lowerbounds give interesting insights to the following two open problems about PoQ:
\begin{enumerate}
\item 
Is there any 
quantum cryptographic lowerbound for PoQ?\footnote{\cite{STOC:MSY25} showed that PoQ imply classically-secure OWPuzzs, but we do not know any 
quantumly-secure cryptographic primitive implied by PoQ.}
\item 
Can PoQ be constructed from OWFs?
\end{enumerate}
Although we do not solve the first open problem in this paper, \cref{thm:OWPuzz}
at least shows that if we consider the generalization of PoQ (namely, PoQM), a meaningful lowerbound
(namely, OWPuzzs)
can be obtained.
Moreover,
\cref{thm:KE}
indicates that at least (a restricted version of) the generalization of PoQ (namely, extractable PoQM)
will not be constructed from OWFs in a black-box way, because there is evidence that
QCCC KE will not be constructed from OWFs in a black-box way~\cite{ITCS:LLLL25,C:ACCFLM22,C:LLLL24}.

\if0
The definition of PoQ is overly broad and contains non-interactive one like factoring \cite{FOCS:Shor94} and Yamakawa-Zhandry \cite{FOCS:YamZha22}. This makes it difficult to understand its relationship with other primitives. Therefore, rewinding-type PoQ should be separated from PoQ, and in some sense, the fact that PoQM imply StatePuzzs suggests that PoQM achieves the separation.
\fi

\subsection{Technical Overview}
Here we provide a high-level overview of our results.

\if0
\paragraph{Amplification of PoQM.}
For our constructions, we need the following amplification lemma:
\begin{lemma}
\label{lem:amp}
An $(\alpha,\beta,0)$-PoQM is an $(\alpha,2^m\beta,m)$-PoQM. 
\end{lemma}
To show it, we use a lemma in \cite{C:BonZha13}, 
which says the following: let $\cA$ be any algorithm and $p$ be the probability that $\cA$ accepts.
Let $\cA'$ be an algorithm that is the same as $\cA$ except that a $k$-outcome measurement is done at any point in the algorithm.
Let $p'$ be the probability that $\cA'$ accepts.
Then, $p\le kp'$. By using the lemma of \cite{C:BonZha13}, we can show \cref{lem:amp} 
as follows.
If a PoQM is not $(2^m\beta,m)$-sound, then due to this lemma, there exist QPT adversaries that can hold $0$-qubit and make the verifier accept with probability at least $\beta$. However, this violates the $(\beta,0)$-soundness.  
\fi

\paragraph{PoQM based on polynomial hardness of LWE.}
Let us first explain our construction of 
$(1-\negl,1/p,\lceil 9.1m_2\rceil,m_2)$-PoQM for any polynomial $p$ and 
any polynomially bounded function $m_2:\mathbb{N}\to\mathbb{N}$ such that $m_2(\secp)=\omega(\log(\secp))$
based on polynomial hardness of LWE. 
The basic idea of our construction is simple:
First, let us consider the following ``information-theoretically-secure'' $(1-\negl,\negl,n,0)$-PoQM~\cite{Maxfield}:
\begin{itemize}
    \item
{\bf Initialization phase.}
$\cV_1$ generates the state $\sigma\coloneqq\bigotimes_{i=1}^{n} H^{\theta_i}|x_i\rangle$ with random $(x,\theta)\in\bit^n\times\bit^n$ by itself 
and sends the state to $\cP_1$ over a quantum channel. 
(Here, $x_i$ and $\theta_i$ are the $i$th bit of $x$ and $\theta$, respectively. $H$ is the Hadamard operator.)
$\cV_1$ outputs $(x,\theta)$.
$\cP_1$ outputs $\sigma$.
\item
{\bf Execution phase.}
$\cV_2$ takes $(x,\theta)$ as input.
$\cP_2$ takes $\sigma$ as input.
$\cV_2$ sends $\theta$ to $\cP_2$, and $\cP_2$ measures $i$th qubit of $\sigma$ in
the computational (Hadamard) basis if $\theta_i=0$ ($\theta_i=1)$ for all $i\in[n]$.
Let $x_i'$ be the measurement result on the $i$th qubit.
If $x_i=x_i'$ for all $i\in[n]$, $\cV_2$ accepts.
Otherwise, $\cV_2$ rejects.
\end{itemize}
This information-theoretically-secure PoQM does not achieve our goal, because of the following two reasons:
\begin{enumerate}
    \item 
    It is only the case when $m_2=0$. We want to construct $(1-\negl,1/\poly,\lceil9.1m_2\rceil,m_2)$-PoQM for any polynomial $m_2$.
    \item 
    Both the verifier and the channel are quantum.
\end{enumerate}

The first issue is solved 
by using a lemma of \cite{C:BonZha13}.
The lemma says the following: 
Let $\cA$ be a quantum algorithm that outputs a classical bit string.
Let $\cA'$ be the algorithm that is the same as $\cA$ except that a $k$-outcome measurement is done at any step.
Then, $\Pr[x\gets\cA]\le k\Pr[x\gets\cA']$ for any $x$.\footnote{$\Pr[x\gets\cA]$ is the probability that $\cA$ outputs $x$.}
Using this lemma, we can show that an $(\alpha,\beta,m_1,0)$-PoQM is an $(\alpha,2^{m_2}\beta,m_1,m_2)$-PoQM
for any $\alpha,\beta,m_1,m_2$:
Let $(\cV_1,\cP_1,\cV_2,\cP_2)$ be an 
$(\alpha,\beta,m_1,0)$-PoQM.
Assume that it is not an
$(\alpha,2^{m_2}\beta,m_1,m_2)$-PoQM.
Then there exists a pair $(\cP_1^*,\cP_2^*)$ of QPT adversaries such that
$\cP_1^*$ outputs a classical bit string $s$ and an $m_2$-qubit state $\rho$,
$\cP_2^*$ takes $(s,\rho)$ as input,
and the probability that $\cV_2$ outputs $\top$ is strictly larger than $2^{m_2}\beta$. 
Let us define another QPT adversary $\cP_1^{**}$ as follows:
$\cP_1^{**}$ runs $\cP_1^*$ but it measures all qubits of the output state $\rho$ of $\cP_1^*$ in the computational basis,
and outputs the measurement result.
Then, by using the lemma of \cite{C:BonZha13},
the probability that $(\cP_1^{**},P_2^*)$ is accepted is strictly larger than
$2^{-m_2}\times 2^{m_2}\beta=\beta$,
which is the contradiction.

Therefore, we want to show that soundness (i.e., $\beta$) of the above information-theoretically-secure PoQM is 
$\beta=2^{-m_2}\times\negl$.\footnote{At this stage, we can achieve $\negl$-soundness, but
due to the $1/\poly$-soundness of RSPs, what we finally get is only $1/\poly$-soundness.}
In order to show it,
we use the lemma,
\emph{the LOCC leakage property for BB84 states}, which was introduced in \cite{cryptoeprint:2024/1876} for another purpose, namely,
constructions of leakage-resilient encryption and signatures. 
This lemma shows the following. Let us consider the following security game:
\begin{enumerate}
    \item 
A challenger generates a random BB84 state $\sigma\coloneqq\bigotimes_{i=1}^n H^{\theta_i}|x_i\rangle$ with random $(x,\theta)\in\bit^n\times\bit^n$.
\item 
An adversary sends the challenger a classical description $E$ of a quantum algorithm that takes a quantum state as input and outputs a classical bit string.
\item 
The challenger runs $\eta\gets E(\sigma)$\footnote{$\eta\gets E(\sigma)$ means that the algorithm $E$ is run on input $\sigma$, and the output is $\eta$.}, and sends the classical bit string $\eta$ to the adversary.
\item 
The challenger sends $\theta$ to the adversary.
\item 
The adversary returns a bit string $x'\in\bit^n$ to the challenger.
\item 
The challenger accepts if $x=x'$, and rejects if $x\neq x'$.
\end{enumerate}
The lemma says that the probability that the challenger accepts is
at most
$ 2^{-\frac{\xi}{2}\cdot n + 2^{-n}}$,
where $\xi\coloneqq-\log(\frac{1}{2}+\frac{1}{2\sqrt{2}})>0.22$.
By considering the above $\eta$ as the output of $\cP_1^*$,
and taking $n=\lceil9.1m_2\rceil$,
we can show that the probability that the verifier accepts in 
the information-theoretically-secure PoQM is 
$2^{-\frac{\xi}{2}\lceil9.1m_2\rceil+2^{-\lceil9.1m_2\rceil}}\le 2^{-m_2}\times 2^{-0.001m_2+2^{-9.1m_2}}=2^{-m_2}\times\negl$.
(Note that $m_2(\secp)=\omega(\log(\secp))$ by assumption.)

The second issue that both the verifier and the channel are quantum in the information-theoretically-secure PoQM is solved
by using verifiable remote state preparations (RSPs)~\cite{FOCS:GheVid19,ITCS:Zhang25}. 
An RSP is a two party protocol between a PPT sender and a QPT receiver. 
They interact over a classical channel. The receiver outputs a
random BB84 state, and the sender outputs its classical description. 
By using this, we can replace the quantum verifier and the quantum channel with
a PPT verifier and a classical channel.
Because the RSP of \cite{ITCS:Zhang25} requires poly round of communication, and it 
achieves only $1/\poly$-soundness, the final PoQM we obtain is of poly round, and
has only $1/\poly$-soundness.

\paragraph{PoQM from subexponential hardness of LWE.}
The above construction requires polynomial rounds of communication and achieves only $1/\poly$-soundness. 
To complement this, for any polynomially bounded function $m_2:\mathbb{N}\to\mathbb{N}$, we next construct four-round $(1-\negl,\negl,m_1,m_2)$-PoQM with some polynomial $m_1$ from subexponential hardness of LWE. 

The construction is based on
$1$-of-$2^k$ puzzles \cite{ITCS:LLQ22}. $1$-of-$2$ puzzles were first introduced in \cite{AFT:RadSat19} to study \emph{semi-quantum money},
which is a variant of quantum money that can be minted and verified classically. 
\cite{ITCS:LLQ22} extended $1$-of-$2$ puzzles to 1-of-$2^k$ puzzles to construct
classically verifiable position verification. 
A $1$-of-$2^k$ puzzle consists of four algorithms $(\mathsf{KeyGen, Obligate, Solve, Ver})$. $\mathsf{KeyGen}$ is a PPT algorithm that takes the security parameter $1^\secp$ as input and outputs a public key $\mathsf{pk}$ and a secret key $\mathsf{sk}$. $\mathsf{Obligate}$ is a QPT algorithm that takes $\mathsf{pk}$ as input and outputs a bit string $y$ and a quantum state $\rho$. $\mathsf{Solve}$ is a QPT algorithm that takes $\mathsf{pk},y,\rho$ and a randomly chosen $k$-bit string $\mathsf{ch}$ as input and outputs a classical answer $\mathsf{ans}$. $\mathsf{Ver}$ is a polynomial-time classical deterministic algorithm that takes $\mathsf{sk},y,\mathsf{ch}$ and $\mathsf{ans}$ as input and outputs $\top$ or $\bot$.
 Completeness is that $\mathsf{Ver}$ outputs $\top$ with probability at least $1-\negl(\secp)$. 
 In order to define soundness, we consider the following security game between a set $(\cA,\cB,\cC)$ of adversaries and a challenger $\mathsf{Chal}$.

\begin{enumerate}
        \item $\mathsf{Chal}$ runs $(\mathsf{pk,sk})\gets\mathsf{KeyGen}(1^\secp)$.
        \item $\cA$ receives the public key $\mathsf{pk}$ and outputs a bit string $y$ and a quantum state $\sigma_{\mathbf{BC}}$ over two registers $\mathbf{B}$ and $\mathbf{C}$. 
        \item
        $\cA$ sends $y$ to $\mathsf{Chal}$. 
        $\cA$ sends the register $\mathbf{B}$ to $\cB$.
        $\cA$ sends the register $\mathbf{C}$ to $\cC$. 
        \item $\mathsf{Chal}$ samples $\mathsf{ch}\gets\bit^{k(\secp)}$ and sends $\mathsf{ch}$ to both $\cB$ and $\cC$.
        \item $\cB$ outputs an answer $\mathsf{ans}_\cB$, and sends it to $\mathsf{Chal}$.
        $\cC$ outputs an answer $\mathsf{ans}_\cC$, and sends it to $\mathsf{Chal}$.
        \item $\mathsf{Chal}$ outputs $\top$ if and only if
        \begin{align}
            \mathsf{Ver}(\mathsf{sk},y,\mathsf{ch},\mathsf{ans}_\cB)=\top \land \mathsf{Ver}(\mathsf{sk},y,\mathsf{ch},\mathsf{ans}_\cC)=\top.
        \end{align}
\end{enumerate}
With this security game, we define $c$-soundness as follows: for any set $(\cA,\cB,\cC)$ of non-uniform QPT adversaries, 
\begin{align}
\Pr[\top\gets\mathsf{Chal}] \le2^{-k(\secp)}+\negl(2^{\secp^c}). 
\end{align}

We want to construct a four-round $(1-\negl,\negl,m_1,m_2)$-PoQM.
Let $c>0$ be any constant such that 
$m_2(\secp)=O(\secp^c)$.
Set $k(\secp)=\omega(\secp^c)$.
We construct a four-round $(1-\negl,\negl,m_1,m_2)$-PoQM from 1-of-$2^k$ puzzles with $c$-soundness as follows.
\begin{itemize}
    \item 
{\bf Initialization phase.} 
$\cV_1$ runs $(\mathsf{pk,sk})\gets\mathsf{KeyGen}(1^\secp)$ and sends $\mathsf{pk}$ to $\cP_1$. 
$\cP_1$ runs $(y,\rho)\gets\mathsf{Obligate}(\mathsf{pk})$ and sends $y$ to $\cV_1$.
$\cV_1$ outputs $(\mathsf{sk},y)$, and $\cP_1$ outputs $(\mathsf{pk},y,\rho)$. 
\item
{\bf Execution phase.} 
$\cV_2$ takes $(\mathsf{sk},y)$ as input, and $\cP_2$ takes $(\mathsf{pk},y,\rho)$ as input. 
$\cV_2$ samples random $k$-bit string $\mathsf{ch}$ and sends it to $\cP_2$. 
$\cP_2$ runs $\mathsf{ans}\gets\mathsf{Solve}(\mathsf{pk},y,\rho,\mathsf{ch})$, and sends $\mathsf{ans}$ to $\cV_2$. 
$\cV_2$ runs $\top/\bot\gets\mathsf{Ver}(\mathsf{sk},y,\mathsf{ch},\mathsf{ans})$ and outputs the output. 
\end{itemize}

Thus constructed PoQM is 
$(1-\negl,\epsilon,m_1,0)$-PoQM with $\epsilon(\secp)=(2^{-k(\secp)}+\negl(2^{\secp^c}))^\frac{1}{2}$, where $m_1(\secp)$ denotes the length of the output quantum state $\rho$ of $\mathsf{Obligate}$.
The reason is as follows.
Assume that it is not $(\epsilon,0)$-sound.
Then,
there exists a pair $(\cP_1^*,\cP_2^*)$ of QPT adversaries such that $\cP_1^*$ outputs only a classical bit string $s$,
$\cP_2^*$ takes only the classical bit string as input, and $\cV_2$ outputs $\top$ with probability at least $\epsilon$.
From such $(\cP_1^*,\cP_2^*)$, we can construct a set $(\cA,\cB,\cC)$ of adversaries for the 1-of-$2^k$ puzzle whose winning probability is
strictly larger than $\epsilon^2$ as follows:
Given $\pk$, $\cA$ runs $s\gets\cP_1^*(\pk)$, sends $s$ to both $\cB$ and $\cC$. 
$\cB$ and $\cC$ run $\ans\gets\cP_2^*(s,\mathsf{ch})$ and send $\ans$ to $\mathsf{Chal}$, respectively.
Because $\epsilon^2(\secp)=((2^{-k(\secp)}+\negl(2^{\secp^c}))^\frac{1}{2})^2= 2^{-k(\secp)}+\negl(2^{\secp^c})$, $c$-soundness is broken.

Assuming subexponential hardness of LWE, for any constant $c>0$ and for any polynomial $k$, there exist $1$-of-$2^k$ puzzles with $c$-soundness~\cite{ITCS:LLQ22}.

Finally, 
by using the lemma of \cite{C:BonZha13},
thus constructed $(1-\negl,\epsilon,m_1,0)$-PoQM is
$(1-\negl,2^{m_2}\epsilon,m_1,m_2)$-PoQM.
Because $2^{m_2(\secp)}\epsilon(\secp)=2^{O(\secp^c)}\negl(2^{\secp^c})=\negl(2^{\secp^c})=\negl(\secp)$, we finally obtain
$(1-\negl,\negl,m_1,m_2)$-PoQM.

\paragraph{PoQM imply OWPuzzs.}
As a lowerbound of PoQM, we show that PoQM imply OWPuzzs. 
Because OWPuzzs are existentially equivalent to state puzzles (StatePuzzs)~\cite{STOC:KhuTom25}, we actually construct
StatePuzzs from the PoQM.
A StatePuzz is a QPT algorithm $\mathsf{Samp}$ that takes $1^\secp$ as input and outputs a pair $(s,\ket{\psi_s})$ of 
a bit string $s$ and a pure quantum state $\ket{\psi_s}$. The security is that given $s$, no QPT algorithm can output
a quantum state that is close to $\ket{\psi_s}$.
Our construction 
of StatePuzzs from the PoQM is as follows:
The classical puzzle $s$ of the StatePuzz is the classical output $\state$ of $\cP_1$ 
and the transcript $\tau$ of the initialization phase.
The quantum answer $\ket{\psi_s}$ of the StatePuzz is the quantum output $\sigma_\state$ of $\cP_1$.
Intuitively, if thus constructed StatePuzz is not secure, there exists a QPT adversary $\cA$ that, given $s=(\state,\tau)$, can output a quantum state
$\rho$ that is close to $\sigma_\state$. Then, the following adversary $(\cP_1^*,\cP_2^*)$ can break the soundness of the PoQM:
$\cP_1^*$ outputs $s=(\state,\tau)$. $\cP_2^*$ takes $s$ as input, 
runs $\rho\gets\cA(s)$, and runs $\cP_2$ on $(\state,\rho)$.

\paragraph{Extractable PoQM imply QCCC KE.}
We also show that, extractable PoQM imply QCCC KE. 
Our construction of QCCC KE is as follows:
Alice and Bob run the initialization phase of the PoQM: 
Alice runs $\cV_1$ and Bob runs $\cP_1$.
Alice and Bob next run the execution phase of the PoQM: 
Alice runs $\cV_2$ and Bob runs $\cP_2$.
However, Bob does not send the last message of $\cP_2$ to Alice.
Because of the extractable property, Alice can compute Bob's last message with high probability. 
This last message is used as the shared key, which shows the correctness of the KE.
To show the security, assume that there is a QPT adversary Eve who, given the transcript of the interaction between Alice and Bob,
can compute Alice's key. By using such Eve, we can construct an adversary $(\cP_1^*,\cP_2^*)$ that breaks the soundness of the extractable PoQM as follows:
$\cP_1^*$ outputs $s=(\state,\tau)$, where $\tau$ is the transcript of the initialization phase. 
$\cP_2^*$ takes $s$ as input, runs Eve on input $s$ and $\cV_2$'s message, and outputs the output of Eve.

\subsection{Related Works}
\cite{cryptoeprint:2025/970} constructed information-theoretically-sound PoQ that are sound against \emph{classical}-memory-bounded classical provers.
They also constructed information-theoretically-secure claw generation that are secure against quantum-memory-bounded quantum provers.
\cite{DBLP:journals/corr/abs-2205-04656} introduced classical verification of quantum depth. 
\cite{C:BGKPV23} constructed
a test of qubit protocol.
\cite{chao2020quantumdimensiontestusing} constructed information-theoretically-sound quantum dimension test.
Although these results share similar motivations, they would not be able to
classically verify that a prover has possessed a certain amount of quantum memory during a specified period of time.

\cite{Maxfield} introduced quantum proofs of space, which are a quantum variant of proofs of space~\cite{C:DFKP15}.
Our definition of PoQM is based on them.
The verifier or the channel of \cite{Maxfield} is, however, quantum in their definitions and constructions.

\if0
\subsection{Open Problems}
\mor{Can we get PKE from PoQM?}
\fi

\section{Preliminaries}

\subsection{Basic Notations}
We use standard notations of quantum computing and cryptography. 
All polynomials in this paper have coefficients in $\mathbb{N}$. 
We use $\secp$ as the security parameter. For a  bit string $x$, $x_i$ denotes the $i$th bit of $x$. For two bit strings $x$ and $y$, $x\|y$ means the concatenation of them. $[n]$ means the set $\{1,2,\ldots,n\}$. $\lceil x\rceil$ means the minimum integer greater than or equal to $x$. $\negl$ is a negligible function, and $\poly$ is a polynomial. PPT stands for (classical) probabilistic polynomial-time and QPT stands for quantum polynomial-time. We refer to a non-uniform QPT algorithm as a QPT algorithm with polynomial-size quantum advice. 
For a set $S$, $x\gets S$ means that an element $x$ is chosen from $S$ uniformly at random.
For an algorithm $\cA$, $y\gets \cA(x)$ means that the algorithm $\cA$ outputs $y$ on input $x$. 
For an algorithm $\cA$ that takes a quantum state as input and outputs a quantum state, $\cA(\rho)$
often means the output state of $\cA$ on input $\rho$.
For two density matrices $\rho$ and $\sigma$, $\TD(\rho,\sigma)$ is their trace distance. 
For two interactive algorithms $\cA$ and $\cB$ over a classical channel, 
$\rho_{\mathbf{A},\mathbf{B}}\gets\langle\cA(x),\cB(y)\rangle$ means that 
$\cA$ and $\cB$ are executed on input $x$ and $y$, respectively, and
the final output state is a quantum state $\rho_{\mathbf{A},\mathbf{B}}$ over two registers $\mathbf{A}$ and $\mathbf{B}$,
where
$\cA$'s output register is $\mathbf{A}$ and $\cB$'s output register is $\mathbf{B}$. 

\subsection{Lemmata}
Here we explain two lemmata that we will use later.
\begin{lemma}[\cite{C:BonZha13}, Lemma 1] \label{lemma:bz}
    Let $\cA$ be a quantum algorithm that outputs a classical bit string. Let $\cA'$ be another quantum algorithm obtained from $\cA$ by pausing $\cA$ at an arbitrary stage of the execution, performing a measurement that obtains one of $k$ outcomes, and then resuming $\cA$. Then $\Pr[x\gets\cA']\ge\Pr[x\gets\cA]/k$ for any 
    bit string $x$.
\end{lemma}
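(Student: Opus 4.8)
The plan is to derive this from the \emph{pinching inequality} of matrix analysis. First I would fix the stage at which the extra measurement is inserted and decompose $\cA$ accordingly: let $\rho$ be the (possibly mixed) state of $\cA$'s registers immediately before that stage, let $\Phi$ be the quantum channel describing all of $\cA$'s subsequent operations up to, but not including, reading off the classical output, and let $\{\Pi_y\}_y$ be the final read-off measurement. Then $\Pr[x\gets\cA]=\mathrm{Tr}[\Pi_x\Phi(\rho)]$. Writing the inserted $k$-outcome measurement as a projective measurement $\{P_i\}_{i=1}^{k}$ with $\sum_i P_i=I$ (a general $k$-outcome measurement is brought to this form by appending a fixed-state ancilla and applying a unitary, which only alters $\Phi$ and keeps the number of outcomes equal to $k$), the modified algorithm $\cA'$ replaces $\rho$ by its pinching $\mathcal{E}(\rho):=\sum_i P_i\rho P_i$ before applying $\Phi$, so $\Pr[x\gets\cA']=\mathrm{Tr}[\Pi_x\Phi(\mathcal{E}(\rho))]$.

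Next I would transfer the pinching to the ``effect'' side. Setting $F:=\Phi^{\dagger}(\Pi_x)$ (the adjoint of $\Phi$ applied to $\Pi_x$), complete positivity and trace-preservation of $\Phi$ together with $0\le\Pi_x\le I$ give $0\le F\le I$, and moving $\Phi$ across the trace yields $\Pr[x\gets\cA]=\mathrm{Tr}[F\rho]$ and, using that the pinching map is self-adjoint for the Hilbert--Schmidt inner product, $\Pr[x\gets\cA']=\mathrm{Tr}[F\,\mathcal{E}(\rho)]=\mathrm{Tr}[\mathcal{E}(F)\,\rho]$. Hence it suffices to prove the operator inequality $\mathcal{E}(F)=\sum_i P_iFP_i\geq\tfrac1k F$; tracing this against the positive operator $\rho$ then gives $\Pr[x\gets\cA']\geq\tfrac1k\Pr[x\gets\cA]$, which is the claim.

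For the pinching inequality itself I would use the twirl representation of the pinching map: with $\omega$ a primitive $k$-th root of unity and $U:=\sum_{l=1}^{k}\omega^{l}P_l$ (a unitary, since the $P_l$ are orthogonal projectors summing to $I$), a one-line character-orthogonality computation shows $\tfrac1k\sum_{j=0}^{k-1}U^{j}F\,U^{-j}=\sum_{l}P_lFP_l=\mathcal{E}(F)$. Since $F\ge0$ forces $U^{j}FU^{-j}\ge0$ for every $j$, discarding all summands except $j=0$ gives $\mathcal{E}(F)\ge\tfrac1k F$, as needed.

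I do not expect any genuine obstacle here: the argument is essentially the standard Boneh--Zhandry proof, and the only points that merit care are (i) the reduction of an arbitrary $k$-outcome measurement to a projective one without increasing the outcome count, and (ii) keeping track of the adjoints when moving the pinching across the trace. If one wishes to avoid the dilation in (i) altogether, it is enough to note that in every application in this paper the inserted measurement is already a projective (computational-basis) measurement, so the pinching inequality applies to it directly.
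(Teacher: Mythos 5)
The paper itself offers no proof of this lemma; it is imported verbatim from Boneh--Zhandry, whose original argument purifies the intermediate state, writes it as $\sum_{i=1}^{k}\alpha_i|\psi_i\rangle$ according to the $k$ projective outcomes, and applies Cauchy--Schwarz to the amplitudes. Your route through the pinching inequality is correct and is essentially the mixed-state, operator-algebraic version of the same idea: the decomposition into $\rho$, $\Phi$, $\Pi_x$ is right, moving the pinching onto the effect $F=\Phi^{\dagger}(\Pi_x)$ via self-adjointness of $\mathcal{E}$ in the Hilbert--Schmidt inner product is right, and the twirl proof of $\sum_i P_iFP_i\ge\frac{1}{k}F$ via character orthogonality is the standard one. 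What your version buys is that no purification of $\cA$'s intermediate state is needed and the argument applies verbatim when the continuation of $\cA$ is an arbitrary channel rather than a unitary followed by a measurement.

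The one step I would not let stand as written is the parenthetical reduction of a general $k$-outcome measurement to a projective one by dilation. Appending an ancilla and applying a unitary $V$ changes the state from which the \emph{unmeasured} algorithm $\cA$ continues, not only $\Phi$: the pinching inequality on the ancilla compares the pinched state to $V(\rho\otimes|0\rangle\langle 0|V^{\dagger}$'s system marginal, which is $\sum_i\mathcal{M}_i(\rho)$ rather than $\rho$ for a general instrument $\{\mathcal{M}_i\}$. And indeed the lemma is false for arbitrary $k$-outcome instruments: take $k=2$ and $\mathcal{M}_1(\rho)=\mathcal{M}_2(\rho)=\frac{1}{2}U\rho U^{\dagger}$ for a unitary $U$ that rotates $\rho$ away from the support of $F$; then $\cA'$ resumes from $U\rho U^{\dagger}$ and the bound fails. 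So the statement must be read, as Boneh--Zhandry prove it, for projective (e.g.\ partial computational-basis) measurements. Your own fallback observation --- that every invocation in this paper inserts a computational-basis measurement, to which the pinching inequality applies directly --- is exactly the right fix, and with that restriction the proof is complete. (I also note one typo-level slip above: the expression should of course be $V(\rho\otimes|0\rangle\langle 0|)V^{\dagger}$ with the projector closed before $V^{\dagger}$.)
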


\begin{lemma}[LOCC Leakage Property for BB84 States~\cite{cryptoeprint:2024/1876}, Theorem 10] \label{lem:LOCC_leak}
    Let us consider the following game between a (not-necessarily-polynomial-time) adversary $\cA$ and a challenger $\cC$:
    \begin{enumerate}
        \item $\cC$ samples $x,\theta\gets\{0,1\}^\secp$ and outputs 
        $|R_0\rangle\coloneqq\bigotimes_{i=1}^\secp H^{\theta_i}\ket{x_i}$.
        Here, $H$ is the Hadamard operator.
        \item 
        Let $N:\mathbb{N}\to\mathbb{N}$ be a function.
        For $i=1,2,...,N(\secp)$, $\cA$ and $\cC$ do the following.
        \begin{enumerate}
            \item 
        $\cA$ sends $\cC$ a classical description $E_i$ of a (not-necessarily-polynomial-time) quantum algorithm
        which takes a quantum state as input and outputs a classical bit string and a pure quantum state.
        \item
        $\cC$ runs the algorithm $E_i$ on input $\ket{R_{i-1}}$. 
        Let $(L_i,\ket{R_i})$ be the output, where $L_i$ is a classical bit string.
        \item 
        $\cC$ sends $L_i$ to $\cA$.
        \end{enumerate}
        \item $\cC$ sends $\theta$ to $\cA$.
        \item $\cA$ outputs $x'$ and sends it to $\cC$.
        \item $\cC$ outputs $\top$ if $x'=x$, and otherwise it outputs $\bot$.
    \end{enumerate}
    Then, for all sufficiently large $\secp\in\mathbb{N}$ and for any (not-necessarily-polynomial-time) adversary $\cA$,
    \begin{align}
        \Pr[\top\gets\cC]\le 2^{-\frac{\xi}{2}\cdot\secp + 2^{-\secp}},
    \end{align}
    where $\xi\coloneqq-\log(\frac{1}{2}+\frac{1}{2\sqrt{2}})>0.22$.
\end{lemma}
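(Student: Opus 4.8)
The plan is to strip the game down to a single measurement, then to a purely geometric inequality about one vector, which I would prove via a two-copy (swap-test-type) operator bound. First I would collapse the adaptive leakage: for any fixed adversary the maps $E_1,\dots,E_{N}$ (each $E_i$ a function of the observed prefix $L_{<i}$) compose into one quantum instrument on the $\secp$-qubit input $\ket{R_0}$, and discarding its quantum output register, which $\cA$ never receives, turns it into a quantum-to-classical channel, i.e.\ a POVM $\{M_L\}_L$ with $\Pr[\text{transcript}=L\mid\ket{R_0}]=\langle R_0(x,\theta)|M_L|R_0(x,\theta)\rangle$ where $\ket{R_0(x,\theta)}\coloneqq\bigotimes_i H^{\theta_i}\ket{x_i}$. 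Since the adversary's guess is then a function of $(L,\theta)$, averaging over uniform $(x,\theta)$ with the maximum-likelihood guess for each $(L,\theta)$ gives $\Pr[\top\gets\cC]\le 4^{-\secp}\sum_\theta\sum_L\max_x\langle R_0(x,\theta)|M_L|R_0(x,\theta)\rangle$. Refining each $M_L$ into rank-one pieces $q_L\ket{\phi_L}\!\bra{\phi_L}$ only increases this bound (because $\max_x\sum_j a^{(j)}_x\le\sum_j\max_x a^{(j)}_x$), and taking traces in $\sum_L q_L\ket{\phi_L}\!\bra{\phi_L}=I$ gives $\sum_L q_L=2^{\secp}$, so $\Pr[\top\gets\cC]\le 2^{-\secp}\max_{\ket\psi}S(\ket\psi)$, where $S(\ket\psi)\coloneqq\sum_{\theta\in\{0,1\}^{\secp}}\max_{x\in\{0,1\}^{\secp}}|\langle R_0(x,\theta)|\psi\rangle|^2$ and the maximum is over unit $\ket\psi\in(\mathbb{C}^2)^{\otimes\secp}$. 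It therefore suffices to show $S(\ket\psi)\le 3^{\secp/2}$ for all such $\ket\psi$, as this yields $\Pr[\top\gets\cC]\le(\sqrt3/2)^{\secp}=2^{-(1-\frac12\log_2 3)\secp}\le 2^{-\frac\xi2\secp+2^{-\secp}}$, the last step being the numerical inequality $1-\frac12\log_2 3>\frac\xi2$, which holds because $\xi>0.22$.

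To bound $S(\ket\psi)$ I would apply $\max_x a_x\le(\sum_x a_x^2)^{1/2}$ followed by Cauchy--Schwarz over the $2^{\secp}$ values of $\theta$, obtaining $S(\ket\psi)\le\sqrt{2^{\secp}}\,\bigl(\sum_{\theta,x}|\langle R_0(x,\theta)|\psi\rangle|^4\bigr)^{1/2}$. Regrouping the $2\secp$ qubits of $|\psi\rangle^{\otimes2}$ into $\secp$ qubit-pairs, one has $\sum_{\theta,x}|\langle R_0(x,\theta)|\psi\rangle|^4=\langle\psi|^{\otimes2}\,T^{\otimes\secp}\,|\psi\rangle^{\otimes2}$, where $T=(\ket0\!\bra0)^{\otimes2}+(\ket1\!\bra1)^{\otimes2}+(\ket{+}\!\bra{+})^{\otimes2}+(\ket{-}\!\bra{-})^{\otimes2}$ acts on one qubit-pair, and a short spectral computation gives $T=\Pi_{\mathsf{sym}}+\ket{\Phi^+}\!\bra{\Phi^+}$, with $\Pi_{\mathsf{sym}}$ the projector onto the symmetric subspace of $\mathbb{C}^2\otimes\mathbb{C}^2$ and $\ket{\Phi^+}$ the Bell state. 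Expanding $T^{\otimes\secp}$ over subsets $S\subseteq[\secp]$ (putting $\ket{\Phi^+}\!\bra{\Phi^+}$ on $i\in S$ and using $\Pi_{\mathsf{sym}}\preceq I$ on $i\notin S$), each term is at most $\|(\bigotimes_{i\in S}\langle\Phi^+|^{(i)})|\psi\rangle^{\otimes2}\|^2$, the bras contracting only the $S$-indexed qubit-pairs; a direct computation identifies this with $2^{-|S|}\|M_S^{\mathsf T}M_S\|_F^2$, where $M_S$ is the matricization of the amplitude tensor of $|\psi\rangle$ along the bipartition $(S,\bar S)$, so $\|M_S^{\mathsf T}M_S\|_F\le\|M_S\|_{\mathrm{op}}\|M_S\|_F\le\|M_S\|_F^2=1$. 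Summing over $S$ then gives $\langle\psi|^{\otimes2}T^{\otimes\secp}|\psi\rangle^{\otimes2}\le\sum_{S\subseteq[\secp]}2^{-|S|}=(3/2)^{\secp}$, hence $S(\ket\psi)\le\sqrt{2^{\secp}(3/2)^{\secp}}=3^{\secp/2}$, as needed.

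The step I expect to be the main obstacle is the two-copy estimate $\langle\psi|^{\otimes2}T^{\otimes\secp}|\psi\rangle^{\otimes2}\le(3/2)^{\secp}$. A plain operator-norm bound is useless here: $T$ has top eigenvalue $2$ with eigenvector $\ket{\Phi^+}$, so $\|T^{\otimes\secp}\|=2^{\secp}\gg(3/2)^{\secp}$, and the extremal vector $\ket{\Phi^+}^{\otimes\secp}$ is not of the required ``same state in both copies'' form $|\psi\rangle\otimes|\psi\rangle$. Making that constraint count is precisely the purpose of the decomposition $T=\Pi_{\mathsf{sym}}+\ket{\Phi^+}\!\bra{\Phi^+}$, which converts each subset contribution into a genuinely entanglement-limited quantity (a Schatten-norm inequality for the matricizations of $|\psi\rangle$). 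It is also where the factor of two in the exponent is lost relative to the optimal Breidbart bound $2^{-\xi\secp}$: for a product $|\psi\rangle$ one could instead tensorize an exact single-qubit POVM analysis and recover exponent $\xi$, but since the global argument already beats the stated bound with room to spare, I would not pursue that refinement.
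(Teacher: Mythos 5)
The paper never proves this lemma --- it is imported as a black box from \cite{cryptoeprint:2024/1876} --- so there is no internal proof to compare against; what you have written is an independent, self-contained proof, and I have checked that it is correct. Indeed it establishes the strictly stronger bound $\Pr[\top\gets\cC]\le(\sqrt{3}/2)^{\secp}=2^{-(1-\frac{1}{2}\log_2 3)\secp}$, with exponent $\approx 0.2075$ in place of $\xi/2\approx 0.114$. The individual steps all hold up: the collapse of the adaptive leakage to a single POVM $\{M_L\}$ is legitimate because the adversary's view is purely classical, so its guess is WLOG a deterministic function of $(L,\theta)$; the rank-one refinement with $\sum_L q_L=\Tr[I]=2^{\secp}$ is right; the identity $\sum_{x,\theta}|\langle R_0(x,\theta)|\psi\rangle|^4=\langle\psi|^{\otimes 2}T^{\otimes\secp}|\psi\rangle^{\otimes 2}$ and the decomposition $T=\Pi_{\mathsf{sym}}+\ket{\Phi^+}\bra{\Phi^+}$ check out (explicitly $T=2\ket{\Phi^+}\bra{\Phi^+}+\ket{\Phi^-}\bra{\Phi^-}+\ket{\Psi^+}\bra{\Psi^+}$); and the contraction identity $\bigl\|(\bigotimes_{i\in S}\bra{\Phi^+}^{(i)})\ket{\psi}^{\otimes 2}\bigr\|^2=2^{-|S|}\|M_S^{\mathsf T}M_S\|_F^2\le 2^{-|S|}$ is exactly where the ``same state in both copies'' constraint does its work, giving $\sum_{S}2^{-|S|}=(3/2)^{\secp}$ and hence $S(\ket\psi)\le 3^{\secp/2}$. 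One small slip: your justification of the last numerical step is backwards. To conclude $1-\frac{1}{2}\log_2 3>\frac{\xi}{2}$ you need an \emph{upper} bound on $\xi$, namely $\xi\le 2-\log_2 3\approx 0.415$ (which holds since $\xi=1-\log_2(1+2^{-1/2})\approx 0.228$); the lower bound $\xi>0.22$ is irrelevant here. With that fixed, the argument is complete, and it would be worth recording that your route gives a better constant than the bound the paper actually uses.
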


\subsection{Cryptography}
In this subsection, we explain several cryptographic primitives that we will use.

First, we recall the definition of proofs of quantumness (PoQ) introduced by \cite{JACM:BCMVV21} .
\begin{definition}[Proofs of Quantumness (PoQ)~\cite{JACM:BCMVV21}]
    An $(\alpha,\beta)$-proof of quantumness (PoQ) is a set $(\cV,\cP)$ of interactive algorithms over a classical channel. $\cV$ (verifier) is a PPT algorithm that takes $1^\secp$ as input and outputs $\top$ or $\bot$.  
    $\cP$ (prover) is a QPT algorithm that takes $1^\secp$ as input and outputs nothing. We require the following two properties.
    \paragraph{\bf{$\alpha$-completeness:}} For all sufficiently large $\secp\in\mathbb{N}$,
    \begin{align}
        \Pr[\top\gets \langle\cV(1^\secp),\cP(1^\secp)\rangle] \ge \alpha(\secp).
    \end{align}

    \paragraph{\bf{$\beta$-soundness:}} For any non-uniform PPT adversary $\cP^*$ and for all sufficiently large $\secp\in\mathbb{N}$, 
    \begin{align}
        \Pr[\top\gets \langle\cV(1^\secp),\cP^*(1^\secp)\rangle] \le \beta(\secp).
    \end{align}
\end{definition}

Next, we give the definition of state puzzles (StatePuzzs).

\begin{definition}[State Puzzles~\cite{STOC:KhuTom25}]
    Let $\epsilon:\mathbb{N}\to [0,1]$ be a function.
    An $\epsilon$-StatePuzz is a QPT algorithm $\Samp$ that takes $1^\secp$ as input and outputs a pair $(s,\ket{\psi_s})$ of a bit string $s$ and a 
    pure quantum state $\ket{\psi_s}$ satisfying the following property.
    \paragraph{\bf{Security:}} For any non-uniform QPT adversary $\cA$ that takes $s$ as input and outputs a quantum state, and for all sufficiently large $\secp\in\mathbb{N}$,
    \begin{align}
        \underset{(s,\ket{\psi_s})\gets \Samp(1^\secp)}
        {\mathbb{E}}
        \langle\psi_s|\cA(s)|\psi_s\rangle\le 1- \epsilon(\secp).
    \end{align}
    If $\epsilon(\secp)=1-\negl(\secp)$, we just call it a state puzzle.
\end{definition}

The following lemma is implicitly shown in \cite{STOC:KhuTom25}.

\begin{lemma} \label{thm:StatePuzz}
    Let $p$ be any polynomial. If $1/p$-StatePuzzs exist, then StatePuzzs exist.
\end{lemma}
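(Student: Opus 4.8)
The plan is to amplify through the search-problem formulation---one-way puzzles---rather than to amplify the state puzzle directly. From a $1/p$-StatePuzz I would first build a \emph{weak} one-way puzzle, then amplify it to a standard one-way puzzle by parallel repetition, and finally convert it back to a StatePuzz, using the two directions of the StatePuzz--OWPuzz equivalence of \cite{STOC:KhuTom25} together with hardness amplification for one-way puzzles.

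In more detail: (i) Apply the ``StatePuzz $\Rightarrow$ OWPuzz'' reduction of \cite{STOC:KhuTom25} to the given $1/p$-StatePuzz $\Samp$, checking that it is quantitatively robust---namely, that it carries a $1/p$-StatePuzz to a $(1-1/p')$-secure (``weak'') OWPuzz for some polynomial $p'$ (depending polynomially on $p$ and on the dimension of $\ket{\psi_s}$). Since in that construction the OWPuzz answer is obtained by measuring $\ket{\psi_s}$, a QPT adversary violating the OWPuzz security by $1/p'$ can be turned into a QPT adversary that reconstructs $\ket{\psi_s}$ from $s$ with expected fidelity above $1-1/p$, contradicting $1/p$-StatePuzz security; this is the step where the polynomial loss has to be tracked. (ii) Amplify the weak OWPuzz: for a large enough polynomial $t=t(\secp)$, let $\Samp'$ run it $t$ times independently, output $(\puzz_1,\dots,\puzz_t)$ with answer $(\ans_1,\dots,\ans_t)$, and verify coordinatewise. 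The security proof is in the style of Yao's hardness amplification for one-way functions: given a QPT adversary winning the $t$-fold game with non-negligible probability $\epsilon$, embed a single challenge $\puzz^*$ at a uniformly random coordinate, fill the remaining coordinates with self-generated pairs $(\puzz_j,\ans_j)$, run the adversary, and iterate $\poly(\secp)/\epsilon$ times; a case analysis over the challenge then produces a single-fold adversary succeeding with probability above $1-1/p'$, a contradiction. Completeness is preserved because $1-\negl$ survives polynomially many parallel copies. (iii) Apply the ``OWPuzz $\Rightarrow$ StatePuzz'' reduction of \cite{STOC:KhuTom25} to the resulting $\negl$-secure OWPuzz, obtaining a $(1-\negl)$-StatePuzz, i.e.,\ a StatePuzz.

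The main obstacle---and the reason for routing through one-way puzzles---is that plain tensor-product repetition of the StatePuzz does \emph{not} amplify. If $\Samp'$ output $\ket{\psi_{\vec s}}\coloneqq\bigotimes_{i=1}^t\ket{\psi_{s_i}}$, then from $\langle\psi_{\vec s}|\rho|\psi_{\vec s}\rangle\le\langle\psi_{s_i}|\rho_i|\psi_{s_i}\rangle$ (with $\rho_i$ the $i$-th marginal of an adversary's output $\rho$, the inequality holding for each coordinate $i$), any QPT adversary achieving expected fidelity $\mu$ on the $t$-fold state already achieves expected fidelity at least $\mu$ against a single copy---just generate the other coordinates itself and output the relevant marginal---so the $t$-fold StatePuzz inherits exactly the single-copy bound $1-1/p$, which is not negligible. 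A Yao-style argument applied directly to the StatePuzz is also awkward, because ``the adversary reconstructed $\ket{\psi_{s^*}}$'' is not a predicate a QPT reduction can test (it holds only one copy of the target state), whereas in the search formulation the reduction can test the coordinates it generated itself. Hence step (ii)---making the one-way-puzzle amplification go through despite $\Ver$ being unbounded---is the delicate part; the remainder is routine bookkeeping of the polynomial security losses across the two reductions.
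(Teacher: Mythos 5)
The paper gives no self-contained proof of this lemma; it simply asserts that the statement is ``implicitly shown in'' \cite{STOC:KhuTom25}, and the intended argument there is exactly the route you describe: pass to one-way puzzles via the StatePuzz--OWPuzz equivalence, amplify there, and come back. So your decomposition into steps (i)--(iii) is the right one, and your observation that naive tensor-product repetition of the state puzzle fails (via the marginal-fidelity argument) is correct and well made.

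However, step (ii) as you sketch it has a genuine gap, and it is precisely the gap you flag but do not actually close. Yao-style direct-product amplification requires the reduction, after embedding the challenge $\puzz^*$ at a random coordinate, to \emph{verify} whether the candidate answer for that coordinate is correct, so that it can ``repeat $\poly(\secp)/\epsilon$ times'' and stop upon success; this repeat-until-verified loop is what boosts the per-instance success probability from roughly $\epsilon$ to the $1-1/p'$ needed to contradict weak security. For one-way puzzles $\Ver$ is unbounded, so the reduction cannot perform this check, and a single run of the embedding only yields success probability $\epsilon$, which contradicts nothing. Your proposed substitute---testing the coordinates the reduction generated itself---does not repair this: first, knowing the sampled $\ans_j$ does not let you decide whether a \emph{different} candidate $\ans_j'$ is also accepted by the unbounded $\Ver$; second, and more fundamentally, conditioning on the self-generated coordinates being answered correctly does not certify the challenge coordinate, since the adversary's per-coordinate successes can be adversarially correlated. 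Hardness amplification for one-way puzzles is a genuinely non-trivial theorem proved by Chung, Goldin, and Gray (CRYPTO 2024) with an argument quite different from Yao's (exploiting non-uniformity and an information-theoretically defined partition of puzzles into hard and easy, rather than an efficient repeat-and-verify loop). You should invoke that result as a black box---as the cited works do---rather than claim the one-way-function argument adapts. With that substitution, and after verifying the quantitative robustness of the two StatePuzz--OWPuzz reductions as you indicate in steps (i) and (iii), the proof goes through.
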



    

We also define $1$-of-$2^k$ puzzles~\cite{ITCS:LLQ22}.

\begin{definition}[$1$-of-$2^k$ puzzles~\cite{ITCS:LLQ22}]
    Let $k$ be a polynomial. A $1$-of-$2^k$ puzzle is a set $(\mathsf{KeyGen, Obligate, Solve, Ver})$ of algorithms with the following syntax.
    \begin{itemize}
        \item $\mathsf{KeyGen}(1^\secp)\to(\mathsf{pk,sk}):$ A PPT algorithm that takes $1^\secp$ as input and outputs a public key $\mathsf{pk}$ and a secret key $\mathsf{sk}$.
        \item $\mathsf{Obligate}(\mathsf{pk})\to(y,\rho):$ A QPT algorithm that takes $\mathsf{pk}$ as input and outputs a bit string $y$ 
        and a quantum state $\rho$.
        \item $\mathsf{Solve}(\mathsf{pk},y,\rho,\mathsf{ch})\to\mathsf{ans}:$ A QPT algorithm that takes $\mathsf{pk},y,\rho$ and a challenge $k$-bit string $\mathsf{ch}$ as input and outputs a classical answer $\mathsf{ans}$.
        \item $\mathsf{Ver}(\mathsf{sk},y,\mathsf{ch},\mathsf{ans})\to\top/\bot:$ A polynomial-time classical deterministic algorithm that takes $\mathsf{sk},y,\mathsf{ch}$ and $\mathsf{ans}$ as input and outputs $\top$ or $\bot$.
    \end{itemize}
    We require the following properties.
    \paragraph{Completeness:}
    \begin{align}
        \Pr\qty[\top\gets\mathsf{Ver}(\mathsf{sk},y,\mathsf{ch},\mathsf{ans}):
        \begin{array}{l}
            (\mathsf{pk,sk})\gets\mathsf{KeyGen}(1^\secp) \\
            (y,\rho)\gets\mathsf{Obligate}(\mathsf{pk}) \\
            \mathsf{ch}\gets \{0,1\}^{k(\secp)} \\
            \mathsf{ans}\gets\mathsf{Solve}(\mathsf{pk},y,\rho,\mathsf{ch})
        \end{array}
        ] \ge 1-\negl(\secp).
    \end{align}

    \paragraph{$c$-soundness:}
    Let us consider the following game between a set $(\cA,\cB,\cC)$ of adversaries and a challenger $\mathsf{Chal}$:
    \begin{enumerate}
        \item $\mathsf{Chal}$ runs $(\mathsf{pk,sk})\gets\mathsf{KeyGen}(1^\secp)$.
        \item $\cA$ receives the public key $\mathsf{pk}$, and outputs a bit string $y$ and a quantum state $\sigma_{\mathbf{BC}}$ over two registers $\mathbf{B}$ and $\mathbf{C}$. 
        \item
        $\cA$ sends $y$ to $\mathsf{Chal}$. 
        $\cA$ sends $\cB$ the register $\mathbf{B}$.
        $\cA$ sends $\cC$ the register $\mathbf{C}$. 
        \item $\mathsf{Chal}$ samples $\mathsf{ch}\gets\bit^{k(\secp)}$ and sends $\mathsf{ch}$ to both $\cB$ and $\cC$.
        \item $\cB$ outputs an answer $\mathsf{ans}_\cB$ and sends it to $\mathsf{Chal}$.
        $\cC$ outputs an answer $\mathsf{ans}_\cC$ and sends it to $\mathsf{Chal}$.
        \item $\mathsf{Chal}$ outputs $\top$ if 
        \begin{align}
            \mathsf{Ver}(\mathsf{sk},y,\mathsf{ch},\mathsf{ans}_\cB)=\top \land \mathsf{Ver}(\mathsf{sk},y,\mathsf{ch},\mathsf{ans}_\cC)=\top.
        \end{align}
        Otherwise, $\mathsf{Chal}$ outputs $\bot$.
    \end{enumerate}
   For any 
   set $(\cA,\cB,\cC)$
   of non-uniform QPT adversaries,
    \begin{align}
        \Pr[\top\gets\mathsf{Chal}] \le2^{-k(\secp)}+\negl(2^{\secp^c}). \label{Eq:1of2puzsound}
    \end{align}
\end{definition}

The following lemma is implicitly shown in \cite{ITCS:LLQ22}.\footnote{\cite{ITCS:LLQ22} implicitly showed that, for any $c>0$ , assuming $c$-subexponential hardness of LWE, $1$-of-$2^k$ puzzles with $c$-soundness exist. $c$-subexponential hardness of LWE roughly means that any quantum algorithm running in time $O(2^{\secp^c})$ can distinguish two distributions with probability at most $\negl(2^{\secp^c})$. Let $c'>0$ be any constant. By replacing the security parameter $\secp$ with $\secp'\coloneqq\secp^\frac{c}{c'}$, $c'$-subexponential hardness of LWE can be converted to $c$-subexponential hardness of LWE. Thus, for any $c,c'>0$, assuming $c'$-subexponential hardness of LWE, $1$-of-$2^k$ puzzles with $c$-soundness exist.} 

\begin{lemma} \label{lem:LWEto12puz}
    Assuming the subexponential hardness of LWE, for any $c>0$ and for any polynomial $k$, $1$-of-$2^k$ puzzles with $c$-soundness exist.
\end{lemma}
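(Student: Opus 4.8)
The plan is to obtain the lemma from the construction of \cite{ITCS:LLQ22} together with a purely syntactic reparametrization of the underlying LWE assumption. The first step is to extract from \cite{ITCS:LLQ22} the following parametric statement, which is what they implicitly establish: for every constant $c>0$ and every polynomial $k$, if LWE is \emph{$c$-subexponentially hard} --- i.e.\ every quantum algorithm running in time $2^{O(n^{c})}$ has distinguishing advantage at most $\negl(2^{n^{c}})$ against LWE in dimension $n$ --- then $1$-of-$2^{k}$ puzzles with $c$-soundness exist. Their construction goes through noisy trapdoor claw-free functions from LWE, following the $1$-of-$2$ puzzle of \cite{AFT:RadSat19} and its $k$-fold parallelization; the soundness bound $\Pr[\top\gets\mathsf{Chal}]\le 2^{-k(\secp)}+\negl(2^{\secp^{c}})$ decomposes into an information-theoretic $2^{-k(\secp)}$ term and an LWE-reduction term whose loss is $\negl(2^{\secp^{c}})$ precisely because the reduction runs in time $2^{O(\secp^{c})}$.

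The only gap between that statement and the lemma is the shape of the hypothesis: the lemma assumes the plain \emph{subexponential hardness of LWE}, which by definition supplies $c'$-subexponential hardness of LWE for \emph{some} fixed constant $c'>0$, whereas the \cite{ITCS:LLQ22} analysis wants $c$-subexponential hardness for the \emph{given} target $c$. I would close this gap by instantiating every invocation of LWE inside the \cite{ITCS:LLQ22} construction with dimension $n\coloneqq\lceil d^{c/c'}\rceil$ wherever the construction would otherwise use dimension $d$ (which is polynomially related to the puzzle's security parameter $\secp$). Since $n^{c'}\ge d^{c}$ and $n^{c'}=d^{c}(1+o(1))$, any $2^{O(d^{c})}$-time adversary against the rescaled instance is a $2^{O(n^{c'})}$-time adversary against LWE in dimension $n$, so $c'$-subexponential hardness of LWE alone already caps its advantage by $\negl(2^{n^{c'}})$, which after the substitution is $\negl(2^{\secp^{c}})$ --- i.e.\ the rescaled instance satisfies exactly the $c$-subexponential hardness that the \cite{ITCS:LLQ22} soundness analysis invokes. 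Honest parties now run in time $\poly(d^{c/c'})=\poly(d)$, so the puzzle stays efficient, and the challenge length remains the prescribed polynomial $k$. Feeding this rescaled instantiation through the unchanged \cite{ITCS:LLQ22} argument yields a $1$-of-$2^{k}$ puzzle with $c$-soundness for the arbitrary given $c>0$ and polynomial $k$, which is the lemma.

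I expect essentially all of the work to be routine bookkeeping, with no genuinely hard step. The one point that needs care is verifying that the \cite{ITCS:LLQ22} soundness proof is oblivious to replacing the LWE dimension by a polynomially larger (or, in the regime $c<c'$, polynomially smaller but still super-constant) parameter --- which it should be, since their reductions are black-box in the LWE solver and all honest-party running times remain polynomial --- together with the standard negligible-function hygiene needed to absorb the $o(1)$ factor in the exponent of $2^{n^{c'}}=2^{d^{c}(1+o(1))}$ and the rounding in $\lceil d^{c/c'}\rceil$ into the negligible function, which is harmless because it only strengthens the bound $n^{c'}\ge d^{c}$.
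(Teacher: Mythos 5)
Your proposal is correct and follows essentially the same route as the paper: the paper also treats this lemma as implicit in \cite{ITCS:LLQ22} under $c$-subexponential hardness of LWE, and closes the gap to plain subexponential hardness by the same security-parameter rescaling $\secp'\coloneqq\secp^{c/c'}$ (equivalently, your dimension substitution $n\coloneqq\lceil d^{c/c'}\rceil$), as explained in the paper's footnote accompanying the lemma.
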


We also use verifiable remote state preparations~\cite{FOCS:GheVid19,ITCS:Zhang25}.
In this paper, we use the formalism of \cite{ITCS:Zhang25}.

\begin{definition}[Remote State Preparations (RSPs)~\cite{ITCS:Zhang25}] \label{def:RSPV}
Let $n:\mathbb{N}\to\mathbb{N}$ be any polynomially bounded function. Let $p$ be polynomial.
    An $(n,\frac{1}{p})$-remote state preparation (RSP) is a set $(\cV,\cP)$ of interactive algorithms over a classical channel. $\cV$ is a PPT algorithm that takes $1^\secp$ as input and 
    outputs classical bit strings $(x,\theta)\in\bit^{n}\times\bit^{n}$ and $\mathsf{flag}\in\{\mathsf{pass},\mathsf{fail}\}$.
    $\cP$ is a QPT algorithm that takes $1^\secp$ as input and outputs a quantum state on the register $\mathbf{Q}$.
    We require the following two properties.
    \paragraph{Completeness:}
    \begin{align}
        \TD(\phi_{\mathbf{F},\mathbf{D},\mathbf{Q}},\ket{\mathsf{pass}}\bra{\mathsf{pass}}_{\mathbf{F}}\otimes\eta_{\mathbf{D},\mathbf{Q}})\le\negl(\secp).
    \end{align}
    Here, for the notational simplicity, we consider that $\cV$'s classical outputs are encoded in a quantum state in the computational basis.
    $\cV$'s classical output $\mathsf{flag}$ is written in the register $\mathbf{F}$, and $(x,\theta)$ is written in the register $\mathbf{D}$.
    $\phi_{\mathbf{F},\mathbf{D},\mathbf{Q}}\gets\langle\cV(1^\secp),\cP(1^\secp)\rangle$ and 
     $\eta_{\mathbf{D},\mathbf{Q}}\coloneqq\frac{1}{4^{n}}\sum_{(x,\theta)\in\bit^{n}\times\bit^{n}}|x,\theta\rangle\langle x,\theta|_{\mathbf{D}}\otimes (\bigotimes^{n}_{i=1}H^{\theta_i}|x_i\rangle\langle x_i|H^{\theta_i})_{\mathbf{Q}}$.
    
    \paragraph{$\frac{1}{p}$-soundness:} 
    For any non-uniform QPT adversary $\cP^*$ that outputs a quantum state on a register $\mathbf{Q'}$, there exists a non-uniform QPT algorithm $\mathsf{Sim}$ that maps a quantum state on the register $\mathbf{Q}$ to a quantum state on the registers $\mathbf{F}$ and $\mathbf{Q'}$ such that for any non-uniform QPT algorithm $\cD$,
    \begin{align}
        &\left|\Tr[\Pi^\mathsf{pass}_\mathbf{F}\sigma_\mathbf{F,D,Q'}]\Pr[\top\gets\cD\qty(\frac{\Pi^\mathsf{pass}_\mathbf{F}\sigma_\mathbf{F,D,Q'}\Pi^\mathsf{pass}_\mathbf{F}}{\Tr[\Pi^\mathsf{pass}_\mathbf{F}\sigma_\mathbf{F,D,Q'}]})] \right. \\
        & \left.-\Tr[\Pi^\mathsf{pass}_\mathbf{F}\mathsf{Sim}(\eta_{\mathbf{D,Q}})]\Pr[\top\gets\cD\qty(\frac{\Pi^\mathsf{pass}_\mathbf{F}\mathsf{Sim}(\eta_{\mathbf{D,Q}})\Pi^\mathsf{pass}_\mathbf{F}}{\Tr[\Pi^\mathsf{pass}_\mathbf{F}\mathsf{Sim}(\eta_{\mathbf{D,Q}})]})]\right| 
        \le\frac{1}{p(\secp)}. \label{def:RSPVsound}
    \end{align}
    Here, $\Pi^\mathsf{pass}_\mathbf{F}\coloneqq|\mathsf{pass}\rangle\langle\mathsf{pass}|_\mathbf{F}$ and $\sigma_\mathbf{F,D,Q'}\gets\langle\cV(1^\secp),\cP^*(1^\secp)\rangle$.
\end{definition}

The following lemma is shown in \cite{ITCS:Zhang25}:

\begin{lemma} \label{lem:LWEtoRSPs}
    Assuming the polynomial hardness of LWE, for any polynomially bounded function $n:\mathbb{N}\to\mathbb{N}$ and polynomial $p$, $r$-round $(n,\frac{1}{p})$-RSPs exist with a certain polynomial $r$.
\end{lemma}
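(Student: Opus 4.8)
The plan is to build an $(n,\frac{1}{p})$-RSP by composing $n$ copies of a single-qubit verifiable RSP gadget obtained from noisy trapdoor claw-free functions (NTCFs). First I would recall that the polynomial hardness of LWE yields NTCFs together with the companion trapdoor injective family, in the style of \cite{JACM:BCMVV21}. Using such a family, a single-qubit gadget runs as follows: $\cV$ sends the prover a function index, receives an image $y$ (the ``commitment''), and then flips a coin; with one outcome it runs a \emph{preimage round} (the prover reveals a preimage of $y$) and with the other an \emph{equation round} (the prover returns an equation $d$ and a bit). For the honest prover, conditioned on the equation round, the residual register $\mathbf{Q}$ holds $H^{\theta}\ket{x}$ for a $(x,\theta)$ that $\cV$ recovers from the transcript and the trapdoor; swapping the claw-free family for the (LWE-indistinguishable) trapdoor injective family lets $\cV$ also force a computational-basis outcome, so that $(x,\theta)$ is uniform over $\bit\times\bit$. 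Completeness with $\negl$ error is the standard NTCF calculation.

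For soundness of one gadget I would invoke a self-testing/rigidity statement in the style of \cite{FOCS:GheVid19}: any (possibly cheating) $\cP^*$ passing both test rounds with probability $\ge 1-\delta$ has, conditioned on $\mathsf{flag}=\mathsf{pass}$, a residual state on $\mathbf{Q'}$ within $\poly(\delta)$ trace distance of $\mathsf{Sim}(\eta_{\mathbf{D},\mathbf{Q}})$ for an efficient local map $\mathsf{Sim}$ — i.e., the simulation-form soundness of \cref{def:RSPV}. To reach the target parameters I would run the gadget $n$ times sequentially, the $i$-th run producing the $i$-th target qubit, with the gadget's soundness parameter tuned to $\frac{1}{np}$; this costs $r=\poly(\secp,n,p)$ rounds. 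The overall $\mathsf{Sim}$ is the composition of the $n$ per-gadget simulators, and a hybrid argument over the gadgets — replacing the real interaction by the simulated one one gadget at a time, charging $\frac{1}{np}$ per step — gives $\frac{1}{p}$ in \eqref{def:RSPVsound}, provided one checks that the $\Pi^{\mathsf{pass}}_{\mathbf{F}}$-weighted distinguishing advantage there is subadditive under this composition.

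The step I expect to be the main obstacle is precisely this sequential composition of the simulation-based soundness. Unlike a plain indistinguishability claim, the RSP definition conditions on the random event $\mathsf{flag}=\mathsf{pass}$ and requires the \emph{same} distinguisher $\cD$ to be fooled on the \emph{normalized} post-selected states, so the hybrid argument has to control how the post-selection and the extracted simulator state in gadget $i$ propagate through the remaining $n-i$ gadgets and into $\cD$; obtaining a clean additive error, rather than a multiplicative blow-up coming from the conditioning, is the delicate point, and is presumably where \cite{ITCS:Zhang25} does the real work. A secondary point is ensuring the single-qubit rigidity extractor is QPT and $\cD$-independent so that the composed $\mathsf{Sim}$ stays QPT; this holds for the LWE instantiation. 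The remaining ingredients — NTCF completeness, the claw-free/injective mode switch under LWE, and the round count — are routine.
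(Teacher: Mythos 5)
The paper does not prove this lemma at all: it is imported verbatim as a result of \cite{ITCS:Zhang25} (the sentence introducing it reads ``The following lemma is shown in \cite{ITCS:Zhang25}''), and the RSP formalism of \cref{def:RSPV} is adopted from that work precisely so the result can be cited black-box. So there is no in-paper proof to compare your argument against.

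Judged as a standalone proof, your proposal correctly describes the general architecture of LWE-based verifiable RSP (NTCF versus trapdoor-injective mode switching to control the basis, preimage/equation test rounds, single-qubit rigidity in the style of \cite{FOCS:GheVid19}, and sequential repetition to reach $n$ qubits), but it does not establish the lemma: the step you yourself flag --- proving the simulation-form soundness of \cref{def:RSPV} for one gadget and showing that the $\Pi^{\mathsf{pass}}_{\mathbf{F}}$-weighted, post-selected distinguishing advantage composes additively over $n$ sequential gadgets --- is essentially the entire content of the statement, and you explicitly defer it to ``presumably where \cite{ITCS:Zhang25} does the real work.'' That deferral mirrors what the paper itself does, so your sketch is consistent with the paper's treatment; but as a proof it has a genuine gap at exactly the hard point. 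Be aware also that the known single-qubit rigidity theorems give only inverse-polynomial closeness and are stated for fixed (non-adaptive) conditioning events, so the subadditivity of the normalized, $\mathsf{flag}=\mathsf{pass}$-conditioned quantity in \eqref{def:RSPVsound} under sequential composition is not something that can be checked as a routine afterthought --- it is the theorem.
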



Finally, we explain quantum computation and classical communication key exchange (QCCC KE).

\begin{definition}[QCCC Key Exchange (QCCC KE)~\cite{cryptoeprint:2024/1707}]
    An $(\alpha,\beta)$-QCCC key exchange (KE) is a set $(\cA,\cB)$ of interactive algorithms over a classical channel. 
    $\cA$ ($\cB$) is a QPT algorithm that takes $1^\secp$ as input and outputs a bit string $a$ ($b$). We require the following properties.
    \paragraph{$\alpha$-correctness:} 
    \begin{align}
        \Pr[a=b:(a,b)\gets\langle\cA(1^\secp),\cB(1^\secp)\rangle]\ge\alpha(\secp).
    \end{align}
    Here, $(a,b)\gets\langle\cA(1^\secp),\cB(1^\secp)\rangle$ means that $\cA$'s output is $a$ and $\cB$'s output is $b$.
    \paragraph{$\beta$-security:} For any non-uniform QPT adversary $\cE$,
    \begin{align}
        \Pr[a=e:
        (a,b;\tau)\gets\langle\cA(1^\secp),\cB(1^\secp)\rangle,
        e\gets\cE(\tau)  
        ]\le\beta(\secp).
    \end{align}
    Here, $(a,b;\tau)\gets\langle\cA(1^\secp),\cB(1^\secp)\rangle$ means that $\cA$'s output is $a$, $\cB$'s output is $b$, and $\tau$ is the transcript.
    
If $(\cA,\cB)$ is a $(1-\negl,\negl)$-QCCC KE, then we simply say that $(\cA,\cB)$ is a QCCC KE. 
\end{definition}

The following lemma was originally shown for classical KE, but we confirm that the proof also applies to QCCC KE.

\begin{lemma}[\cite{FOCS:BLMP23}, Lemma 2.13] \label{lem:KEamp}
    Let $c_1$ and $c_2$ be any constants such that $c_1>c_2>0$. 
    Let $p(\secp)\coloneqq\secp^{c_1}$ and $q(\secp)\coloneqq\secp^{c_2}$. If $(1-\frac{1}{p},1-\frac{1}{q})$-QCCC KE exist, then QCCC KE exist.
\end{lemma}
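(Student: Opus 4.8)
The plan is to reproduce the proof of the classical amplification lemma \cite[Lemma~2.13]{FOCS:BLMP23} and to check that every step is insensitive to whether the eavesdropper is classical or quantum. First I would recall the construction. Let $(\cA,\cB)$ be the assumed $(1-\frac1p,1-\frac1q)$-QCCC KE with per-execution key length $\ell_0=\poly(\secp)$. The amplified scheme proceeds in three stages: (1) Alice and Bob execute $n=\poly(\secp)$ independent copies of $(\cA,\cB)$ \emph{in parallel}, obtaining $(a_1,\dots,a_n)$ and $(b_1,\dots,b_n)$ with $\Pr[a_i=b_i]\ge 1-\frac1p$ for each $i$ independently, and with sub-transcripts $\tau_1,\dots,\tau_n$; (2) Alice performs one round of \emph{information reconciliation}, sending the syndrome of $(a_1,\dots,a_n)$ under an efficient error-correcting code over the alphabet $\zo{\ell_0}$ of block length $n$ that corrects up to $O(n/p)$ symbol errors, so that by a Chernoff bound Bob recovers $A\coloneqq(a_1,\dots,a_n)$ except with probability $\negl(\secp)$; (3) Alice samples and sends the seed of a strong seeded extractor and both parties output $\mathrm{Ext}(A;\mathrm{seed})\in\zo{\secp}$. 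The polynomial $n$ is chosen exactly as in \cite{FOCS:BLMP23} so that $n/p=\omega(\log\secp)$ (which makes reconciliation succeed except with negligible probability) while the computational min-entropy of $A$ given the eavesdropper's view---which, via the hard-core-lemma argument below, is of order $(n/q)\ell_0$---exceeds the reconciliation leakage---of order $(n/p)\ell_0$---by more than $\secp+\omega(\log\secp)$; this simultaneous balancing is exactly what the hypothesis $c_1>c_2$ makes possible, and the dependence on $\ell_0$ cancels in the comparison. Completeness is $1-\negl(\secp)$ since the parallel copies disagree in at most an $O(1/p)$-fraction of the indices except with negligible probability, reconciliation repairs these except with negligible probability, and the output is then a deterministic function of the agreed string $A$ and the public seed; the round complexity is that of $(\cA,\cB)$ plus two.

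The security proof is the technical heart and is taken essentially verbatim from \cite{FOCS:BLMP23}. It combines (i) a direct-product / hard-core-lemma argument: since no efficient eavesdropper predicts $a_i$ from $\tau_i$ with probability exceeding $1-\frac1q$, Impagliazzo's hard-core lemma (in the uniform form, to handle the dependence of the hard-core set on the predictor) yields that $A$ carries computational min-entropy of order $(n/q)\ell_0$ given $(\tau_1,\dots,\tau_n)$ against efficient predictors; with (ii) a (computational) leftover hash lemma: after subtracting the reconciliation leakage, enough such entropy remains that $\mathrm{Ext}(A;\mathrm{seed})$ is $\negl(\secp)$-indistinguishable from uniform against efficient distinguishers even given the full transcript, so any efficient $\cE$ guesses the output with probability at most $2^{-\secp}+\negl(\secp)=\negl(\secp)$.

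What must be checked for the QCCC setting is that (i) and (ii) survive replacing the classical eavesdropper by a QPT one, and they do. Every reduction here is black-box in the eavesdropper and QPT-preserving: to turn a QPT attacker against the amplified scheme into a QPT attacker against $(\cA,\cB)$, the reduction embeds its challenge transcript as one of the $n$ threads, \emph{itself} runs the remaining $n-1=\poly(\secp)$ threads of $(\cA,\cB)$ together with the classical reconciliation and extraction, and invokes the attacker as a black box; since $\cA$ and $\cB$ are QPT, so is the reduction. Moreover the attacker's interface is classical-in / classical-out (it is fed the classical transcript and outputs a classical guess), so re-running it on a fixed input raises no rewinding or no-cloning obstacle; and because the eavesdropper's side information in a QCCC KE \emph{is} the classical transcript, an ordinary classical strong extractor suffices in step (ii)---no quantum-proof extractor against quantum side information is needed. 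Hence the classical proof of \cite[Lemma~2.13]{FOCS:BLMP23} goes through and establishes the lemma for QCCC KE.

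I expect the most delicate point to be step (i): one must invoke the hard-core lemma in its ``oracle'' form, where the only access to the predictor is to run it on classical inputs (and, for the uniform version, to run it polynomially many times), so that its classical analysis applies unchanged to a QPT predictor; and one must carry along, as in \cite{FOCS:BLMP23}, the parameter bookkeeping that makes the extractable entropy outgrow both the extractor output length and the reconciliation leakage---which is precisely where the gap $c_1>c_2$ is used.
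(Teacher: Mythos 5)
The paper does not actually prove this lemma: its entire argument is the one-sentence remark that the lemma ``was originally shown for classical KE'' in \cite{FOCS:BLMP23} and that ``the proof also applies to QCCC KE.'' The substantive content of that remark is exactly the observation in your third paragraph---the amplification reductions are black-box in the eavesdropper, the eavesdropper's interface in a QCCC KE is classical-in/classical-out (its side information is just the classical transcript), the reduction only needs to run the QPT parties $\cA,\cB$ polynomially many times, and no quantum-proof extractor against quantum side information is needed. On the one point the paper actually argues, your proposal matches it.

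Where you go beyond the paper is in reconstructing the internals of the cited proof, and there one step of your parameter bookkeeping does not hold as stated. The $(1-\frac{1}{q})$-security of the underlying scheme only bounds the probability that Eve guesses the \emph{entire} key $a_i$; it does not give you that each $a_i$ carries $\ell_0$ bits of pseudoentropy on a $\Theta(1/q)$-dense hardcore event. A direct-product/hardcore argument starting from whole-key unpredictability $1-\frac{1}{q}$ yields on the order of $n/q$ extractable bits from $n$ copies---essentially independent of $\ell_0$---whereas syndrome reconciliation correcting $\Theta(n/p)$ symbol errors over the alphabet $\{0,1\}^{\ell_0}$ leaks $\Theta((n/p)\ell_0)$ bits. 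The comparison you need is then $n/q \gg (n/p)\ell_0$, i.e.\ $p \gg q\,\ell_0$, which does \emph{not} follow from $c_1>c_2$ alone because $\ell_0$ is an arbitrary polynomial; so the claim that ``the dependence on $\ell_0$ cancels in the comparison'' is unjustified. The argument has to first compress the per-execution key (e.g.\ via a Goldreich--Levin bit, or by otherwise reducing to bit agreement) before repeating and reconciling, so that both the entropy and the leakage are measured in bits rather than in $\ell_0$-bit symbols. Since the paper defers entirely to \cite{FOCS:BLMP23} for these internals, this does not expose a problem in the paper, but it is a genuine gap in your standalone reconstruction; your QPT-preservation analysis, by contrast, is the part that the present paper actually needs and it is correct.
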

\section{Proofs of Quantum Memory}
In this section, we define proofs of quantum memory (PoQM). 
We also observe that PoQM generalize the notion of PoQ.

\subsection{Definition}
We first define PoQM as follows.
\begin{definition}[Proofs of Quantum Memory (PoQM)]
Let $\alpha,\beta:\mathbb{N}\to[0,1]$ be any functions. Let $m_1,m_2:\mathbb{N}\to\mathbb{N}$ be any functions.
An $(\alpha,\beta,m_1,m_2)$-proof of quantum memory ($(\alpha,\beta,m_1,m_2)$-PoQM) is a set $(\cV_1,\cP_1,\cV_2,\cP_2)$ of interactive algorithms 
over a classical channel
with the following syntax.
    \paragraph{Initialization Phase:}
    In the initialization phase, $\cV_1$ and $\cP_1$ interact over a classical channel.
    $\cV_1$ is a PPT algorithm that takes $1^\secp$ as input and outputs a bit string $v$.
    $\cP_1$ is a QPT algorithm that takes $1^\secp$ as input and outputs a bit string $\state$ and an $m_1$-qubit quantum state $\sigma_\state$.
    In other words,
    \begin{align}
    (v, (\state,\sigma_\state))\gets\langle \cV_1(1^\secp), \cP_1(1^\secp)\rangle. 
    \end{align}
    \paragraph{Execution Phase:} 
    In the execution phase, $\cV_2$ and $\cP_2$ interact over a classical channel.
    $\cV_2$ is a PPT algorithm that takes $v$ as input and outputs $\top$ or $\bot$.
    $\cP_2$ is a QPT algorithm that takes $\state$ and $\sigma_\state$ as input and outputs nothing.
    In other words,
    \begin{align}
        \top/\bot\gets\langle\cV_2(v),\cP_2(\state,\sigma_\state)\rangle.
    \end{align}
We require the following two properties.
\paragraph{\bf{$\alpha$-completeness:}} For all sufficiently large $\secp\in\mathbb{N}$,
\begin{align}
    \Pr[
    \top\gets \langle\cV_2(v),\cP_2(\state,\sigma_\state)\rangle
    : (v,(\state,\sigma_\state))\gets \langle \cV_1(1^\secp), \cP_1(1^\secp)\rangle 
    ] \ge \alpha(\secp).
\end{align}

\paragraph{\bf{$(\beta,m_2)$-soundness:}} For any non-uniform QPT adversary $\cP_1^*$ that outputs a bit string $s$ and an $m_2$-qubit quantum state $\rho$, for any non-uniform \footnote{In our setting, it is more natural that the non-uniform QPT adversary $\cP_2^*$ takes only classical advice since we are interested in how much quantum memory the adversary can possess. However, we can construct PoQM with such stronger security, and therefore this only makes our results stronger.} QPT adversary $\cP_2^*$ that takes $s$ and $\rho$ as input, and for all sufficiently large $\secp\in\mathbb{N}$, \begin{align}
    \Pr[
    \top\gets \langle\cV_2(v),\cP_2^*(s,\rho)\rangle
    : (v,(s,\rho))\gets \langle \cV_1(1^\secp), \cP_1^*(1^\secp)\rangle 
    ] \le \beta(\secp).
\end{align}
\end{definition}

\subsection{Amplification of $m_2$}
We show that $m_2$ can be increased by increasing $\beta$.

\begin{lemma}  \label{thm:0_to_m}
Let $\alpha,\beta:\mathbb{N}\to[0,1]$ be any functions. Let $m_1,m_2:\mathbb{N}\to\mathbb{N}$ be any functions.
An $(\alpha,\beta,m_1,0)$-PoQM is an $(\alpha,2^{m_2}\beta,m_1,m_2)$-PoQM.
\end{lemma}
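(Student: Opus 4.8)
The plan is to observe that $\alpha$-completeness transfers verbatim --- the honest verifier $(\cV_1,\cV_2)$ and honest prover $(\cP_1,\cP_2)$ are unchanged, and in particular the honest prover still uses only $m_1$ qubits --- so all the work lies in upgrading $(\beta,0)$-soundness to $(2^{m_2}\beta,m_2)$-soundness. I would proceed by contradiction: suppose the scheme is not $(2^{m_2}\beta,m_2)$-sound. Then there is a pair $(\cP_1^*,\cP_2^*)$ of non-uniform QPT adversaries, with $\cP_1^*$ outputting a classical string $s$ and an $m_2$-qubit state $\rho$ and $\cP_2^*$ taking $(s,\rho)$ as input, such that $\cV_2$ outputs $\top$ with probability strictly greater than $2^{m_2}\beta(\secp)$ for infinitely many $\secp\in\mathbb{N}$.

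From this I would build a pair $(\cP_1^{**},\cP_2^{**})$ of non-uniform QPT adversaries with a purely classical handoff (i.e.\ a $0$-qubit state) that violates $(\beta,0)$-soundness. Concretely, $\cP_1^{**}$ runs $\cP_1^*$, measures all $m_2$ qubits of its output register $\rho$ in the computational basis to obtain $z\in\bit^{m_2}$, and outputs the classical string $(s,z)$ (together with the trivial $0$-qubit state); $\cP_2^{**}$ takes $(s,z)$, prepares $|z\rangle\langle z|$ on $m_2$ fresh qubits, and runs $\cP_2^*$ on $(s,|z\rangle\langle z|)$. Note that $\cP_2^{**}$ receives only classical data from $\cP_1^{**}$, which is exactly what $(\beta,0)$-soundness quantifies over, and that both algorithms remain non-uniform QPT.

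The key step, and the only nontrivial one, is to lower bound the acceptance probability of $(\cP_1^{**},\cP_2^{**})$. For this I would view the entire two-phase execution $\langle\cV_1(1^\secp),\cP_1^*(1^\secp)\rangle$ followed by $\langle\cV_2(v),\cP_2^*(s,\rho)\rangle$, together with $\cV_2$'s final accept/reject decision, as a single quantum algorithm $\cA$ that outputs the classical bit $\top/\bot$. Then the execution of $(\cP_1^{**},\cP_2^{**})$ against $(\cV_1,\cV_2)$ is precisely the algorithm $\cA'$ obtained from $\cA$ by pausing immediately after $\cP_1^*$ produces $(s,\rho)$, performing a $k=2^{m_2}$-outcome computational-basis measurement of the $m_2$-qubit register, and then resuming --- since after that measurement the collapsed state is exactly the $|z\rangle\langle z|$ that $\cP_2^{**}$ hands to $\cP_2^*$. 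Applying \cref{lemma:bz} with $x=\top$ gives
\begin{align*}
\Pr[\top\gets\cA']\ \ge\ \frac{\Pr[\top\gets\cA]}{2^{m_2}}\ >\ \frac{2^{m_2}\beta(\secp)}{2^{m_2}}\ =\ \beta(\secp)
\end{align*}
for infinitely many $\secp$, contradicting $(\beta,0)$-soundness. Hence the scheme is $(2^{m_2}\beta,m_2)$-sound, and together with the preserved $\alpha$-completeness it is an $(\alpha,2^{m_2}\beta,m_1,m_2)$-PoQM.

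I expect the main obstacle to be expository rather than mathematical: making it fully precise that a run of the protocol "is" an algorithm $\cA$ of the form required by \cref{lemma:bz} --- that the verifier's internal randomness, the classical channel, and the phase break can all be folded into one quantum algorithm with classical output without altering any probabilities, and that inserting the measurement exactly at the phase boundary yields $\cA'$. Once that is set up, the inequality above is immediate.
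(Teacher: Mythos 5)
Your proposal is correct and follows essentially the same route as the paper: the paper's proof also constructs a $0$-qubit adversary by measuring $\cP_1^*$'s $m_2$-qubit output in the computational basis, passing the outcome classically, re-preparing $\ket{z}\bra{z}$, and invoking \cref{lemma:bz} with $k=2^{m_2}$ to derive the contradiction. Your additional remarks on folding the whole two-phase interaction into a single algorithm $\cA$ merely spell out a step the paper leaves implicit.
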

    \begin{proof}[Proof of \cref{thm:0_to_m}]
        Let $(\cV_1,\cP_1,\cV_2,\cP_2)$ be an $(\alpha,\beta,m_1,0)$-PoQM. We show that this is also an $(\alpha,2^{m_2}\beta,m_1,m_2)$-PoQM. 
        $\alpha$-completeness is straightforward.
        For the sake of contradiction, we assume that the PoQM is not $(2^{m_2}\beta,m_2)$-sound. This means that there exists an adversary $\cP_1^{*(m_2)}$ that outputs an $m_2$-qubit quantum state $\rho$ and a bit string $s$, and an adversary $\cP_2^{*(m_2)}$ that takes $\rho$ and $s$ as input such that
        \begin{align}
            \Pr[\top\gets \langle\cV_2(v),\cP_2^{*(m_2)}(s,\rho)\rangle : (v,(s,\rho))\gets \langle \cV_1(1^\secp), \cP_1^{*(m_2)}(1^\secp)\rangle] > 2^{m_2(\secp)}\beta(\secp) \label{Eq:notmPoQM}
        \end{align}
        for infinitely many $\secp\in\mathbb{N}$.
        
        From this $(\cP_1^{*(m_2)},\cP_2^{*(m_2)})$, we can construct a pair $(\cP_1^{*(0)},\cP_2^{*(0)})$ of adversaries that breaks $(\beta,0)$-soundness
        as follows.
        \begin{itemize}
            \item $\cP_1^{*(0)}:$ Run $(s,\rho)\gets\cP_1^{*(m_2)}(1^\secp)$. Measure $\rho$ in the computational basis to get a measurement result
            $p\in\{0,1\}^{m_2(\secp)}$. Output $s'\coloneqq(s,p)$.
            \item $\cP_2^{*(0)}:$ Get $s'=(s,p)$ as input. Run $\cP_2^{*(m_2)}(s,|p\rangle\langle p|)$.
        \end{itemize}
        
        By \cref{lemma:bz} and \cref{Eq:notmPoQM},
        \begin{align}
            &\Pr[\top\gets \langle\cV_2(v),\cP_2^{*(0)}(s')\rangle : (v,s')\gets \langle \cV_1(1^\secp), \cP_1^{*(0)}(1^\secp)\rangle] \\
            &\ge 2^{-m_2(\secp)}\Pr[\top\gets \langle\cV_2(v),\cP_2^{*(m_2)}(s,\rho)\rangle : (v,(s,\rho))\gets \langle \cV_1(1^\secp), \cP_1^{*(m_2)}(1^\secp)\rangle] \\
            &> \beta(\secp)
        \end{align}
        for infinitely many $\secp\in\mathbb{N}$. This contradicts $(\beta,0)$-soundness of the PoQM.
    \end{proof}

\subsection{Relation to PoQ}
We can show that PoQ is a special case of PoQM with $m_2=0$.
\begin{lemma} \label{lem:PoQMtoPoQ}
    Let $\alpha,\beta:\mathbb{N}\to[0,1]$ be any functions. Let $m_1:\mathbb{N}\to\mathbb{N}$ be any function. If $(\alpha,\beta,m_1,0)$-PoQM exist, then $(\alpha,\beta)$-PoQ exist.
\end{lemma}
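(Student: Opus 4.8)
The plan is to obtain the PoQ by simply concatenating the two phases of the PoQM into single interactive algorithms, and to exploit the fact that a classical PPT prover carries only a classical string across the boundary between the phases — which is exactly a $0$-qubit quantum memory in the sense of the PoQM definition (recall that no bound is placed on the size of the string $s$ passed between the phases).

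\textbf{Construction.} Let $(\cV_1,\cP_1,\cV_2,\cP_2)$ be an $(\alpha,\beta,m_1,0)$-PoQM. Define the PoQ $(\cV,\cP)$ as follows. On input $1^\secp$, $\cV$ internally runs $\cV_1(1^\secp)$, interacting with the prover over the classical channel; when $\cV_1$ halts with output $v$, $\cV$ continues by running $\cV_2(v)$, again interacting with the prover, and finally outputs whatever $\cV_2$ outputs. Dually, $\cP$ internally runs $\cP_1(1^\secp)$; when $\cP_1$ halts with output $(\state,\sigma_\state)$, $\cP$ continues by running $\cP_2(\state,\sigma_\state)$. Since $\cV_1$ and $\cV_2$ are PPT, $\cV$ is PPT; since $\cP_1$ and $\cP_2$ are QPT and $m_1$ is polynomially bounded (so that handing $\sigma_\state$ from $\cP_1$ to $\cP_2$ costs only polynomial memory), $\cP$ is QPT. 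By construction, the interaction $\langle\cV(1^\secp),\cP(1^\secp)\rangle$ is literally the initialization phase of the PoQM followed by its execution phase with the honest prover, so $\alpha$-completeness of the PoQM yields $\alpha$-completeness of the PoQ at once.

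\textbf{Soundness.} Suppose, toward a contradiction, that there is a non-uniform PPT adversary $\cP^*$ with $\Pr[\top\gets\langle\cV(1^\secp),\cP^*(1^\secp)\rangle] > \beta(\secp)$ for infinitely many $\secp\in\mathbb{N}$. Split $\cP^*$ at the phase boundary: let $\cP_1^*$ run $\cP^*$ through the end of the initialization phase and then output its entire current configuration as a classical bit string $s$, together with the trivial $0$-qubit register; this is well-defined precisely because $\cP^*$ is a classical machine, so its mid-computation state is a classical string. Let $\cP_2^*$ take $(s,\rho)$ with $\rho$ the $0$-qubit state, restore $\cP^*$'s configuration from $s$, and resume it through the execution phase. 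Both $\cP_1^*$ and $\cP_2^*$ are non-uniform QPT (indeed PPT). The joint experiment $(v,(s,\rho))\gets\langle\cV_1(1^\secp),\cP_1^*(1^\secp)\rangle$, $\top/\bot\gets\langle\cV_2(v),\cP_2^*(s,\rho)\rangle$ reproduces exactly the distribution of $\langle\cV(1^\secp),\cP^*(1^\secp)\rangle$, so $\cV_2$ outputs $\top$ with probability $>\beta(\secp)$ for infinitely many $\secp$, contradicting $(\beta,0)$-soundness of the PoQM. Hence the PoQ is $(\alpha,\beta)$-sound.

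The only point that requires any care is the modeling claim in the soundness argument — that a PPT prover crossing the boundary between the two phases is faithfully captured by a $0$-qubit-memory malicious prover pair in the PoQM, so that the contradiction is with $(\beta,0)$-soundness rather than with some stronger statement; everything else is bookkeeping about composing two interactive protocols into one. I do not expect a genuine obstacle here, since the PoQM soundness definition explicitly allows an unbounded classical string to be transferred between the phases, and the entire state of a classical PPT machine is such a string.
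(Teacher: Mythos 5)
Your proposal is correct and follows essentially the same route as the paper: concatenate the two phases to form the PoQ, and for soundness split the PPT adversary at the phase boundary into $(\cP_1^*,\cP_2^*)$, observing that its intermediate configuration is a classical string and hence a valid $0$-qubit-memory adversary. Your extra care about the modeling point is welcome but the paper treats it the same way.
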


    \begin{proof}[Proof of \cref{lem:PoQMtoPoQ}]
    Assume that $(\alpha,\beta,m_1,0)$-PoQM exist.
    Let $(\cV_1,\cP_1,\cV_2,\cP_2)$ be an $(\alpha,\beta,m_1,0)$-PoQM.
    From it, we construct an $(\alpha,\beta)$-PoQ $(\cV,\cP)$ as follows:
    \begin{itemize}
        \item $\top/\bot\gets\langle\cV(1^\secp),\cP(1^\secp)\rangle:$
        \begin{enumerate}
            \item $\cV$ and $\cP$ get $1^\secp$ as input.
            \item 
            $\cV$ and $\cP$ interact over a classical channel. $\cV$ runs $v\gets\cV_1(1^\secp)$, and 
            $\cP$ runs $(\state,\sigma_\state)\gets\cP_1(1^\secp)$.  
            \item 
            $\cV$ and $\cP$ interact over a classical channel. $\cV$ runs $\top/\bot\gets\cV_2(v)$, and $\cP$ runs $\cP_2(\state,\sigma_\state)$. 
            $\cV$ outputs the output of $\cV_2(v)$. 
        \end{enumerate}
    \end{itemize}
    $\alpha$-completeness of 
    thus
    constructed PoQ is clear. 
    Next we show $\beta$-soundness. For the sake of contradiction, we assume that the constructed PoQ is not $\beta$-sound.
    This means that there exists a non-uniform PPT prover $\cP^*$ such that
    \begin{align}
        \Pr[\top\gets\langle\cV(1^\secp),\cP^*(1^\secp)\rangle] > \beta(\secp)
    \end{align}
    for infinitely many $\secp\in\mathbb{N}$.
    We divide $\cP^*$ into two algorithms $\cP_1^*$ and $\cP_2^*$ such that $\cP_1^*$ interacts with $\cV_1$
    and $\cP_2^*$ interacts with $\cV_2$.
    Because $\cP^*$ is a PPT algorithm, $\cP_1^*$ outputs only a classical bit string, which we call it $s$, and 
    $\cP_2^*$ takes only $s$ as input.
    We can show that thus constructed $(\cP_1^*,\cP_2^*)$ breaks $(\beta,0)$-soundness of the PoQM,
    because
    \begin{align}
        \Pr[
        \top\gets \langle\cV_2(v),\cP_2^*(s)\rangle
        : (v,s)\gets \langle \cV_1(1^\secp), \cP_1^*(1^\secp)\rangle 
        ] 
        =\Pr[\top\gets\langle\cV(1^\secp),\cP^*(1^\secp)\rangle] 
        > \beta(\secp)
    \end{align}
    for infinitely many $\secp\in\mathbb{N}$.
    Hence the constructed PoQ is $\beta$-sound.
    \end{proof}

\section{Constructions of PoQM} \label{Sec:constructPoQM}
In this section, we provide two constructions of PoQM. 
The first construction is from $1$-of-$2^k$ puzzles.
The second one is from RSPs. 

\subsection{$1$-of-$2^k$ Puzzles Imply PoQM}
\begin{theorem} \label{thm:12puztoPoQM}
Let $m_2:\mathbb{N}\to\mathbb{N}$ be any polynomially bounded function.
Let $c>0$ be any constant such that
$m_2(\secp)=O(\secp^c)$. 
Let $k$ be any polynomial
such that $k(\secp)=\omega(\secp^c)$. 
If $1$-of-$2^k$ puzzles with $c$-soundness exist, then, 
$4$-round $(1-\negl,\negl,m_1,m_2)$-PoQM exist with some polynomial $m_1$. 
\end{theorem}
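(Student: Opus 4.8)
The plan is to compile a $1$-of-$2^k$ puzzle with $c$-soundness into the $m_2=0$ case of a PoQM and then invoke \cref{thm:0_to_m} to reach general $m_2$. First I would fix the construction as sketched in the overview: in the initialization phase $\cV_1$ runs $(\pk,\sk)\gets\mathsf{KeyGen}(1^\secp)$ and sends $\pk$; $\cP_1$ runs $(y,\rho)\gets\mathsf{Obligate}(\pk)$ and sends $y$; $\cV_1$ outputs $v\coloneqq(\sk,y)$, and $\cP_1$ outputs $\state\coloneqq(\pk,y)$ together with $\sigma_\state\coloneqq\rho$. In the execution phase $\cV_2$ samples $\mathsf{ch}\gets\bit^{k(\secp)}$ and sends it, $\cP_2$ replies with $\ans\gets\mathsf{Solve}(\pk,y,\rho,\mathsf{ch})$, and $\cV_2$ outputs $\mathsf{Ver}(\sk,y,\mathsf{ch},\ans)$. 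This exchanges four messages ($\pk,y,\mathsf{ch},\ans$), $m_1$ is the polynomial qubit-length of $\rho$, and $(1-\negl)$-completeness is immediate from the completeness of the puzzle.

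Next I would establish $(1-\negl,\epsilon,m_1,0)$-soundness with $\epsilon(\secp)\coloneqq\big(2^{-k(\secp)}+\negl(2^{\secp^c})\big)^{1/2}$, by contraposition. Suppose non-uniform QPT $(\cP_1^*,\cP_2^*)$ break it, where $\cP_1^*$ outputs only a classical string $s$ (no quantum register is possible since $m_2=0$) and $\cP_2^*$ takes $(s,\mathsf{ch})$ and makes $\cV_2$ accept with probability $>\epsilon(\secp)$ for infinitely many $\secp$. From them I would build $(\cA,\cB,\cC)$ for the $c$-soundness game: $\cA$ feeds the received $\pk$ into $\cP_1^*$ in the role of $\cV_1$, obtains the outgoing message $y$ and the classical state $s$, sends $y$ to $\mathsf{Chal}$, and places a copy of $s$ into each of the registers $\mathbf{B},\mathbf{C}$ (legal since $s$ is classical); on receiving $\mathsf{ch}$, $\cB$ and $\cC$ each run an independent copy of $\cP_2^*(s,\mathsf{ch})$ and output its answer. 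Conditioned on any fixed $(\pk,\sk,y,s,\mathsf{ch})$, writing $p$ for the probability that $\mathsf{Ver}(\sk,y,\mathsf{ch},\cdot)$ accepts one run of $\cP_2^*(s,\mathsf{ch})$, the challenger accepts with probability $p^2$; so by Jensen $\Pr[\top\gets\mathsf{Chal}]=\mathbb{E}[p^2]\ge(\mathbb{E}[p])^2$, where $\mathbb{E}[p]$ is exactly the PoQM acceptance probability, hence $>\epsilon(\secp)$. Thus $(\cA,\cB,\cC)$ win with probability $>\epsilon(\secp)^2=2^{-k(\secp)}+\negl(2^{\secp^c})$ for infinitely many $\secp$, contradicting \cref{Eq:1of2puzsound}.

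Finally I would apply \cref{thm:0_to_m}: the $(1-\negl,\epsilon,m_1,0)$-PoQM is a $(1-\negl,2^{m_2}\epsilon,m_1,m_2)$-PoQM. Since $k(\secp)=\omega(\secp^c)$, the term $2^{-k(\secp)}$ is negligible in $2^{\secp^c}$, so $\epsilon(\secp)$ (the square root of a sum of two functions negligible in $2^{\secp^c}$) is itself negligible in $2^{\secp^c}$; and since $m_2(\secp)=O(\secp^c)$, the factor $2^{m_2(\secp)}=2^{O(\secp^c)}$ is polynomially bounded in $2^{\secp^c}$. Hence $2^{m_2(\secp)}\epsilon(\secp)=\negl(2^{\secp^c})=\negl(\secp)$, which gives the desired four-round $(1-\negl,\negl,m_1,m_2)$-PoQM with $m_1$ polynomial.

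The step I expect to be the main obstacle is the two-solver reduction inside the $m_2=0$ soundness proof: one has to notice that a single challenge--response execution phase with a classical hand-off $s$ is exactly the shape of the $1$-of-$2^k$ game with two non-communicating solvers fed the same $s$, but this forces one to square the PoQM soundness bound (the Jensen step), and consequently to start the argument from $\epsilon=\sqrt{2^{-k}+\negl(2^{\secp^c})}$ rather than from $2^{-k}+\negl(2^{\secp^c})$ itself; the closing calculation then has to confirm that the amplification factor $2^{m_2}$ from \cref{thm:0_to_m} still drives $\epsilon$ below every inverse polynomial, which is precisely what the hypotheses $m_2(\secp)=O(\secp^c)$ and $k(\secp)=\omega(\secp^c)$ guarantee. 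Everything else --- completeness, the syntactic description, and the legality of copying a classical string between $\cB$ and $\cC$ --- is routine.
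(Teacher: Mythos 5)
Your proposal is correct and follows essentially the same route as the paper's proof: the identical four-message construction, the same three-step structure (establish $(1-\negl,\epsilon,m_1,0)$-soundness with $\epsilon=(2^{-k}+\negl(2^{\secp^c}))^{1/2}$ via the two-solver reduction and Jensen's inequality, amplify with \cref{thm:0_to_m}, then check $2^{m_2}\epsilon=\negl(\secp)$). No gaps.
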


By combining this theorem with \cref{lem:LWEto12puz}, we obtain the following corollary.

\begin{corollary}
Let $m_2:\mathbb{N}\to\mathbb{N}$ be any polynomially bounded function. Assuming the subexponential hardness of LWE, 4-round $(1-\negl,\negl,m_1,m_2)$-PoQM exist with some polynomial $m_1$. 
\end{corollary}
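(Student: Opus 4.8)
The plan is to first construct, directly from the $1$-of-$2^k$ puzzle, a PoQM in which the malicious prover carries \emph{no} quantum memory between the two phases (the case $m_2=0$), and then to lift it to arbitrary polynomially bounded $m_2$ by invoking \cref{thm:0_to_m}. For the base construction I would define the four-round protocol $(\cV_1,\cP_1,\cV_2,\cP_2)$ as follows. In the initialization phase, $\cV_1$ runs $(\pk,\sk)\gets\mathsf{KeyGen}(1^\secp)$ and sends $\pk$; $\cP_1$ runs $(y,\rho)\gets\mathsf{Obligate}(\pk)$ and sends $y$; then $\cV_1$ outputs $v\coloneqq(\sk,y)$ while $\cP_1$ outputs the classical part $\state\coloneqq(\pk,y)$ together with the state $\sigma_\state\coloneqq\rho$, whose (polynomial) qubit count defines $m_1$. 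In the execution phase, $\cV_2$ samples $\mathsf{ch}\gets\bit^{k(\secp)}$ and sends it; $\cP_2$ replies with $\mathsf{ans}\gets\mathsf{Solve}(\pk,y,\rho,\mathsf{ch})$; and $\cV_2$ outputs $\mathsf{Ver}(\sk,y,\mathsf{ch},\mathsf{ans})$. This uses exactly four messages, and $(1-\negl)$-completeness is immediate from the completeness of the puzzle.

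The core step is to show $(\epsilon,0)$-soundness for $\epsilon\coloneqq\bigl(2^{-k}+\mu(2^{\secp^c})\bigr)^{1/2}$, where $\mu$ is the negligible function witnessing $c$-soundness of the puzzle in \cref{Eq:1of2puzsound}. Assume toward a contradiction that some non-uniform QPT $\cP_1^*$ (outputting only a classical string $s$) and $\cP_2^*$ (taking only $s$) make $\cV_2$ accept with probability strictly above $\epsilon$ for infinitely many $\secp$. I would build puzzle adversaries $(\cA,\cB,\cC)$: on input $\pk$, $\cA$ simulates $\cP_1^*$'s interaction with $\cV_1$, forwards the message $y$ that $\cP_1^*$ sends to $\mathsf{Chal}$, and writes the \emph{same} classical string $s$ into both registers $\mathbf{B}$ and $\mathbf{C}$; each of $\cB$ and $\cC$, upon receiving $\mathsf{ch}$, runs $\mathsf{ans}\gets\cP_2^*(s,\mathsf{ch})$ and returns it. Conditioned on the classical transcript $t=(\pk,\sk,y,s,\mathsf{ch})$, the two runs of $\cP_2^*$ are independent; writing $W_t$ for the acceptance probability of a single run given $t$, the challenger accepts with probability $\Exp_t[W_t^2]\ge(\Exp_t[W_t])^2=W^2>\epsilon^2=2^{-k}+\mu(2^{\secp^c})$, where $W$ is $\cV_2$'s overall acceptance probability and the first inequality is Jensen's inequality. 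This contradicts \cref{Eq:1of2puzsound}, so the protocol is a $(1-\negl,\epsilon,m_1,0)$-PoQM.

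Finally I would apply \cref{thm:0_to_m} to conclude that it is a $(1-\negl,2^{m_2}\epsilon,m_1,m_2)$-PoQM, and verify $2^{m_2}\epsilon=\negl(\secp)$. Using $\epsilon\le 2^{-k/2}+\sqrt{\mu(2^{\secp^c})}$, bound $2^{m_2}\epsilon\le 2^{m_2-k/2}+2^{m_2}\sqrt{\mu(2^{\secp^c})}$: the first term is negligible because $m_2=O(\secp^c)$ while $k/2=\omega(\secp^c)$ forces the exponent eventually below $-\secp^c$; the second is negligible because $2^{m_2}=2^{O(\secp^c)}$ is polynomially bounded in $2^{\secp^c}$ whereas $\sqrt{\mu(2^{\secp^c})}$ is negligible in $2^{\secp^c}$. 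This yields the claimed four-round $(1-\negl,\negl,m_1,m_2)$-PoQM and makes transparent why the hypotheses $m_2=O(\secp^c)$ and $k=\omega(\secp^c)$ are imposed.

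The step I expect to be the main obstacle is exactly the soundness reduction: turning a single cheating PoQM prover into the \emph{two synchronized} provers $\cB$ and $\cC$ demanded by the puzzle game. This is only possible because the hand-off state from $\cP_1^*$ to $\cP_2^*$ is classical — that is, because $m_2=0$ — so that it can be copied into both of $\cA$'s output registers; for $m_2>0$ no such copying is available, which is precisely why the general case must be routed through \cref{thm:0_to_m} at the cost of the $2^{m_2}$ blow-up in soundness, and why one cannot hope to get negligible soundness from a puzzle whose soundness is only $2^{-k}+\negl(2^{\secp^c})$ unless $k$ dominates $m_2$ in the stated way.
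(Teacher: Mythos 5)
Your proposal is correct and follows essentially the same route as the paper: the identical four-round protocol from the $1$-of-$2^k$ puzzle, the same soundness reduction to two synchronized puzzle provers sharing the classical string $s$ with the same Jensen's-inequality step, the same lift via \cref{thm:0_to_m}, and the same parameter accounting showing $2^{m_2}\epsilon=\negl(\secp)$. The only (harmless) omission is an explicit invocation of \cref{lem:LWEto12puz} to instantiate the puzzle from subexponential LWE with the chosen $c$ and $k$.
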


    \begin{proof}[Proof of \cref{thm:12puztoPoQM}.] 
        Assume that $1$-of-$2^k$ puzzles with $c$-soundness exist.
        Let $(\mathsf{KeyGen, Obligate, Solve, Ver})$ be a $1$-of-$2^k$ puzzle with $c$-soundness. 
        We construct a PoQM $(\cV_1,\cP_1,\cV_2,\cP_2)$ as follows:
        
        \paragraph{Initialization Phase:}
        \begin{enumerate}
            \item $\cV_1$ and $\cP_1$ get $1^\secp$ as input.
            \item $\cV_1$ runs $(\mathsf{pk,sk})\gets\mathsf{KeyGen}(1^\secp)$ and sends $\mathsf{pk}$ to $\cP_1$.
            \item $\cP_1$ runs $(y,\rho)\gets\mathsf{Obligate(\mathsf{pk})}$ and sends $y$ to $\cV_1$. The number of qubits of $\rho$ is $m_1(\secp)$. 
            \item $\cV_1$ outputs $v\coloneqq(\mathsf{sk},y)$. 
            $\cP_1$ outputs $(\state,\sigma_\state)\coloneqq((\mathsf{pk},y), \rho)$.
        \end{enumerate}

        \paragraph{Execution Phase:}
        \begin{enumerate}
            \item $\cV_2$ takes $v$ as input. $\cP_2$ takes $(\state,\sigma_\state)$ as input.
            \item $\cV_2$ samples $\mathsf{ch}\gets\{0,1\}^{k(\secp)}$ and sends it to $\cP_2$.
            \item $\cP_2$ runs $\mathsf{ans}\gets\mathsf{Solve}(\mathsf{pk},y,\rho,\mathsf{ch})$ and sends $\mathsf{ans}$ to $\cV_2$.
            \item $\cV_2$ runs $\top/\bot\gets\mathsf{Ver}(\mathsf{sk},y,\mathsf{ch,ans})$ and outputs its output.
        \end{enumerate}
        Our goal is to show that the constructed $(\cV_1,\cP_1,\cV_2,\cP_2)$ is a $(1-\negl,\negl,m_1,m_2)$-PoQM. 
        We achieve this goal with the following three steps:
        \begin{enumerate}
            \item  We show that $(\cV_1,\cP_1,\cV_2,\cP_2)$ is  
        a $(1-\negl,\epsilon,m_1,0)$-PoQM, where $\epsilon(\secp)\coloneqq(2^{-k(\secp)}+\negl(2^{\secp^c}))^\frac{1}{2}$.
        \item 
        Using \cref{thm:0_to_m},
        a $(1-\negl,\epsilon,m_1,0)$-PoQM is a
        $(1-\negl,2^{m_2}\epsilon,m_1,m_2)$-PoQM.
        \item 
        We show that $2^{m_2(\secp)}\epsilon(\secp)=\negl(\secp)$.
        \end{enumerate}
        The second step is straightforward. In the following, we will explain the first and third steps.

\paragraph{First step.}
        $(1-\negl)$-completeness is straightforward.
        Let us show $(\epsilon,0)$-soundness. For the sake of contradiction, assume that it is not $(\epsilon,0)$-sound. 
        Then there exists a pair $(\cP_1^*,\cP_2^*)$ of adversaries such that
        \begin{align}
            \epsilon(\secp)&<\Pr[\top\gets \langle\cV_2(v),\cP_2^*(s)\rangle : (v,s)\gets \langle \cV_1(1^\secp), \cP_1^*(1^\secp)\rangle] \\
            &=\Pr[\top\gets\mathsf{Ver}(\mathsf{sk},y,\mathsf{ch},\mathsf{ans}):
            \begin{array}{l}
                (\mathsf{pk},\mathsf{sk})\gets\mathsf{KeyGen}(1^\secp) \\
                (y,s')\gets\cP_1^*(\mathsf{pk}) \\
                \mathsf{ch}\gets\bit^{k(\secp)} \\
                \mathsf{ans}\gets\cP_2^*(\mathsf{ch},s')
            \end{array}
            ]\label{Eq:notsound}
        \end{align}
        for infinitely many $\secp\in\mathbb{N}$.
        From this $(\cP_1^*,\cP_2^*)$, we can construct a set $(\cA,\cB,\cC)$ of adversaries that breaks $c$-soundness of the $1$-of-$2^k$ puzzle as follows:
        \begin{itemize}
            \item $\cA$ : Run $(y,s')\gets\cP_1^*(\mathsf{pk})$. Send $y$ to $\mathsf{Chal}$ and send $s'$ to $\cB$ and $\cC$.
            \item $\cB$ : Run $\mathsf{ans}_\cB\gets\cP_2^*(\mathsf{ch},s')$ and send $\mathsf{ans}_\cB$ to $\mathsf{Chal}$.
            \item $\cC$ : Run $\mathsf{ans}_\cC\gets\cP_2^*(\mathsf{ch},s')$ and send $\mathsf{ans}_\cC$ to $\mathsf{Chal}$.
        \end{itemize}
        $(\cA,\cB,\cC)$ can break $c$-soundness as follows:
        \begin{align}
            \Pr[\top\gets\mathsf{Chal}]
            &=\Pr[\top\gets\mathsf{Ver}(\mathsf{sk},y,\mathsf{ch},\mathsf{ans}_\cB)\land\top\gets\mathsf{Ver}(\mathsf{sk},y,\mathsf{ch},\mathsf{ans}_\cC):
            \begin{array}{l}
                (\mathsf{pk},\mathsf{sk})\gets\mathsf{KeyGen}(1^\secp) \\
                (y,s')\gets\cP_1^*(\mathsf{pk}) \\
                \mathsf{ch}\gets\bit^{k(\secp)} \\
                \mathsf{ans}_\cB\gets\cP_2^*(\mathsf{ch},s') \\
                \mathsf{ans}_\cC\gets\cP_2^*(\mathsf{ch},s')
            \end{array} \label{Eq:puzadv}
            ] \\
            &\ge\Pr[\top\gets\mathsf{Ver}(\mathsf{sk},y,\mathsf{ch},\mathsf{ans}):
            \begin{array}{l}
                (\mathsf{pk},\mathsf{sk})\gets\mathsf{KeyGen}(1^\secp) \\
                (y,s')\gets\cP_1^*(\mathsf{pk}) \\
                \mathsf{ch}\gets\bit^{k(\secp)} \\
                \mathsf{ans}\gets\cP_2^*(\mathsf{ch},s')
            \end{array}
            ]^2 \label{Eq:causch} \\
            &> \epsilon(\secp)^2=2^{-k(\secp)}+\negl(2^{\secp^c})
        \end{align}
        for infinitely many $\secp\in\mathbb{N}$. The first inequality follows from the Jensen's inequality.
        \if0
        In detail, let $\Pr[\mathsf{pk},\mathsf{sk},y,s',\mathsf{ch}]$ be the probability that this tuple is sampled from 
        \begin{align}
            (\mathsf{pk},\mathsf{sk})\gets\mathsf{KeyGen}(1^\secp),(y,s')\gets\cP_1^*(\mathsf{pk}),\mathsf{ch}\gets\bit^{k(\secp)}.
        \end{align}
        To apply the Cauchy-Schwarz inequality, let us set
        \begin{align}
            &a(\mathsf{pk},\mathsf{sk},y,s',\mathsf{ch})\coloneqq\sqrt{\Pr[\mathsf{pk},\mathsf{sk},y,s',\mathsf{ch}]}\Pr[\top\gets\mathsf{Ver}(\mathsf{sk},y,\mathsf{ch},\mathsf{ans}):\mathsf{ans}\gets\cP_2^*(\mathsf{ch},s')], \\
            &b(\mathsf{pk},\mathsf{sk},y,s',\mathsf{ch})\coloneqq\sqrt{\Pr[\mathsf{pk},\mathsf{sk},y,s',\mathsf{ch}]}.
        \end{align}
        For each fixed $(\mathsf{pk},\mathsf{sk},y,s',\mathsf{ch})$, $\mathsf{ans}_\cB$ and $\mathsf{ans}_\cC$ are independent and identically distributed, so the probability that both are accepted is the square of the probability that one is accepted. Thus, for infinitely many $\secp\in\mathbb{N}$,
        \begin{align}
            &\Pr[\top\gets\mathsf{Ver}(\mathsf{sk},y,\mathsf{ch},\mathsf{ans}_\cB)\land\top\gets\mathsf{Ver}(\mathsf{sk},y,\mathsf{ch},\mathsf{ans}_\cC):
            \begin{array}{l}
                (\mathsf{pk},\mathsf{sk})\gets\mathsf{KeyGen}(1^\secp) \\
                (y,s')\gets\cP_1^*(\mathsf{pk}) \\
                \mathsf{ch}\gets\bit^{k(\secp)} \\
                \mathsf{ans}_\cB\gets\cP_2^*(\mathsf{ch},s') \\
                \mathsf{ans}_\cC\gets\cP_2^*(\mathsf{ch},s')
            \end{array} 
            ] \\
            &=\sum_{\mathsf{pk},\mathsf{sk},y,s',\mathsf{ch}}\Pr[\mathsf{pk},\mathsf{sk},y,s',\mathsf{ch}]\Pr[\top\gets\mathsf{Ver}(\mathsf{sk},y,\mathsf{ch},\mathsf{ans}):\mathsf{ans}\gets\cP_2^*(\mathsf{ch},s')]^2 \\
            &=\sum_{\mathsf{pk},\mathsf{sk},y,s',\mathsf{ch}}a(\mathsf{pk},\mathsf{sk},y,s',\mathsf{ch})^2 \\
            &\ge \frac{\qty(\sum_{\mathsf{pk},\mathsf{sk},y,s',\mathsf{ch}}a(\mathsf{pk},\mathsf{sk},y,s',\mathsf{ch})b(\mathsf{pk},\mathsf{sk},y,s',\mathsf{ch}))^2}{\sum_{\mathsf{pk},\mathsf{sk},y,s',\mathsf{ch}}b(\mathsf{pk},\mathsf{sk},y,s',\mathsf{ch})^2} \label{Eq:defcs} \\
            &=\qty(\sum_{\mathsf{pk},\mathsf{sk},y,s',\mathsf{ch}}\Pr[\mathsf{pk},\mathsf{sk},y,s',\mathsf{ch}]\Pr[\top\gets\mathsf{Ver}(\mathsf{sk},y,\mathsf{ch},\mathsf{ans}):\mathsf{ans}\gets\cP_2^*(\mathsf{ch},s')])^2 \\
            &=\Pr[\top\gets\mathsf{Ver}(\mathsf{sk},y,\mathsf{ch},\mathsf{ans}):
            \begin{array}{l}
                (\mathsf{pk},\mathsf{sk})\gets\mathsf{KeyGen}(1^\secp) \\
                (y,s')\gets\cP_1^*(\mathsf{pk}) \\
                \mathsf{ch}\gets\bit^{k(\secp)} \\
                \mathsf{ans}\gets\cP_2^*(\mathsf{ch},s')
            \end{array}
            ]^2.
        \end{align}
        Here, the sum is taken over all possible outputs of $(\mathsf{pk},\mathsf{sk})\gets\mathsf{KeyGen}(1^\secp)$ and $(y,s')\gets\cP_1^*(\mathsf{pk})$, and over all $\mathsf{ch}\in\{0,1\}^{k(\secp)}$. In \cref{Eq:defcs}, we have used Cauchy-Schwarz inequality. 
        \fi
        Hence, $(\cA,\cB,\cC)$ break $c$-soundness of the $1$-of-$2^k$ puzzle, contradicting the assumption.

        \paragraph{Third step.}
        Since $k(\secp)=\omega(\secp^c)$, we have $2^{-k(\secp)}=\negl(2^{\secp^c})$, and thus $\epsilon(\secp)=(2^{-k(\secp)}+\negl(2^{\secp^c}))^\frac{1}{2}=\negl(2^{\secp^c})$. From $m_2(\secp)=O(\secp^c)$, there exists a polynomial $p$ such that $2^{m_2(\secp)}\le p(2^{\secp^c})$ for all sufficiently large $\secp\in\mathbb{N}$. Hence, we obtain
        \begin{align}
            2^{m_2(\secp)}\epsilon(\secp)\le p(2^{\secp^c})\negl(2^{\secp^c})
            =\negl(2^{\secp^c})=\negl(\secp).
        \end{align}
    \end{proof}

\subsection{RSPs Imply PoQM} \label{Sec:RSPtoPoQM}
\begin{theorem} \label{thm:RSPtoPoQM}
    Let $p$ be any polynomial. Let $m_2:\mathbb{N}\to\mathbb{N}$ be any polynomially bounded function such that $m_2(\secp)=\omega(\log(\secp))$.
    If $(\lceil9.1m_2\rceil,\frac{1}{2p})$-RSPs exist, then $(1-\negl,1/p,\lceil9.1m_2\rceil,m_2)$-PoQM exist.
\end{theorem}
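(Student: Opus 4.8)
The plan is to realize the information-theoretic BB84-measurement protocol sketched in the technical overview, using an $(\lceil 9.1m_2\rceil,\tfrac{1}{2p})$-RSP to emulate the quantum verifier and quantum channel over a classical channel. Assume such an RSP $(\cV,\cP)$ exists and set $n\coloneqq\lceil 9.1m_2\rceil$. In the initialization phase, $\cV_1$ and $\cP_1$ run this RSP (from \cref{def:RSPV}) with $\cV_1$ acting as the RSP sender and $\cP_1$ as the RSP receiver; $\cV_1$ outputs $v\coloneqq(x,\theta,\mathsf{flag})$ and $\cP_1$ outputs $\state$ set to the empty string together with the $n$-qubit register $\mathbf{Q}$ as $\sigma_\state$, so $m_1=n$ (the prover may use more than $n$ qubits while running the RSP, which is permitted). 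In the execution phase, $\cV_2$ sends $\theta$, $\cP_2$ measures the $i$-th qubit of $\sigma_\state$ in the computational basis if $\theta_i=0$ and in the Hadamard basis if $\theta_i=1$, obtains $x'$, and sends it back; $\cV_2$ accepts iff $\mathsf{flag}=\mathsf{pass}$ and $x'=x$. Completeness follows from RSP completeness: with probability $1-\negl(\secp)$ the honest run gives $\mathsf{flag}=\mathsf{pass}$ and a state within trace distance $\negl(\secp)$ of $\bigotimes_{i=1}^{n}H^{\theta_i}\ket{x_i}$, so the prescribed measurement returns $x$.

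For soundness, fix malicious $(\cP_1^*,\cP_2^*)$ with $\cP_1^*$ outputting a classical string $s$ and an $m_2$-qubit state $\rho$. Regard $\cP_1^*$ as a malicious RSP receiver whose output register $\mathbf{Q'}$ decomposes as $\mathbf{Q'}_{\mathrm{cl}}\mathbf{Q'}_{\mathrm{q}}$ holding $s$ and $\rho$. Let $\cD$ be the QPT distinguisher that, given the post-selected state on $(\mathbf{F},\mathbf{D},\mathbf{Q'})$, measures $\mathbf{Q'}_{\mathrm{cl}}$ in the computational basis to recover a string, reads $(x,\theta)$ off $\mathbf{D}$, runs $\cP_2^*$ on that string, $\mathbf{Q'}_{\mathrm{q}}$, and $\theta$ to get $x'$, and outputs $\top$ iff $x'=x$. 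Because $\cP_1^*$'s output on $\mathbf{Q'}_{\mathrm{cl}}$ is already classical, this extra measurement does not change the real behavior, and one checks that the PoQM acceptance probability equals $\Tr[\Pi^{\mathsf{pass}}_\mathbf{F}\sigma_{\mathbf{F},\mathbf{D},\mathbf{Q'}}]\cdot\Pr[\top\gets\cD(\cdot)]$. Invoking $\tfrac{1}{2p}$-soundness of the RSP yields a QPT simulator $\mathsf{Sim}$ for which this acceptance probability is at most $\tfrac{1}{2p(\secp)}+\Tr[\Pi^{\mathsf{pass}}_\mathbf{F}\mathsf{Sim}(\eta_{\mathbf{D},\mathbf{Q}})]\cdot\Pr[\top\gets\cD(\cdot)]$; the second summand is precisely the probability that, in the ideal experiment where $(x,\theta)$ is uniform and $\mathbf{Q}$ holds $\bigotimes_i H^{\theta_i}\ket{x_i}$, $\mathsf{Sim}$ produces $\mathsf{flag}=\mathsf{pass}$ and then $\cP_2^*$ outputs $x$.

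The crux is to bound this ideal probability by $2^{m_2}\cdot 2^{-\frac{\xi}{2}n+2^{-n}}$. Crucially $\mathsf{Sim}$ acts only on $\mathbf{Q}$, so conditioned on $(x,\theta)$ the pair $(\mathbf{F},\mathbf{Q'})$ is a fixed quantum channel applied to $\bigotimes_i H^{\theta_i}\ket{x_i}$, with no direct access to $(x,\theta)$. Consider modifying the ideal experiment by additionally measuring $\mathbf{Q'}_{\mathrm{q}}$ in the computational basis just before $\cP_2^*$ receives it. By \cref{lemma:bz} with $k=2^{m_2}$ (inserting this $2^{m_2}$-outcome measurement at that stage of the experiment whose final output is the accept/reject bit), the ideal acceptance probability is at most $2^{m_2}$ times that of the modified experiment. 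But the modified experiment is an instance of the LOCC-leakage game of \cref{lem:LOCC_leak} with $N=1$ and security parameter $n$: the classical leakage $L_1$ is the triple consisting of $\mathsf{flag}$, the measurement of $\mathbf{Q'}_{\mathrm{cl}}$, and the measurement of $\mathbf{Q'}_{\mathrm{q}}$, produced by a (not necessarily efficient, which \cref{lem:LOCC_leak} allows) quantum algorithm applied to $\bigotimes_i H^{\theta_i}\ket{x_i}$; the adversary then sees $\theta$, runs $\cP_2^*$, and outputs its guess for $x$ (discarding $\mathsf{flag}$ only weakens it). Hence \cref{lem:LOCC_leak} bounds the modified acceptance probability by $2^{-\frac{\xi}{2}n+2^{-n}}$, so the ideal probability is at most $2^{m_2}\cdot 2^{-\frac{\xi}{2}n+2^{-n}}$.

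Finally, with $n=\lceil 9.1m_2\rceil$ and $\xi>0.22$ we get $\tfrac{\xi}{2}\cdot 9.1>1$, so the exponent $m_2-\tfrac{\xi}{2}\lceil 9.1m_2\rceil+2^{-\lceil 9.1m_2\rceil}\le -c\,m_2+1$ for an absolute constant $c>0$; since $m_2(\secp)=\omega(\log\secp)$ this is $\negl(\secp)$, so the soundness bound $\tfrac{1}{2p(\secp)}+2^{m_2}\cdot 2^{-\frac{\xi}{2}\lceil 9.1m_2\rceil+2^{-\lceil 9.1m_2\rceil}}$ is at most $\tfrac{1}{p(\secp)}$ for all large $\secp$, giving $(1/p,m_2)$-soundness (and combining with \cref{lem:LWEtoRSPs} yields the LWE-based corollary). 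I expect the main obstacle to be the careful bookkeeping in the middle steps — verifying that the RSP simulator guarantee, which is phrased as a $\Pi^{\mathsf{pass}}$-weighted distinguishing advantage, composes cleanly with the post-selection and with the \cref{lemma:bz} reduction, and recognizing that one must run this reduction \emph{directly} for the $m_2$-qubit adversary rather than first passing to the $m_2=0$ case via \cref{thm:0_to_m}, since the latter route would amplify the RSP error $\tfrac{1}{2p}$ by the prohibitive factor $2^{m_2}$.
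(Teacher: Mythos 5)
Your proposal is correct and follows essentially the same route as the paper: the same RSP-based construction, the same use of RSP soundness with a distinguisher that simulates the execution phase (so that the $\tfrac{1}{2p}$ loss is incurred only once and is \emph{not} amplified by $2^{m_2}$ — your closing remark matches exactly how the paper orders these steps), the same factor-$2^{m_2}$ reduction via \cref{lemma:bz}, and the same final invocation of \cref{lem:LOCC_leak} with $n=\lceil 9.1m_2\rceil$. The only cosmetic difference is that the paper inserts an extra hybrid that first replaces the simulator-plus-postselection by handing the adversary the BB84 state directly before applying the leakage lemma, whereas you fold $\mathsf{Sim}$ into the leakage-game algorithm and drop the $\mathsf{pass}$ condition at the end; both are valid.
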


    By combining this theorem with \cref{lem:LWEtoRSPs}, we obtain the following corollary.
    \begin{corollary}
        Let $p$ be any polynomial. Let $m_2:\mathbb{N}\to\mathbb{N}$ be any polynomially bounded function such that $m_2(\secp)=\omega(\log(\secp))$.  Assuming the polynomial hardness of LWE, $r$-round $(1-\negl,1/p,\lceil9.1m_2\rceil,m_2)$-PoQM exist with a certain polynomial $r$.
    \end{corollary}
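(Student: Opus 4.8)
The plan is to instantiate, over a classical channel, the information-theoretically-secure PoQM of \cite{Maxfield} sketched in the overview, using an $(\lceil 9.1 m_2\rceil,\tfrac{1}{2p})$-RSP in place of the quantum verifier and the quantum channel. Write $n\coloneqq\lceil 9.1 m_2\rceil$ and let $(\cV_{\mathsf{RSP}},\cP_{\mathsf{RSP}})$ be the RSP. In the initialization phase, $\cV_1$ runs $\cV_{\mathsf{RSP}}$ and obtains $(x,\theta)\in\bit^{n}\times\bit^{n}$ together with $\mathsf{flag}\in\{\mathsf{pass},\mathsf{fail}\}$, outputting $v\coloneqq(x,\theta,\mathsf{flag})$, while $\cP_1$ runs $\cP_{\mathsf{RSP}}$, obtains the $n$-qubit register $\mathbf Q$, and outputs $\state\coloneqq\bot$ and $\sigma_\state\coloneqq\mathbf Q$, so that $m_1=n$. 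In the (single-round) execution phase, $\cV_2$ sends $\theta$ to $\cP_2$; $\cP_2$ measures the $i$-th qubit of $\sigma_\state$ in the computational basis if $\theta_i=0$ and in the Hadamard basis if $\theta_i=1$, obtaining $x'\in\bit^{n}$, and returns $x'$; $\cV_2$ outputs $\top$ iff $\mathsf{flag}=\mathsf{pass}$ and $x'=x$. Completeness is immediate: by RSP completeness the state at the end of the initialization phase is $\negl$-close to $\ket{\mathsf{pass}}\bra{\mathsf{pass}}_{\mathbf F}\otimes\eta_{\mathbf D,\mathbf Q}$, and in $\eta_{\mathbf D,\mathbf Q}$ the register $\mathbf Q$ holds exactly $\bigotimes_i H^{\theta_i}\ket{x_i}$, which returns $x$ with certainty under the basis specified by $\theta$.

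For soundness, fix a malicious $(\cP_1^*,\cP_2^*)$ where $\cP_1^*$ outputs a classical string $s$ and an $m_2$-qubit state $\rho$. I would view $\cP_1^*$ as an RSP adversary whose output register $\mathbf Q'$ is the pair $(s,\rho)$, with $s$ held in the computational basis, and view ``run the execution phase with $\cP_2^*(s,\rho)$ and accept iff the reported $x'$ equals $x$'' as a QPT distinguisher $\cD$ acting on $\mathbf D$ (holding $(x,\theta)$) and $\mathbf Q'$. Since $\cV_2$ accepts only when $\mathsf{flag}=\mathsf{pass}$, the acceptance probability equals $\Tr[\Pi^{\mathsf{pass}}_{\mathbf F}\sigma]\cdot\Pr[\top\gets\cD(\text{post-selected real state})]$, so by $\tfrac{1}{2p}$-soundness of the RSP it is at most $\Pr_{\mathrm{ideal}}[\cD\text{ accepts}]+\tfrac{1}{2p}$, where ``ideal'' means that $\mathbf D$ holds a uniform $(x,\theta)$, $\mathbf Q$ holds $\bigotimes_i H^{\theta_i}\ket{x_i}$, and $\mathbf Q$ is then processed by the RSP simulator $\mathsf{Sim}$ into $(\mathbf F,\mathbf Q')$. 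The crucial structural point is that $\mathsf{Sim}$ only touches $\mathbf Q$ and hence never sees $x$ or $\theta$, so the ideal experiment is exactly an instance of the LOCC-leakage game for BB84 states with a single leakage round, except that the adversary additionally keeps an $m_2$-qubit quantum state. I would remove that quantum memory by inserting a computational-basis measurement of the $m_2$-qubit part of $\mathbf Q'$ just before $\cP_2^*$ runs; by \cref{lemma:bz} this costs a factor $2^{m_2}$. After this measurement the adversary's entire state before it learns $\theta$ is classical, so \cref{lem:LOCC_leak} (with its security parameter set to $n=\lceil 9.1 m_2\rceil$, which tends to infinity since $m_2=\omega(\log\secp)$) bounds the resulting acceptance probability by $2^{-\frac{\xi}{2}n+2^{-n}}$.

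Combining, any malicious prover is accepted with probability at most $2^{m_2}\cdot 2^{-\frac{\xi}{2}\lceil 9.1 m_2\rceil+2^{-\lceil 9.1 m_2\rceil}}+\tfrac{1}{2p}$. Since $\tfrac{\xi}{2}>0.11$ we have $\tfrac{\xi}{2}\cdot 9.1>1.001$, so the first term is at most $2^{1-0.001 m_2}$, which is negligible because $m_2=\omega(\log\secp)$; hence for all sufficiently large $\secp$ the whole bound is below $\tfrac{1}{p(\secp)}$, giving $(1/p,m_2)$-soundness, and $m_1=n=\lceil 9.1 m_2\rceil$ by construction. Plugging in \cref{lem:LWEtoRSPs} gives the corollary, with round count $r$ equal to that of the RSP plus one. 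The step I expect to be the main obstacle is the careful composition of the weighted, simulation-based RSP soundness guarantee with the information-theoretic leakage lemma: one must make sure that the $2^{m_2}$ blow-up from \cref{lemma:bz} multiplies only the super-polynomially small LOCC-leakage term and not the $\tfrac{1}{2p}$ RSP error (which is exactly why one cannot first prove $(\epsilon,0)$-soundness and invoke \cref{thm:0_to_m}), and one must argue that forcing the $s$-register to be classical and measuring the $\rho$-register are harmless in the real execution while being legitimate operations inside the distinguisher.
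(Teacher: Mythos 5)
Your proposal is correct and follows essentially the same route as the paper: the same RSP-based construction, the same hybrid ordering (first pay the $\tfrac{1}{2p}$ RSP-soundness error to move to the ideal BB84 world, then apply \cref{lemma:bz} to collapse the $m_2$ qubits at a cost of $2^{m_2}$, then invoke \cref{lem:LOCC_leak}), and the same parameter calculation showing $2^{m_2}\cdot 2^{-\frac{\xi}{2}\lceil 9.1m_2\rceil+2^{-\lceil 9.1m_2\rceil}}=\negl(\secp)$. Your closing observation that the $2^{m_2}$ blow-up must multiply only the leakage term and not the RSP error is precisely why the paper runs the hybrids in this order rather than invoking \cref{thm:0_to_m}.
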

    
    \begin{proof}[Proof of \cref{thm:RSPtoPoQM}.]
        Assume that $(\lceil9.1m_2\rceil,\frac{1}{2p})$-RSPs exist. Let $(\cV,\cP)$ be a $(\lceil9.1m_2\rceil,\frac{1}{2p})$-RSP. We construct a $(1-\negl,1/p,\lceil9.1m_2\rceil,m_2)$-PoQM $(\cV_1,\cP_1,\cV_2,\cP_2)$ as follows:
        \paragraph{Initialization Phase:}
        \begin{enumerate}
            \item $\cV_1$ and $\cP_1$ take $1^\secp$ as input.
            \item $\cV_1$ runs $v\gets\cV(1^\secp)$ where $v\in\{(\mathsf{pass},x,\theta),\mathsf{fail}\}$ and $\cP_1$ runs $\phi\gets\cP(1^\secp)$.
            Here, $x$ and $\theta$ are $\lceil9.1m_2\rceil$-bit strings,
            and $\phi$ is a $\lceil9.1m_2\rceil$-qubit state.
            \item $\cV_1$'s output is $v$. $\cP_1$'s output is $(\state,\sigma_\state)\coloneqq(1^\secp,\phi)$.
        \end{enumerate}

        \paragraph{Execution Phase:}
        \begin{enumerate}
            \item $\cV_2$ takes $v\in\{(\mathsf{pass},x,\theta),\mathsf{fail}\}$ as input. 
            $\cP_2$ takes $(\state,\sigma_\state)=(1^\secp,\phi)$ as input.
            \item If $v=\mathsf{fail}$, then $\cV_2$ samples $\theta\gets\bit^{\lceil9.1m_2\rceil}$ and sends it to $\cP_2$. 
            If 
            $v=(\mathsf{pass},x,\theta)$,  
            $\cV_2$ sends $\theta$ to $\cP_2$.
            \item For each $i\in[\lceil9.1m_2\rceil]$, $\cP_2$ measures $i$th qubit of $\phi$ in the computational basis if $\theta_i=0$ or 
            in the Hadamard basis if $\theta_i=1$.
            Let $x_i'$ be the measurement result on the $i$th qubit. $\cP_2$ sets $x'\coloneqq x_1'\|...\|x_{\lceil9.1m_2\rceil}'$.
            \item $\cP_2$ sends $x'$ to $\cV_2$. 
           \item 
           If $v=\mathsf{fail}$ or $x\neq x'$, $\cV_2$ outputs $\bot$. Otherwise, $\cV_2$ outputs $\top$.
        \end{enumerate}
        
        Now we show that the constructed $(\cV_1,\cP_1,\cV_2,\cP_2)$ is a $(1-\negl,1/p,\lceil9.1m_2\rceil,m_2)$-PoQM. 
        
        $(1-\negl)$-completeness is straightforward.
        Let us next show $(1/p,m_2)$-soundness. We define $\mathbf{Hybrid}_0$ as follows, which is the original security game for
        $(1/p,m_2)$-soundness. 

        \paragraph{$\mathbf{Hybrid}_0$:}
        \begin{enumerate}
            \item $\cV_1$ and $\cP_1^*$ take $1^\secp$ as input.
            \item They run the RSP. $\cV_1$ outputs $v\in\{(\mathsf{pass},x,\theta),\mathsf{fail}\}$. 
            $\cP_1^*$ outputs a quantum state $\sigma_\mathbf{Q'}$ on the register $\mathbf{Q}'$.\label{step:RSP}
            \item 
            $\cP_1^*$ runs a certain QPT algorithm $E$ on 
            $\sigma_\mathbf{Q'}$ to get
            $(s,\rho)$, where $s$ is a classical bit string and $\rho$ is an 
            $m_2$-qubit quantum state: $(s,\rho)\gets E(\sigma_\mathbf{Q'})$.
            $\cP_1^*$ outputs $(s,\rho)$.
            \label{encodehyb0}
            \item 
            $\cV_2$ takes $v$ as input.
            $\cP_2^*$ takes $(s,\rho)$ as input.\label{V2P2hyb0}
            \item If $v=\mathsf{fail}$, then $\cV_2$ samples $\theta\gets\bit^{\lceil9.1m_2\rceil}$ and sends it to $\cP_2^*$. 
            If $v=(\mathsf{pass},x,\theta)$, $\cV_2$ sends $\theta$ to $\cP_2^*$.
            \item $\cP_2^*$ sends $x'$ to $\cV_2$.
            \item If $v=\mathsf{fail}$ or $x\neq x'$, $\cV_2$ outputs $\bot$. 
            Otherwise, $\cV_2$ outputs $\top$.
        \end{enumerate}
        
        Because of $\frac{1}{2p}$-soundness of the RSP (\cref{def:RSPVsound}), there exists a non-uniform QPT algorithm $\mathsf{Sim}$.
        By using it, we define $\mathbf{Hybrid}_1$, which is the same as $\mathbf{Hybrid}_0$ except for the step 2:

        \paragraph{$\mathbf{Hybrid}_1$:}
        \begin{enumerate}
            \item[2.] $\cV_1$ samples $(x,\theta)\gets\bit^{\lceil9.1m_2\rceil}\times\bit^{\lceil9.1m_2\rceil}$ and generates $\mathsf{Sim}(\bigotimes_{i=1}^{\lceil 9.1m_2\rceil} H^{\theta_i}|x_i\rangle\langle x_i|H^{\theta_i})$, 
            which consists of two registers $\mathbf{F}$ and $\mathbf{Q}'$. 
            $\cV_1$ gets $\mathsf{flag}\in\{\mathsf{pass,fail}\}$ by measuring the register $\mathbf{F}$ and sends the register $\mathbf{Q'}$ of the post-measurement state
            to $\cP_1^*$. $\cV_1$ sets $v\coloneqq\mathsf{fail}$ 
            if $\mathsf{flag}=\mathsf{fail}$. Otherwise, $\cV_1$ sets $v\coloneqq(\mathsf{pass},x,\theta)$. 
        \end{enumerate}

        \begin{lemma} \label{lem:hyb0.1}
        \begin{align}
            |\Pr[\top\gets \mathbf{Hybrid}_0(\secp)]-\Pr[\top\gets\mathbf{Hybrid}_1(\secp)]|\le\frac{1}{2p(\secp)} 
        \end{align}
        for all sufficiently large $\secp\in\mathbb{N}$.
            \begin{proof}[Proof of \cref{lem:hyb0.1}]
                For the sake of contradiction, we assume that 
                \begin{align}
                    |\Pr[\top\gets \mathbf{Hybrid}_0(\secp)]-\Pr[\top\gets\mathbf{Hybrid}_1(\secp)]|>\frac{1}{2p(\secp)} \label{Eq:nothyb01}
                \end{align}
                for infinitely many $\secp\in\mathbb{N}$.
                Then, we can construct a non-uniform QPT algorithm $\cD$ that breaks $\frac{1}{2p}$-soundness of $(\lceil9.1m_2\rceil,\frac{1}{2p})$-RSP as follows:

                \begin{enumerate}
                    \item Get a quantum state over registers $\mathbf{F},\mathbf{D}, \mathbf{Q'}$, where registers are defined as in \cref{def:RSPV}. 
                    \item Get $(x,\theta)\in\bit^{\lceil9.1m_2\rceil}\times\bit^{\lceil9.1m_2\rceil}$ by measuring register $\mathbf{D}$. Set $v=(\mathsf{pass},x,\theta)$.
                    \item Run $(s,\rho)\gets E(\xi_{\mathbf{Q'}})$, where $E$ is the algorithm of step \ref{encodehyb0} of $\mathbf{Hybrid}_0$ and $\xi_\mathbf{Q'}$ is the reduced state on the register $\mathbf{Q'}$ of the post-measurement state.
                    \item Simulate the interaction between $\cV_2$ and $\cP_2^*$ from the step \ref{V2P2hyb0} of $\mathbf{Hybrid}_0$ to the last step.
                    \item Output $\top$ if $\cV_2$ outputs $\top$. Otherwise, output $\bot$.
                \end{enumerate}

                It is clear that 
                \begin{align}
                    \Pr[\top\gets\mathbf{Hybrid}_0(\secp)]=\Tr[\Pi^\mathsf{pass}_\mathbf{F}\sigma_\mathbf{F,D,Q'}]\Pr[\top\gets\cD
                    \left(\frac{\Pi^\mathsf{pass}_\mathbf{F}\sigma_\mathbf{F,D,Q'}\Pi^\mathsf{pass}_\mathbf{F}}{\Tr[\Pi^\mathsf{pass}_\mathbf{F}\sigma_\mathbf{F,D,Q'}]}\right)] ,
                \end{align}
                where $\sigma_\mathbf{F,D,Q'}$ is the output of $\cV_1$ and $\cP_1^*$ at step \ref{step:RSP} of $\mathbf{Hybrid}_0$, and
                \begin{align}
                    \Pr[\top\gets\mathbf{Hybrid}_1(\secp)]=\Tr[\Pi^\mathsf{pass}_\mathbf{F}\mathsf{Sim}(\eta_{\mathbf{D,Q}})]\Pr[\top\gets\cD\qty(\frac{\Pi^\mathsf{pass}_\mathbf{F}\mathsf{Sim}(\eta_{\mathbf{D,Q}})\Pi^\mathsf{pass}_\mathbf{F}}{\Tr[\Pi^\mathsf{pass}_\mathbf{F}\mathsf{Sim}(\eta_{\mathbf{D,Q}})]})].
                \end{align}
                By \cref{Eq:nothyb01}, 
                \begin{align}
                    &\left|\Tr[\Pi^\mathsf{pass}_\mathbf{F}\sigma_\mathbf{F,D,Q'}]\Pr[\top\gets\cD\qty(\frac{\Pi^\mathsf{pass}_\mathbf{F}\sigma_\mathbf{F,D,Q'}\Pi^\mathsf{pass}_\mathbf{F}}{\Tr[\Pi^\mathsf{pass}_\mathbf{F}\sigma_\mathbf{F,D,Q'}]})]\right. \\
                    &\left.-\Tr[\Pi^\mathsf{pass}_\mathbf{F}\mathsf{Sim}(\eta_{\mathbf{D,Q}})]\Pr[\top\gets\cD\qty(\frac{\Pi^\mathsf{pass}_\mathbf{F}\mathsf{Sim}(\eta_{\mathbf{D,Q}})\Pi^\mathsf{pass}_\mathbf{F}}{\Tr[\Pi^\mathsf{pass}_\mathbf{F}\mathsf{Sim}(\eta_{\mathbf{D,Q}})]})]\right| \\
                    &=|\Pr[\top\gets \mathbf{Hybrid}_0(\secp)]-\Pr[\top\gets \mathbf{Hybrid}_1(\secp)]| >\frac{1}{2p(\secp)}
                \end{align}
                for infinitely many $\secp\in\mathbb{N}$. This contradicts $\frac{1}{2p}$-soundness of the RSP.
            \end{proof}
        \end{lemma}

        Let us define $\mathbf{Hybrid}_2$, which is the same as $\mathbf{Hybrid}_1$ except for the step 2:

        \paragraph{$\mathbf{Hybrid}_2$:}
        \begin{enumerate}
            \item[2.] $\cV_1$ samples $(x,\theta)\gets\bit^{\lceil9.1m_2\rceil}\times\bit^{\lceil9.1m_2\rceil}$, generates $\bigotimes_{i=1}^{\lceil 9.1m_2\rceil} H^{\theta_i}|x_i\rangle\langle x_i|H^{\theta_i}$ and sends it to $\cP_1^*$. $\cV_1$ sets $v\coloneqq(\mathsf{pass},x,\theta)$
        \end{enumerate}

        As shown below, the acceptance probability of $\mathbf{Hybrid}_2$ is at least that of $\mathbf{Hybrid}_1$. 

        \begin{lemma} \label{lem:hyb1.2}
            \begin{align}
                \Pr[\top\gets\mathbf{Hybrid}_1(\secp)]\le\Pr[\top\gets\mathbf{Hybrid}_2(\secp)]
            \end{align}
            for all $\secp\in\mathbb{N}$.
        \end{lemma}   

        \begin{proof}[Proof of \cref{lem:hyb1.2}]
        We start by expanding the acceptance probability of $\mathbf{Hybrid}_1$.
        \begin{align}
            &\Pr[\top\gets\mathbf{Hybrid}_1(\secp)] \\
            &=\Pr[\mathsf{pass}]\Pr[\top\gets\langle\cV_2(\mathsf{pass},x,\theta),\cP_2^*(s,\rho)\rangle:
            \begin{array}{l}
                (x,\theta)\gets\bit^{\lceil9.1m_2\rceil}\times\bit^{\lceil9.1m_2\rceil} \\
                (s,\rho)\gets E(\zeta^{x,\theta}_\mathbf{Q'})
            \end{array}
            ].
        \end{align}
            Here, $\zeta^{x,\theta}_\mathbf{Q'}$ is the reduced state on register $\mathbf{Q}'$ of
            \begin{align}
                \frac{\Pi^\mathsf{pass}_\mathbf{F}\mathsf{Sim}(\bigotimes_{i=1}^{\lceil 9.1m_2\rceil} H^{\theta_i}|x_i\rangle\langle x_i|H^{\theta_i})\Pi^\mathsf{pass}_\mathbf{F}}{\Tr[\Pi^\mathsf{pass}_\mathbf{F}\mathsf{Sim}(\bigotimes_{i=1}^{\lceil 9.1m_2\rceil} H^{\theta_i}|x_i\rangle\langle x_i|H^{\theta_i})]},
            \end{align} 
            and 
            \begin{align}
                \Pr[\mathsf{pass}]=\Tr[\Pi^\mathsf{pass}_\mathbf{F}\mathsf{Sim}(\bigotimes_{i=1}^{\lceil 9.1m_2\rceil} H^{\theta_i}|x_i\rangle\langle x_i|H^{\theta_i})].
            \end{align}

            Next, we write down explicitly the acceptance probability of $\mathbf{Hybrid}_2$. Then we bound this probability by considering the special case where $\cP_1^*$, as its first step after applying $\mathsf{Sim}$, measures register $\mathbf{F}$, and condition on the measurement outcome being $\mathsf{pass}$. We further restrict by replacing the state with its reduced version. Since these restrictions can only reduce the acceptance probability, the resulting experiment provides a lowerbound for $\mathbf{Hybrid}_2$, which coincides exactly with the acceptance probability of $\mathbf{Hybrid}_1$.
            \begin{align}
                &\Pr[\top\gets\mathbf{Hybrid}_2(\secp)] \\
                &=\Pr[\top\gets\langle\cV_2(\mathsf{pass},x,\theta),\cP_2^*(s,\rho)\rangle:
            \begin{array}{l}
                (x,\theta)\gets\bit^{\lceil9.1m_2\rceil}\times\bit^{\lceil9.1m_2\rceil} \\
                (s,\rho)\gets E\qty(\bigotimes_{i=1}^{\lceil 9.1m_2\rceil} H^{\theta_i}|x_i\rangle\langle x_i|H^{\theta_i})
            \end{array}
            ] \\
            &\ge\Pr[\mathsf{pass}]\Pr[\top\gets\langle\cV_2(\mathsf{pass},x,\theta),\cP_2^*(s,\rho)\rangle:
            \begin{array}{l}
                (x,\theta)\gets\bit^{\lceil9.1m_2\rceil}\times\bit^{\lceil9.1m_2\rceil} \\
                (s,\rho)\gets E\qty(\frac{\Pi^\mathsf{pass}_\mathbf{F}\mathsf{Sim}(\bigotimes_{i=1}^{\lceil 9.1m_2\rceil} H^{\theta_i}|x_i\rangle\langle x_i|H^{\theta_i})\Pi^\mathsf{pass}_\mathbf{F}}{\Tr[\Pi^\mathsf{pass}_\mathbf{F}\mathsf{Sim}(\bigotimes_{i=1}^{\lceil 9.1m_2\rceil} H^{\theta_i}|x_i\rangle\langle x_i|H^{\theta_i})]})
            \end{array}
            ] \\
            &\ge\Pr[\mathsf{pass}]\Pr[\top\gets\langle\cV_2(\mathsf{pass},x,\theta),\cP_2^*(s,\rho)\rangle:
            \begin{array}{l}
                        (x,\theta)\gets\bit^{\lceil9.1m_2\rceil}\times\bit^{\lceil9.1m_2\rceil}  \\
                        (s,\rho)\gets E(\zeta^{x,\theta}_\mathbf{Q'})
            \end{array}
            ] \\
            &=\Pr[\top\gets\mathbf{Hybrid}_1(\secp)].
            \end{align}
            
        \end{proof}

        We define $\mathbf{Hybrid}_3$, which is the same as $\mathbf{Hybrid}_2$ except for steps 3 and 4:

        \paragraph{$\mathbf{Hybrid}_3$:}
        \begin{enumerate}
            \item[3.]
            $\cP_1^*$ runs a certain QPT algorithm $E$ on 
            $\bigotimes_{i=1}^{\lceil 9.1m_2\rceil} H^{\theta_i}|x_i\rangle\langle x_i|H^{\theta_i}$ to get
            $(s,\rho)$, where $s$ is a classical bit string and $\rho$ is an 
            $m_2$-qubit quantum state: $(s,\rho)\gets E(\bigotimes_{i=1}^{\lceil 9.1m_2\rceil} H^{\theta_i}|x_i\rangle\langle x_i|H^{\theta_i})$. Get a bit string $p\in\bit^{m_2}$ by measuring $\rho$ in the computational basis. Set $s'\coloneqq(s,p)$. 
            $\cP_1^*$ outputs $s'$.
            \label{encode}
            \item [4.]
            $\cV_2$ takes $v$ as input.
            $\cP_2^*$ takes $s'=(s,p)$ as input, and uses it as $(s,|p\rangle\langle p|)$.
        \end{enumerate}
        
        By \cref{lemma:bz}, we can obtain the following lemma.


        \begin{lemma} \label{lem:hyb2.3}
            \begin{align}
                \Pr[\top\gets\mathbf{Hybrid}_2(\secp)]\le2^{m_2(\secp)}\Pr[\top\gets\mathbf{Hybrid}_3(\secp)]
            \end{align}
            for all $\secp\in\mathbb{N}$.
        \end{lemma}

        To conclude the theorem, we show the following lemma.

        \begin{lemma} \label{lem:QCbound}
        For all sufficiently large $\secp\in\mathbb{N}$,
        \begin{align}
            \Pr[\top\gets\mathbf{Hybrid}_3(\secp)]\le2^{-\frac{\xi}{2}\cdot \lceil9.1m_2(\secp)\rceil + 2^{-\lceil9.1m_2(\secp)\rceil}}.
        \end{align}
        Here, $\xi=-\log(\frac{1}{2}+\frac{1}{2\sqrt{2}})>0.22$.
            \begin{proof}[Proof of \cref{lem:QCbound}]
                For the sake of contradiction, we assume that there exists a pair $(\cP_1^*,\cP_2^*)$ of adversaries such that
                \begin{align}
                    2^{-\frac{\xi}{2}\cdot \lceil9.1m_2(\secp)\rceil + 2^{-\lceil9.1m_2(\secp)\rceil}}
                    &<\Pr[\top\gets\mathbf{Hybrid}_3(\secp)]\\
                    &=\Pr[x=x':
                    \begin{array}{l}
                        (x,\theta)\gets\bit^{\lceil9.1m_2(\secp)\rceil}\times \bit^{\lceil9.1m_2(\secp)\rceil}  \\
                        (s,p)\gets\cP_1^*(\bigotimes_{i=1}^{\lceil 9.1m_2(\secp)\rceil}H^{\theta_i}|x_i\rangle\langle x_i|H^{\theta_i}) \\
                        x'\gets\cP_2^*(\theta,s,|p\rangle\langle p|)
                    \end{array}
                    ]
                \end{align}
                for infinitely many $\secp\in\mathbb{N}$.
                From this $(\cP_1^*,\cP_2^*)$, we can construct a non-uniform QPT adversary $\cA$ that breaks \cref{lem:LOCC_leak} as follows:
                \begin{enumerate}
                    \item Send the classical description of $\cP_1^*$ to $\cC$. 
                    \item
                    $\cC$ runs $(s,p)\gets\cP_1^*(\bigotimes_{i=1}^{\lceil 9.1m_2(\secp)\rceil}H^{\theta_i}|x_i\rangle\langle x_i|H^{\theta_i})$ and returns 
                    $(s,p)$ to $\cA$.
                    \item Receive $\theta$ from $\cC$, run $x'\gets\cP_2^*(\theta,s,|p\rangle\langle p|)$ and send $x'$ to $\cC$.
                \end{enumerate}
                Then, for infinitely many $\secp\in\mathbb{N}$,
                \begin{align}
                \Pr[\top\gets\cC]&=
                    \Pr[x=x':
                    \begin{array}{l}
                        (x,\theta)\gets\bit^{\lceil9.1m_2(\secp)\rceil}\times \bit^{\lceil9.1m_2(\secp)\rceil}  \\
                        (s,p)\gets\cP_1^*(\bigotimes_{i=1}^{\lceil 9.1m_2(\secp)\rceil}H^{\theta_i}|x_i\rangle\langle x_i|H^{\theta_i}) \\
                        x'\gets\cP_2^*(\theta,s,|p\rangle\langle p|)
                    \end{array}
                    ] \\
                    &=\Pr[\top\gets\mathbf{Hybrid}_3(\secp)]\\
                    &>2^{-\frac{\xi}{2}\cdot \lceil9.1m_2(\secp)\rceil + 2^{-\lceil9.1m_2(\secp)\rceil}}.
                \end{align}
                This contradicts \cref{lem:LOCC_leak}.
            \end{proof}
        \end{lemma}
        By combining \cref{lem:hyb0.1,lem:hyb1.2,lem:hyb2.3,lem:QCbound}, we have $\Pr[\top\gets\mathbf{Hybrid}_0(\secp)]<1/2p(\secp)+\negl(\secp)<1/p(\secp)$ for all sufficiently large $\secp\in\mathbb{N}$.
    \end{proof}

\section{Lowerbounds of PoQM}
In this section, we show that PoQM imply StatePuzzs, and extractable PoQM
imply QCCC KE.

\subsection{PoQM imply StatePuzzs}
We first show that PoQM imply StatePuzzs.
\begin{theorem}\label{thm:PoQMtoStatepuz}
Let $\alpha,\beta:\mathbb{N}\to[0,1]$ be any functions such that $\alpha(\secp)-\beta(\secp)\ge1/\poly(\secp)$ for all sufficiently large $\secp\in\mathbb{N}$. Let $m_1:\mathbb{N}\to\mathbb{N}$ be any function.
    If $(\alpha,\beta,m_1,0)$-PoQM exist, then StatePuzzs exist.
    \begin{proof}[Proof of \cref{thm:PoQMtoStatepuz}]   
    Assume that $(\alpha,\beta,m_1,0)$-PoQM exist. Let $(\cV_1,\cP_1,\cV_2,\cP_2)$ be an $(\alpha,\beta,m_1,0)$-PoQM. 
     The final state before the measurement of $\cP_1$ is written as $\sum_\state c_\state |\phi_{\mathsf{state}}\rangle|\state\rangle$ 
     with some complex coefficients $\{c_\state\}$, where
        $|\phi_\state\rangle$ is a pure $m_1'$-qubit state. 
        $\cP_1$ measures the second register to get the result $\state$. 
        $\cP_1$ outputs $(\state,\sigma_\state)$, where $\sigma_\state$ is the first $m_1$ qubits of $|\phi_\state\rangle$.
    
    From $(\cV_1,\cP_1,\cV_2,\cP_2)$, we construct an $(\alpha,\beta,m_1',0)$-PoQM $(\cV_1,\cP_1',\cV_2,\cP_2')$ as follows:
    
    \begin{enumerate}
        \item 
        $\cP_1'$ generates
        $\sum_\state c_\state |\phi_{\mathsf{state}}\rangle|\state\rangle$, measures the second register,
        and outputs
        $(\state,|\phi_\state\rangle)$.
        \item 
        $\cP_2'$ takes
        $(\state,|\phi_\state\rangle)$ as input, and runs $\cP_2(\state,\sigma_\state)$,
        where $\sigma_\state$ is the first $m_1$ qubits of $|\phi_\state\rangle$.
    \end{enumerate}
         From $(\cV_1,\cP_1',\cV_2,\cP_2')$, we construct a StatePuzz, $\Samp$, as follows:
        \begin{enumerate}
        \item
        Take $1^\secp$ as input.
            \item Run $(v,(\state,\ket{\phi_\state}))\gets\langle\cV_1(1^\secp),\cP_1'(1^\secp)\rangle$. Let $\tau$ be the transcript.
            \item Output $s\coloneqq(\state,\tau)$ and $\ket{\psi_s}\coloneqq\ket{\phi_{\state}}$.
        \end{enumerate}
        
        Now we show that thus constructed $\Samp$ is a $1/p$-StatePuzz with a certain polynomial $p$.
        From \cref{thm:StatePuzz}, such a $1/p$-StatePuzz can be amplified to obtain a StatePuzz.

        Let $p$ be a polynomial such that $p(\secp)>(\alpha(\secp)-\beta(\secp))^{-2}$ for all sufficiently large $\secp\in\mathbb{N}$. For the sake of contradiction, we assume that $\Samp$ is not a $1/p$-StatePuzz. Then there exists a non-uniform QPT algorithm $\mathcal{A}$ such that for infinitely many 
        $\secp\in\mathbb{N}$,
            \begin{align}\label{eq:notStatePuzz}
                \underset{(s,\ket{\psi_s})\gets \Samp(1^\secp)}{\mathbb{E}}\langle\psi_s|\cA(s)|\psi_s\rangle> 1 - \frac{1}{p(\secp)}.
            \end{align}

From this $\cA$, we construct a pair  
        $(\cP_1^*,\cP_2^*)$ of adversaries that breaks $(\beta,0)$-soundness of the PoQM as follows:
        \begin{itemize}
            \item $\cP_1^*$:
            Run $(\state,|\phi_\state\rangle)\gets\cP_1'(1^\secp)$. Let $\tau$ be the transcript. Output $(\state,\tau)$.
            \item $\cP_2^*$:
                 Run $\cA(\state,\tau)$. 
                 Run $\cP_2'(\state,\cA(\state,\tau))$.
        \end{itemize}
        
        $(\cP_1^*,\cP_2^*)$ can break $(\beta,0)$-soundness of the PoQM as follows. 
        \begin{align}
            &\Pr[\top\gets \langle\cV_2(v),\cP_2^*(\state,\tau)\rangle: (v,(\state,\tau))\gets \langle\cV_1(1^\secp), \cP_1^*(1^\secp)\rangle] \\
            &=\Pr[\top\gets\langle\cV_2(v),\cP_2'(\state,\cA(\state,\tau))\rangle:(v,(\state,\ket{\phi_\state}))\gets\langle\cV_1(1^\secp),\cP_1'(1^\secp)\rangle] \\
            &\ge \Pr[\top\gets \langle\cV_2(v),\cP_2'(\state,\ket{\phi_\state})\rangle: (v,(\state,\ket{\phi_\state}))\gets \langle \cV_1(1^\secp), \cP_1'(1^\secp)\rangle] \\
            &- \underset{((\state,\tau),\ket{\phi_\state})\gets \Samp(1^\secp)}{\mathbb{E}}\TD(\ket{\phi_\state},\cA(\state,\tau)) \\
            &\ge \alpha(\secp) - \underset{((\state,\tau),\ket{\phi_\state})\gets \Samp(1^\secp)}{\mathbb{E}}\sqrt{1-\langle\phi_\state|\cA(\state,\tau)|\phi_\state\rangle}            
        \end{align}
        for all sufficiently large $\secp\in\mathbb{N}$.
        By Jensen's inequality and \Cref{eq:notStatePuzz},
        \begin{align}
            \underset{((\state,\tau),\ket{\phi_\state})\gets \Samp(1^\secp)}{\mathbb{E}}\sqrt{1-\langle\phi_\state|\cA(\state,\tau)|\phi_\state\rangle}
            &\le \sqrt{1-\underset{((\state,\tau),\ket{\phi_\state})\gets \Samp(1^\secp)}{\mathbb{E}}\langle\phi_\state|\cA(\state,\tau)|\phi_\state\rangle} \\
            &< \frac{1}{p(\secp)^{\frac{1}{2}}}
        \end{align}
        for infinitely many $\secp\in\mathbb{N}$.
        Therefore,
        \begin{align}
            \Pr[\top\gets \langle\cV_2(v),\cP_2^*(\state,\tau)\rangle: (v,(\state,\tau))\gets \langle\cV_1(1^\secp), \cP_1^*(1^\secp)\rangle] 
            &> \alpha(\secp) - \frac{1}{p(\secp)^{\frac{1}{2}}} 
            >\beta(\secp)
        \end{align}
        for infinitely many $\secp\in\mathbb{N}$.
        This contradicts $(\beta,0)$-soundness of the PoQM.
    \end{proof}
\end{theorem}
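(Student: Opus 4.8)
The plan is to build a state puzzle directly from the PoQM, drawing its hardness from soundness against $0$-qubit adversaries. Since $1/p$-StatePuzzs amplify to StatePuzzs by \cref{thm:StatePuzz}, it suffices to construct a $1/p$-StatePuzz for a suitable polynomial $p$.

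First I would purify the honest prover. Write the state of $\cP_1$ just before its final measurement as $\sum_\state c_\state\ket{\phi_\state}\ket{\state}$, where each $\ket{\phi_\state}$ is a pure state on some polynomial number $m_1'$ of qubits, so that $\cP_1$'s actual output $\sigma_\state$ is the reduced state of $\ket{\phi_\state}$ on its first $m_1$ qubits. Define $\cP_1'$ to output $(\state,\ket{\phi_\state})$ and $\cP_2'$ to discard all but the first $m_1$ qubits of its quantum input and then run $\cP_2$. Then $(\cV_1,\cP_1',\cV_2,\cP_2')$ is an $(\alpha,\beta,m_1',0)$-PoQM: completeness is unchanged because $\cV_2$ sees exactly the same interaction, and $(\beta,0)$-soundness is inherited because a $0$-qubit adversary against the new scheme is literally a $0$-qubit adversary against the old one. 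The state puzzle $\Samp(1^\secp)$ then runs the initialization phase honestly, $\big(v,(\state,\ket{\phi_\state})\big)\gets\langle\cV_1(1^\secp),\cP_1'(1^\secp)\rangle$, records the transcript $\tau$, and outputs the puzzle $s\coloneqq(\state,\tau)$ together with the answer state $\ket{\psi_s}\coloneqq\ket{\phi_\state}$.

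Next I would argue security by reduction. Fix a polynomial $p$ with $p(\secp)>(\alpha(\secp)-\beta(\secp))^{-2}$, which exists because $\alpha-\beta\ge1/\poly$ by hypothesis. Suppose $\Samp$ is not a $1/p$-StatePuzz, so some non-uniform QPT $\cA$ satisfies $\mathbb{E}_{(s,\ket{\psi_s})\gets\Samp(1^\secp)}\bra{\psi_s}\cA(s)\ket{\psi_s}>1-1/p(\secp)$ for infinitely many $\secp$. Build $(\cP_1^*,\cP_2^*)$ breaking $(\beta,0)$-soundness: $\cP_1^*$ runs the initialization exactly as $\cP_1$ against $\cV_1$, obtains $(\state,\tau)$, discards its quantum output, and outputs the classical string $s=(\state,\tau)$ (and a $0$-qubit state); $\cP_2^*$, on input $s$, runs $\cA(s)$ to produce a state $\rho\approx\ket{\phi_\state}$ and then runs $\cP_2'(\state,\rho)$ against $\cV_2$. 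By completeness of the purified PoQM, $\cP_2'$ on the true $\ket{\phi_\state}$ makes $\cV_2$ accept with probability $\ge\alpha$, and replacing $\ket{\phi_\state}$ by $\cA(s)$ changes the acceptance probability by at most $\mathbb{E}_s\,\TD(\ket{\phi_\state},\cA(s))\le\mathbb{E}_s\sqrt{1-\bra{\phi_\state}\cA(s)\ket{\phi_\state}}\le\sqrt{1-\mathbb{E}_s\bra{\phi_\state}\cA(s)\ket{\phi_\state}}<1/\sqrt{p(\secp)}$, using $\TD(\ket{\phi},\rho)\le\sqrt{1-\bra{\phi}\rho\ket{\phi}}$ and then Jensen's inequality. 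Hence $\cV_2$ accepts with probability $>\alpha-1/\sqrt{p}>\beta$ infinitely often, contradicting $(\beta,0)$-soundness.

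The main obstacle is the purification step: one must verify that folding the ``side register'' of $\cP_1$ (the part of $\ket{\phi_\state}$ beyond the $m_1$ output qubits, plus ancillas) into the answer state does not change the honest verifier's view and, crucially, that the soundness game for the purified scheme is exactly the original one, so that the reduction lands in $(\beta,0)$-soundness rather than $(\beta,m_1')$-soundness. The remaining ingredients are the routine fidelity–trace-distance bound, Jensen's inequality, and the amplification lemma \cref{thm:StatePuzz}.
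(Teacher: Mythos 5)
Your proposal is correct and follows essentially the same route as the paper's proof: the same purification of $\cP_1$ into $(\cP_1',\cP_2')$, the same puzzle $s=(\state,\tau)$ with answer $\ket{\phi_\state}$, the same choice of $p>(\alpha-\beta)^{-2}$, and the same reduction via the fidelity–trace-distance bound and Jensen's inequality to contradict $(\beta,0)$-soundness. The concern you flag about the reduction landing in $(\beta,0)$-soundness is resolved exactly as you suggest, since $\cP_1^*$ outputs only the classical string $(\state,\tau)$.
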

\subsection{Extractable PoQM Imply QCCC KE}
We next show that
a restricted version of PoQM, which we call 
extractable PoQM, implies QCCC KE. 

Extractable PoQM are defined as follows.
\begin{definition}[Extractable PoQM] \label{def:urPoQM}
    Let $\gamma:\mathbb{N}\to[0,1]$ be any function. We call 
    an $(\alpha,\beta,m_1,m_2)$-PoQM 
    an $(\alpha,\beta,m_1,m_2)$-extractable PoQM with extraction probability $\gamma$ if 
    the execution phase is the following.
    
    \paragraph{Execution Phase:} In the execution phase, the interaction is of a single round (i.e., of two-message): 
   \begin{enumerate}
    \item 
    $\cV_2$ takes $v$ as input. 
   \item 
    $\cP_2$ takes $(\state,\sigma_\state)$ as input. 
  \item 
   $\cV_2$ sends a bit string $x$ to $\cP_2$. 
   \item
   $\cP_2$ sends a bit string $y$ to $\cV_2$. 
  \item 
   $\cV_2$ outputs $\top$ or $\bot$. 
   \end{enumerate} 
    Moreover, we require that there exists a QPT algorithm $\mathsf{Ext}$ such that
    \begin{align} \label{eq:ur}
        \Pr[y\gets\mathsf{Ext}(v,\tau,x):
       \begin{array}{l} 
        (v,(\state,\sigma_\state);\tau)\gets\langle\cV_1(1^\secp),\cP_1(1^\secp)\rangle\\ 
        x\gets \cV_2(v)\\
        y\gets\cP_2(\state,\sigma_\state,x)
        \end{array}
        ]\ge\gamma(\secp).
    \end{align}
        Here, $(v,(\state,\sigma_\state);\tau)\gets\langle\cV_1(1^\secp),\cP_1(1^\secp)\rangle$ means that
        $\cV_1$'s output is $v$, $\cP_1$'s output is $(\state,\sigma_\state)$, and $\tau$ is the transcript.
\end{definition}

\if0
\begin{definition}[Unique-Response PoQM] \label{def:urPoQM}
    An $(\alpha,\beta,m_1,m_2)$-unique-response PoQM is a set $(\cV_1,\cP_1,\cV_2,\cP_2)$ of interactive algorithms with the following syntax.
    \paragraph{Initialization Phase:} In the initialization phase, $\cV_1$ and $\cP_1$ interact over a classical channel. $\cV_1$ is a PPT algorithm that takes $1^\secp$ as input and outputs a bit string $v$. $\cP_1$ is a QPT algorithm that takes $1^\secp$ as input and outputs a bit string $\state$ and a pure $m_1$-qubit quantum state $\ket{\phi_\state}$. In other words,
    \begin{align}
        (v,(\state,\ket{\phi_\state}))\gets\langle\cV_1(1^\secp),\cP_1(1^\secp)\rangle.
    \end{align}
    \paragraph{Execution Phase:} 
    \begin{enumerate}
        \item $\cV_2$ gets $v$ and $\cP_2$ gets $(\state,\ket{\phi_\state})$.
        \item $\cV_2$ outputs a pair $(x,y)$ of bit strings and sends $y$ to $\cP_2$. \label{step:gen1}
        \item $\cP_2$ outputs a bit string $x'$ and sends it to $\cV_2$. \label{step:gen2}
        \item $\cV_2$ outputs $\top$ if and only if $x=x'$.
    \end{enumerate}
    We require the following two properties.
    \paragraph{\bf{$\alpha$-Completeness:}} For all sufficiently large $\secp\in\mathbb{N}$,
\begin{align}
    \Pr[
    \top\gets \langle\cV_2(v),\cP_2(\state,\ket{\phi_\state})\rangle
    : (v,(\state,\ket{\phi_\state}))\gets \langle \cV_1(1^\secp), \cP_1(1^\secp)\rangle 
    ] \ge \alpha(\secp).
\end{align}

\paragraph{\bf{$(\beta,m_2)$-Soundness:}} For any non-uniform QPT adversary $\cP_1^*$ that outputs a bit string $\state$ and an $m_2$-qubit quantum state $\rho$, for any non-uniform QPT adversary $\cP_2^*$ that takes $\state$ and $\rho$ as input, and for all sufficiently large $\secp\in\mathbb{N}$,
\begin{align}
    \Pr[
    \top\gets \langle\cV_2(v),\cP_2^*(\state,\rho)\rangle
    : (v,(\state,\rho))\gets \langle \cV_1(1^\secp), \cP_1^*(1^\secp)\rangle 
    ] \le \beta(\secp).
\end{align}
\end{definition}
\fi

The construction of PoQM from RSPs in \cref{Sec:RSPtoPoQM} 
realizes the extractable PoQM. Thus, we obtain the following lemma.
\begin{lemma}
    Let $p$ be any polynomial. Let $m_2$ be any polynomially bounded function such that $m_2(\secp)=\omega(\log(\secp))$. Assuming the polynomial hardness of LWE, $r$-round $(1-\negl,1/p,\lceil9.1m_2\rceil,m_2)$-extractable PoQM with extraction probability $1-\negl$ exist with a certain polynomial $r$.
\end{lemma}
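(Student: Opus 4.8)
The plan is to observe that the RSP-based PoQM $(\cV_1,\cP_1,\cV_2,\cP_2)$ constructed in the proof of \cref{thm:RSPtoPoQM} --- instantiated from an $(\lceil 9.1m_2\rceil,\frac{1}{2p})$-RSP, which exists under polynomial hardness of LWE by \cref{lem:LWEtoRSPs} --- is already an extractable PoQM with extraction probability $1-\negl$, so that nothing new has to be built. First I would inspect its execution phase: $\cV_2$ sends one message $\theta$ to $\cP_2$, $\cP_2$ replies with one message $x'$ (the qubitwise measurement of $\phi$ in the basis encoded by $\theta$), and then $\cV_2$ outputs its decision. This is exactly the two-message syntax of \cref{def:urPoQM}, and the parameters $(1-\negl,1/p,\lceil 9.1m_2\rceil,m_2)$ are inherited verbatim from \cref{thm:RSPtoPoQM}; the round count is $r=r_{\mathrm{RSP}}+2$, where $r_{\mathrm{RSP}}$ is the (polynomial) round count of the RSP given by \cref{lem:LWEtoRSPs}.

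Next I would exhibit the extractor: let $\mathsf{Ext}(v,\tau,\theta)$ parse $v$, output $x$ if $v=(\mathsf{pass},x,\theta)$, and output $0^{\lceil 9.1m_2(\secp)\rceil}$ otherwise (the transcript $\tau$ is not needed). To verify \cref{eq:ur} I would invoke completeness of the RSP (\cref{def:RSPV}): the joint state of $\cV$'s output $(\mathsf{flag},x,\theta)$ and $\cP$'s register $\mathbf{Q}$ lies within trace distance $\negl(\secp)$ of $\ket{\mathsf{pass}}\bra{\mathsf{pass}}_{\mathbf{F}}\otimes\eta_{\mathbf{D},\mathbf{Q}}$, and in $\eta_{\mathbf{D},\mathbf{Q}}$ the prover's register is exactly $\bigotimes_i H^{\theta_i}\ket{x_i}\bra{x_i}H^{\theta_i}$ for the very $(x,\theta)$ held by $\cV$. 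Hence, except with probability $\negl(\secp)$, we have $\mathsf{flag}=\mathsf{pass}$, so $v=(\mathsf{pass},x,\theta)$ and $\cP_2$'s input state is $\bigotimes_i H^{\theta_i}\ket{x_i}\bra{x_i}H^{\theta_i}$; in this event the honest $\cP_2$'s qubitwise measurement in the $\theta$-basis returns $x_i$ on each qubit, so $x'=x$, which is precisely $\mathsf{Ext}$'s output. Adding the two $\negl$ error terms (the trace-distance bound and the measurement discrepancy) yields extraction probability $1-\negl$.

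I do not expect a real obstacle: the statement is essentially the observation that this particular construction happens to have a one-round execution phase whose last message is classically reconstructible from $v$. The only step needing a little care is the standard passage from the trace-distance form of RSP completeness to the claim that the honest prover's measurement outcome equals $\cV$'s string $x$ with overwhelming probability --- this holds because trace distance upper bounds the statistical distance between the outcomes of any fixed measurement, and the ideal state $\eta_{\mathbf{D},\mathbf{Q}}$ produces the outcome $x$ with certainty.
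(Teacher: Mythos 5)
Your proposal is correct and matches the paper's intent exactly: the paper proves this lemma simply by remarking that the RSP-based construction of \cref{Sec:RSPtoPoQM} already has a one-round execution phase and is extractable, and your extractor (read $x$ off $v=(\mathsf{pass},x,\theta)$) together with the trace-distance argument from RSP completeness is precisely the omitted verification. No issues.
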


We show that extractable PoQM imply QCCC KE.
\begin{theorem}\label{thm:urPoQMtoKE}
    Let $m_1:\mathbb{N}\to\mathbb{N}$ be any function. Let $\alpha:\mathbb{N}\to[0,1]$ be any function. Let $c_1$ and $c_2$ be any constants such that $c_1>c_2>0$. Let $p(\secp)\coloneqq\secp^{c_1}$ and $q(\secp)\coloneqq\secp^{c_2}$. If $(\alpha,\alpha-\frac{1}{q},m_1,0)$-extractable PoQM with extraction probability $1-\frac{1}{p}$ exist, then QCCC KE exist.
\end{theorem}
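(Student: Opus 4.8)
To prove \cref{thm:urPoQMtoKE}, the plan is to first build a QCCC KE with inverse-polynomial correctness and security gaps from the extractable PoQM, and then amplify it to a full QCCC KE via \cref{lem:KEamp}. The protocol is: Alice plays the roles of $\cV_1$ and then $\cV_2$, and Bob plays $\cP_1$ and then $\cP_2$, but in the (single-round) execution phase Bob, after receiving $\cV_2$'s message $x$ and computing $y\gets\cP_2(\state,\sigma_\state,x)$, \emph{withholds} $y$ and keeps it as his key $b\coloneqq y$; Alice instead computes her key $a\coloneqq\mathsf{Ext}(v,\tau_{\mathrm{init}},x)$ with the extractor of \cref{def:urPoQM}, where $\tau_{\mathrm{init}}$ is the initialization-phase transcript. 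All messages are classical and Alice and Bob are QPT, so this is a valid QCCC scheme, and its transcript is $\tau=(\tau_{\mathrm{init}},x)$, which does \emph{not} include $y$.

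Correctness is immediate: $a=\mathsf{Ext}(v,\tau_{\mathrm{init}},x)$ equals $b=\cP_2(\state,\sigma_\state,x)$ with probability at least $\gamma=1-\frac1p$ by the extraction property, so the scheme is $(1-\frac1p)$-correct with $p(\secp)=\secp^{c_1}$.

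For security, fix a constant $c_2'$ with $c_2<c_2'<c_1$ (possible since $c_1>c_2$) and set $q'(\secp)\coloneqq\secp^{c_2'}$; I will show the scheme is $(1-\frac1{q'})$-secure. Assume not, so some non-uniform QPT $\cE$ satisfies $\Pr[a=\cE(\tau)]>1-\frac1{q'}$ for infinitely many $\secp$. Define a soundness adversary $(\cP_1^*,\cP_2^*)$ against $(\alpha-\frac1q,0)$-soundness: $\cP_1^*$ runs the honest $\cP_1$ against $\cV_1$, discards its quantum output, and outputs $s\coloneqq\tau_{\mathrm{init}}$ (and the trivial $0$-qubit state); $\cP_2^*(s)$, on receiving $x$ from $\cV_2$, outputs $\cE(s,x)$. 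The tuple $(v,(\state,\sigma_\state),\tau_{\mathrm{init}},x)$ is distributed identically here and in an honest PoQM execution, so I can chain three facts: (i) $\alpha$-completeness gives $\Pr[\text{$\cV_2$ accepts the honest }y]\ge\alpha$; (ii) the extraction property gives $\Pr[\mathsf{Ext}(v,\tau_{\mathrm{init}},x)=y]\ge1-\frac1p$, so a union bound yields $\Pr[\text{$\cV_2$ accepts }\mathsf{Ext}(v,\tau_{\mathrm{init}},x)]\ge\alpha-\frac1p$; (iii) by assumption $\Pr[\cE(\tau_{\mathrm{init}},x)=\mathsf{Ext}(v,\tau_{\mathrm{init}},x)]>1-\frac1{q'}$; another union bound then shows that $\cV_2$ accepts the message $\cE(\tau_{\mathrm{init}},x)$ actually sent by $\cP_2^*$ with probability $>\alpha-\frac1p-\frac1{q'}$. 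Because $\frac1q-\frac1p\ge\frac1{2q}>\frac1{q'}$ for all sufficiently large $\secp$ (using $c_2<c_2'<c_1$), this exceeds $\alpha-\frac1q$ for infinitely many $\secp$, contradicting $(\alpha-\frac1q,0)$-soundness. Hence the scheme is a $(1-\frac1p,1-\frac1{q'})$-QCCC KE, and \cref{lem:KEamp} applied with the constants $c_1>c_2'>0$ upgrades it to a QCCC KE.

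I expect the parameter bookkeeping in the security step to be the main obstacle. The naive reduction only gives a soundness-breaking prover winning with probability about $\alpha-\frac1p-\frac1q$, which lies \emph{below} the soundness bound $\alpha-\frac1q$; one therefore has to target the strictly weaker security level $1-\frac1{q'}$ with $c_2<c_2'<c_1$ and recover full security via \cref{lem:KEamp}, so the interplay between the three exponents $c_1$ (extraction/correctness), $c_2$ (soundness slack), and $c_2'$ (achieved security) must be arranged consistently. A secondary point to check is that Alice's key $\mathsf{Ext}(v,\tau_{\mathrm{init}},x)$ truly depends on Alice's private output $v$, which Eve never sees, so that predicting $a$ from $\tau$ is exactly the kind of task the PoQM's soundness rules out.
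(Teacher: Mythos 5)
Your proposal is correct, and the protocol skeleton, the reduction to soundness, and the final appeal to \cref{lem:KEamp} all match the paper. The one genuine difference is which of the two (almost-always-equal) strings you designate as Alice's key. You let Alice play the verifier and set $a\coloneqq\mathsf{Ext}(v,\tau_{\mathrm{init}},x)$; the paper instead lets the party holding the prover role keep $a\coloneqq y$, the withheld message of $\cP_2$, and lets the other party extract $b\coloneqq\mathsf{Ext}(v,\tau,x)$. Since QCCC KE security only protects $a$, the paper's reduction needs just one union bound (Eve's guess equals $y$ with probability $>1-\frac1q$, and $\cV_2$ accepts $y$ with probability $\ge\alpha$, so $\cV_2$ accepts Eve's guess with probability $>\alpha-\frac1q$), directly yielding a $(1-\frac1p,1-\frac1q)$-QCCC KE with the given exponents $c_1>c_2>0$. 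Your choice forces a second union bound through the extraction failure probability $\frac1p$, which is exactly the ``main obstacle'' you identify, and you resolve it correctly by interpolating a third exponent $c_2'$ with $c_2<c_2'<c_1$ and settling for $(1-\frac1p,1-\frac1{q'})$-security before amplifying. Both routes work; the paper's key assignment simply makes the bookkeeping you worried about disappear.
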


\begin{proof}[Proof of \cref{thm:urPoQMtoKE}]
    Assume that $(\alpha,\alpha-\frac{1}{q},m_1,0)$-extractable PoQM with extraction probability $1-\frac{1}{p}$ exist. Let $(\cV_1,\cP_1,\cV_2,\cP_2)$ be an $(\alpha,\alpha-\frac{1}{q},m_1,0)$-extractable PoQM with extraction probability $1-\frac{1}{p}$. We construct a QCCC KE $(\cA,\cB)$ as follows:
    \begin{enumerate}
        \item $\cA$ and $\cB$ take $1^\secp$ as input.
        \item $\cA$ runs $\cP_1(1^\secp)$, 
        and $\cB$ runs $\cV_1(1^\secp)$. Let $(\state,\sigma_\state)$ be $\cP_1$'s output. Let $v$ be $\cV_1$'s output.
        \item  
        $\cA$ runs $\cP_2(\state,\sigma_\state)$, and $\cB$ runs $\cV_2(v)$, but $\cA$ does not send $y$ to $\cB$.
        \item $\cB$ runs $y'\gets\mathsf{Ext}(v,\tau,x)$.
        \item $\cA$ outputs $a\coloneqq y$, and $\cB$ outputs $b\coloneqq y'$.
    \end{enumerate}
    Now we show that thus constructed $(\cA,\cB)$ is a $(1-\frac{1}{p},1-\frac{1}{q})$-QCCC KE. From \cref{lem:KEamp}, such a $(1-\frac{1}{p},1-\frac{1}{q})$-QCCC KE can be amplified to obtain a QCCC KE.
    
    $(1-\frac{1}{p})$-correctness is clear from \cref{eq:ur}. Next, we show $(1-\frac{1}{q})$-security. 
    For the sake of contradiction, we assume that $(\cA,\cB)$ is not $(1-\frac{1}{q})$-secure. This means that there exists a non-uniform QPT adversary $\cE$ such that 
        \begin{align} 
        \Pr[y\gets\cE(\tau,x):
       \begin{array}{l} 
        (v,(\state,\sigma_\state);\tau)\gets\langle\cV_1(1^\secp),\cP_1(1^\secp)\rangle\\ 
        x\gets \cV_2(v)\\
        y\gets\cP_2(\state,\sigma_\state,x)
        \end{array}
        ]> 1-\frac{1}{q(\secp)}
    \end{align}
    for infinitely many $\secp\in\mathbb{N}$. 
    From this $\cE$, we can construct 
    a pair $(\cP_1^*,\cP_2^*)$ of adversaries
    that breaks $(\alpha-\frac{1}{q},0)$-soundness of the extractable PoQM as follows:
    \begin{itemize}
        \item $\cP_1^*:$ Run $(\state,\sigma_\state)\gets\cP_1(1^\secp)$. Let $\tau$ be the transcript. Output $\tau$.
        \item $\cP_2^*:$ Take $\tau$ and $x$ as input,  
        and run $e\gets\cE(\tau,x)$. Send $e$ to $\cV_2$.
    \end{itemize}
    $(\cP_1^*,\cP_2^*)$ breaks $(\alpha-\frac{1}{q},0)$-soundness as follows:
    \begin{align}
        &\Pr[\top\gets\langle\cV_2(v),\cP_2^*(\tau)\rangle:(v,\tau)\gets\langle\cV_1(1^\secp),\cP_1^*(1^\secp)\rangle] \\
  &=\Pr\left[\top\gets\cV_2(v,x,e):
       \begin{array}{l} 
        (v,(\state,\sigma_\state);\tau)\gets\langle\cV_1(1^\secp),\cP_1(1^\secp)\rangle\\ 
        x\gets \cV_2(v)\\
        e\gets\cE(\tau,x)
        \end{array}
        \right]\\
   &=\Pr\left[\top\gets\cV_2(v,x,e):
       \begin{array}{l} 
        (v,(\state,\sigma_\state);\tau)\gets\langle\cV_1(1^\secp),\cP_1(1^\secp)\rangle\\ 
        x\gets \cV_2(v)\\
        e\gets\cE(\tau,x)\\
        y\gets\cP_2(\state,\sigma_\state,x)
        \end{array}
        \right]\\
    &\ge\Pr\left[\top\gets\cV_2(v,x,e)\wedge e=y:
       \begin{array}{l} 
        (v,(\state,\sigma_\state);\tau)\gets\langle\cV_1(1^\secp),\cP_1(1^\secp)\rangle\\ 
        x\gets \cV_2(v)\\
        e\gets\cE(\tau,x)\\
        y\gets\cP_2(\state,\sigma_\state,x)
        \end{array}
        \right]\\
     &=\Pr\left[\top\gets\cV_2(v,x,y)\wedge y\gets\cE(\tau,x):
       \begin{array}{l} 
        (v,(\state,\sigma_\state);\tau)\gets\langle\cV_1(1^\secp),\cP_1(1^\secp)\rangle\\ 
        x\gets \cV_2(v)\\
        y\gets\cP_2(\state,\sigma_\state,x)
        \end{array}
        \right]\\
        &>\alpha(\secp)-\frac{1}{q(\secp)}\label{union}\\
    \end{align}
    for infinitely many $\secp\in\mathbb{N}$. 
    Here, in \cref{union}, we have used the union bound.
    This contradicts $(\alpha-\frac{1}{q},0)$-soundness of the extractable PoQM.
\end{proof}

\ifnum\anonymous=0
\paragraph{Acknowledgements.}
TM is supported by
JST CREST JPMJCR23I3,
JST Moonshot R\verb|&|D JPMJMS2061-5-1-1, 
JST FOREST, 
MEXT QLEAP, 
the Grant-in Aid for Transformative Research Areas (A) 21H05183,
and 
the Grant-in-Aid for Scientific Research (A) No.22H00522.
\else
\fi

\ifnum\llncs=1
\bibliographystyle{alpha} 
\bibliography{abbrev3,crypto,reference}
\else
\bibliographystyle{alpha} 
\bibliography{abbrev3,crypto,reference}
\fi

\appendix

\ifnum\cameraready=1
\else
\ifnum\submission=1
\newpage
\setcounter{tocdepth}{1}
\tableofcontents
\else
\fi
\fi

\end{document}